\documentclass[cmp, runningheads, envcountsect, envcountsame, envcountreset]{svjour}

\journalname{}

\usepackage{graphicx}
\usepackage{epstopdf}
\usepackage{mathrsfs}
\usepackage{color}
\usepackage{array}
\usepackage[abbrev]{amsrefs}
\usepackage{fancyhdr}
\usepackage{amsmath}
\usepackage{amsfonts,amssymb}
\usepackage{cancel}
\usepackage{bbm}
\usepackage[bookmarks=false,pdfstartview={FitH}]{hyperref}
\usepackage{url}
\usepackage{tikz}
\usepackage{enumitem}
\usetikzlibrary{calc,intersections,through,backgrounds}
\usepackage[FIGTOPCAP]{subfigure}
\usepackage{tikz}

%===================
%Style configuration
%===================

\usetikzlibrary{calc}

%line thickness styles:  ultra thin, very thin, (thin), semithick, thick, very thick, ultra thick
%line dash styles: loosely dashed, densely dashed, loosely dotted, densely dotted
\tikzset{bkgd_line/.style={very thin,color=gray}}
\tikzset{dashed_bkgd_line/.style={thin, dashed,color=gray}}

\tikzset{dimension_line/.style={very thin,color=gray}}
\tikzset{axis_line/.style={thin}}
\tikzset{object_line/.style={thick}}
\tikzset{textlabel/.style={color=black}}

%Journal abbreviations
\DefineJournal{aop}{0091-1798}{Ann. Probab.}{The Annals of Probability}
\DefineJournal{cmp}{0010-3616}{Comm. Math. Phys.}{Communications in Mathematical Physics}
\DefineJournal{ptrf}{0178-8051}{Probab. Theory Related Fields}{Probability Theory \& Related Fields}
\DefineJournal{im}{0020-9910}{Invent. Math.}{Inventiones Mathematicae}
\DefineJournal{prl}{0031-9007}{Phys. Rev. Lett.}{Physical Review Letters}
\DefineJournal{pra}{1050-2947}{Phys. Rev. A}{Physical Review. A}
\DefineJournal{jsp}{0022-4715}{J. Statist. Phys.}{Journal of Statistical Physics}
\DefineJournal{jfa}{0022-1236}{J. Funct. Anal.}{Journal of Functional Analysis}
\DefineJournal{ida}{0219-0257}{Infin. Dimens. Anal. Quantum Probab. Relat. Top.}{Infinite Dimensional Analysis, Quantum Probability and Related Topics}
\DefineJournal{cjm}{0008-414X}{Canad. J. Math.}{Canadian Journal of Mathematics}
\DefineJournal{jems}{1435-9855}{J. Eur. Math. Soc.}{Journal of the European Mathematical Society}
\DefineJournal{aihpps}{0246-0203}{Ann. Inst. Henri Poincar\'{e} Probab. Stat.}{Annales de l'Institut Henri Poincar\'{e} Probabilit\'{e}s et Statistiques}
\DefineJournal{expmath}{1058-6458}{Exp. Math.}{Experimental Mathematics}
\DefineJournal{ma}{0025-5831}{Math. Ann.}{Mathematische Annalen}

\title{Statistical mechanics of Bose gas in Sierpinski carpets}
\author{Joe P. Chen}

\institute{Departments of Physics and Mathematics, Cornell University, Ithaca, NY 14853.\\  \email{joe.p.chen@cornell.edu}}

\date{\today}

\numberwithin{equation}{section}

\begin{document}

%Add name of the expert who has communicated your paper
%\communicated{H. Spohn}

\begin{abstract}

We carry out a mathematically rigorous investigation into the equilibrium thermodynamics of massless and massive bosons confined in generalized Sierpinski carpets (GSCs), a class of infinitely ramified fractals having non-integer Hausdorff dimensions $d_h$. Due to the anomalous walk dimension $d_w>2$ associated with Brownian motion on GSCs, all extensive thermodynamic quantities are shown to scale with the spectral volume with dimension $d_s = 2(d_h/d_w)$ rather than the Hausdorff volume. We prove that for a low-temperature, high-density ideal massive Bose gas in an unbounded GSC, Bose-Einstein condensation occurs if and only if $d_s>2$, or equivalently, if the Brownian motion on the GSC is transient. We also derive explicit expressions for the energy of blackbody radiation in a GSC, as well as the Casimir pressure on the parallel plate of a fractal waveguide modelled after a GSC. Our proofs involve extensive use of the spectral zeta function, obtained via a sharp estimate of the heat kernel trace. We believe that our results can be verified through photonic and cold atomic experiments on fractal structures.

\end{abstract}

\maketitle

\section{Introduction}

This paper is devoted to the study of the thermodynamics of quantum gases in fractal spaces. It was long recognized, by H. Lorentz and H. Weyl~\cite{Weyl1912}, that a deep connection exists between the thermodynamic properties of quantum gases and the underlying spectral geometry. One can probe the asymptotic behavior of elliptic linear differential operators in a given space by measuring the energy or pressure of a quantum gas in the said space. Since then much of this work has been done on Euclidean spaces. Here we wish to address the following question: 

\emph{How does the fractal geometry affect the laws of quantum many-body physics? Conversely, what information about the fractal geometry can we obtain using probes made of quantum particles?}

Below are some specific examples which are familiar to students of quantum and statistical physics, and which have wide ramifications in condensed matter physics and quantum field theory.
\begin{itemize}
\item Does Bose-Einstein condensation of atomic gas occur in non-integer dimension? If so, what dimension is it? And what is the critical density for condensation?
\item What is the analog of the Stefan-Boltzmann law for blackbody radiation when the blackbody itself is a fractal?
\item At zero temperature, what is the Casimir pressure produced by vacuum fluctuations inside a fractal? 
\end{itemize}

We will answer all three questions in the case where the fractal is a \emph{generalized Sierpinski carpet (GSC)}, whose representatives are the standard two-dimensional Sierpinski carpet and the three-dimensional Menger sponge (Fig.~\ref{fig:GSC}). These fractals have connected interior, are highly symmetric, and most of all, are infinitely ramified, which makes the analysis difficult. Conventional analytic tools, such as Fourier transform or spectral decimation, no longer apply. Essentially all the rigorous results known today originate from the study of Brownian motion. On the other hand, GSCs are embeddable in Euclidean space and resemble more realistic fractals found in nature. So we believe that a careful analysis of quantum gases in GSCs is warranted, because it provides us an avenue of attacking quantum many-body problems on general irregular spaces. 

Our exposition is aimed at both physicists and mathematicians. It serves a dual purpose: to highlight the latest developments from the mathematical analysis on fractals, which have just begun to percolate through the physics community; and to illustrate how state-of-the-art potential theoretic results can be applied to answer physically inspired problems.

\textbf{Notations.} We write $C$ and $c$ for positive constants which may change from line to line. If a constant has a specific value, then we will add a numeral subscript to indicate this, \emph{e.g.} $C_1$. Given two real-valued functions $f$ and $g$, we say that $f$ is comparable to $g$ if there exist constants $c, C>0$ such that $c g(x) \leq f(x) \leq C g(x)$, denoted by $f(x) \asymp g(x)$ for short. For any $A \subset \mathbb{R}^d$ and $L>0$, we write $LA=\{x \in \mathbb{R}^d : x/L \in A\}$. Finally, we use $\mathbb{N}_0$ and $\mathbb{N}$ to denote, respectively, the set of natural numbers with $0$ and without $0$.

\subsection{Generalized Sierpinski carpet}

Let $F_0:=[0,1]^d$ be the unit cube in $\mathbb{R}^d$, $d\geq 2$. Fix a \emph{length scale factor} $l_F \in \mathbb{N}$, $l_F \geq 3$, and let $\mathcal{S}_n$ be the collection of closed cubes of side $l_F^{-n}$ with vertices in $l_F^{-n} \mathbb{Z}^d$. For $A \subset \mathbb{R}^d$, let $\mathcal{S}_n(A) = \{S\in \mathcal{S}_n : S \subset A \}$. Denote by $\Psi_S$ the orientation-preserving affine map which maps $F_0$ to $S \in \mathcal{S}_n$.

\begin{figure}
\centering
\subfigure[~]{
\includegraphics[width=0.32\textwidth]{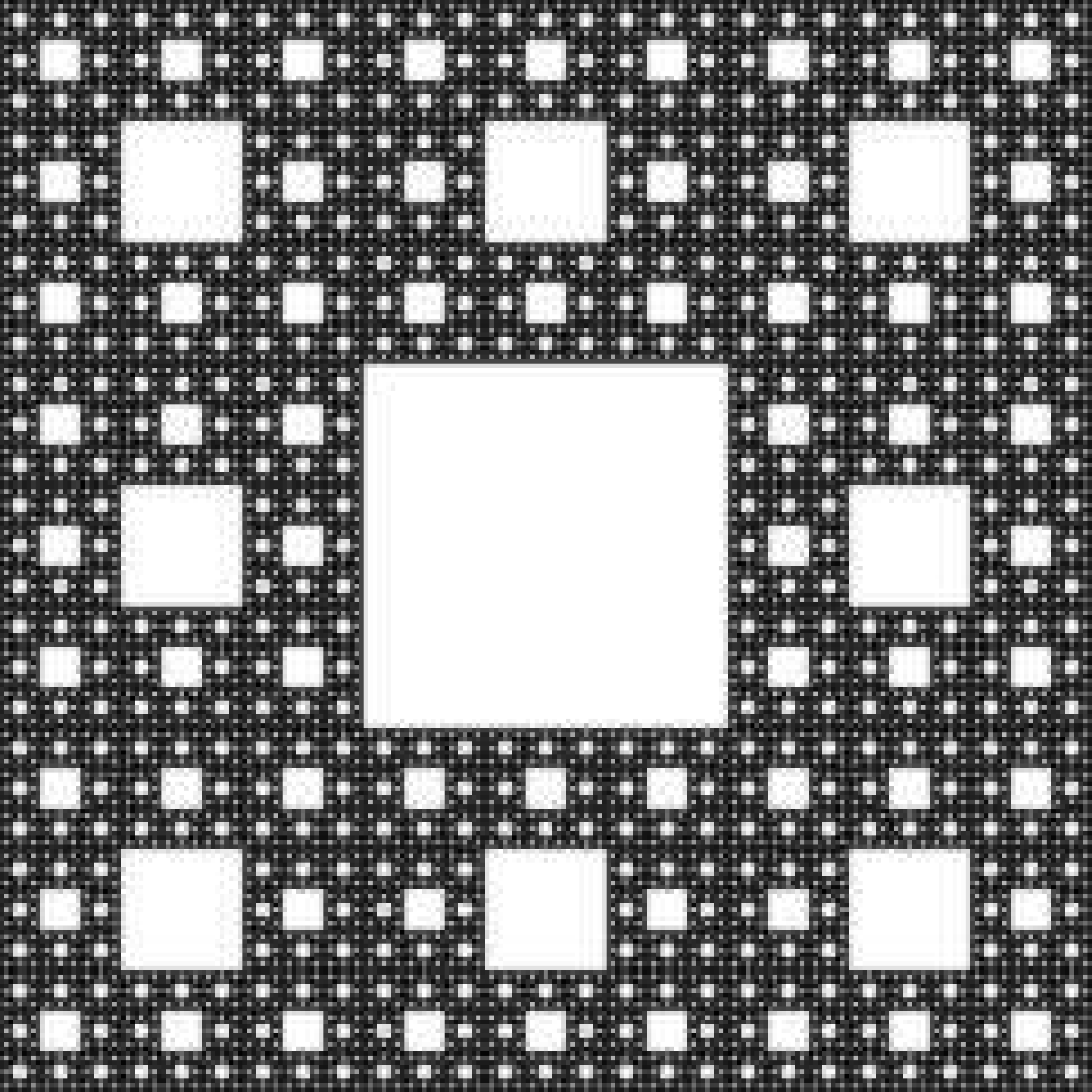}
\label{subfig:SC}
}
\subfigure[~]{
\includegraphics[width=0.3\textwidth]{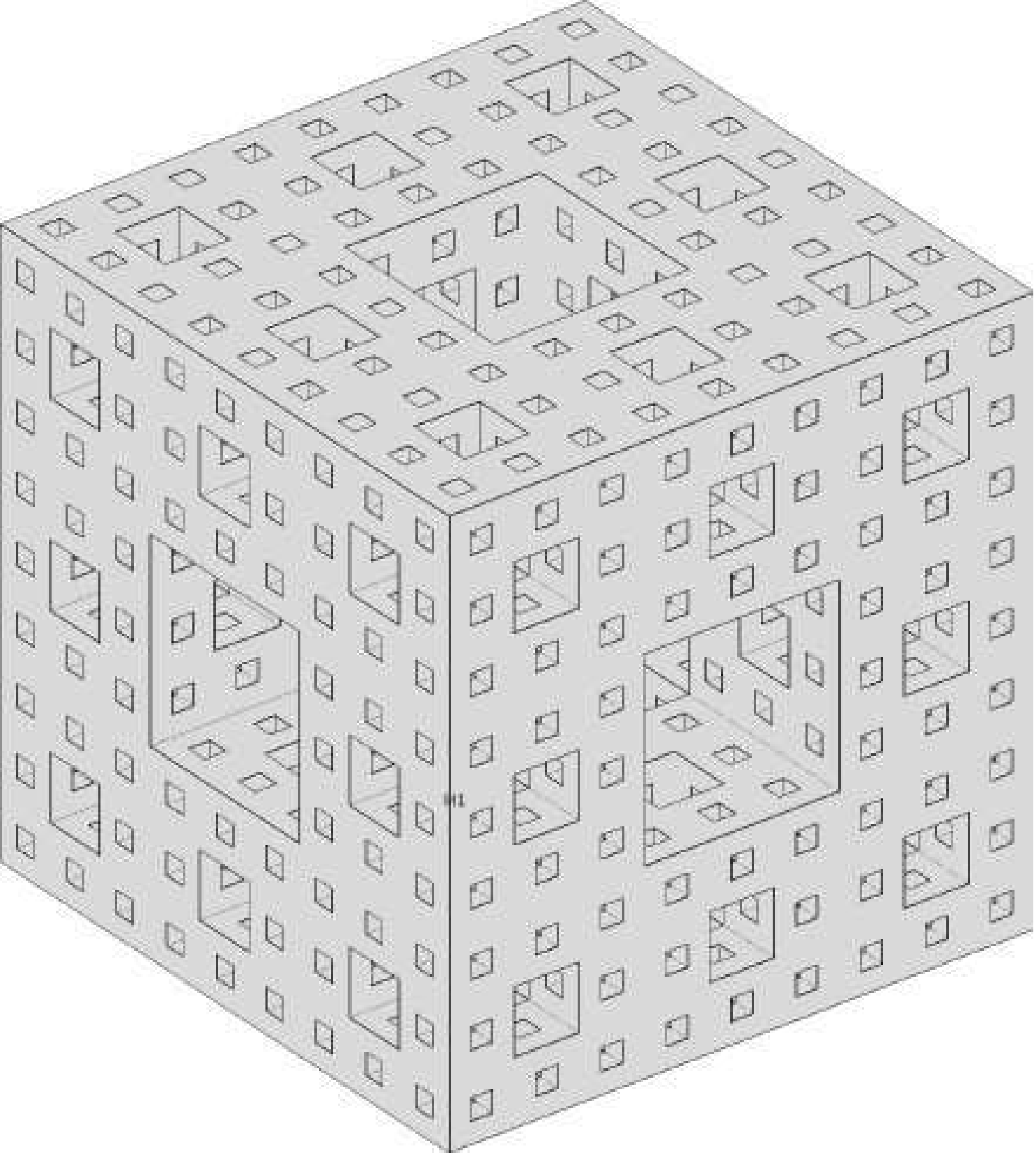}
\label{subfig:MS}
}
\caption{Examples of generalized Sierpinski carpet.~\subref{subfig:SC} The standard Sierpinski carpet $SC(3,1)$, with $l_F=3$ and $m_F=8$.~\subref{subfig:MS} The Menger sponge $MS(3,1)$, with $l_F=3$ and $m_F=20$.}
\label{fig:GSC}
\end{figure}

Introduce a decreasing sequence $\{F_n\}_n$ of closed subsets of $F_0$ as follows. Fix $m_F \in \mathbb{N}$, $1\leq m_F < l_F^d$, and let $F_1$ be the union of $m_F$ distinct elements of $\mathcal{S}_1(F_0)$. In other words, we construct $F_1$ by retaining $m_F$ of the cells of length $l_F^{-1}$, and removing the rest. We will refer to $m_F$ as the \emph{mass scale factor}. Then by iteration we let
$$
F_{n+1} = \bigcup_{S\in \mathcal{S}_n(F_n)} \Psi_S(F_1)  = \bigcup_{S \in \mathcal{S}_1(F_1)}\Psi_S(F_n)~,~n\geq 1.
$$
This iterated function system of contractions $\{\Psi_S\}$ has a unique fixed point $F= \bigcap_{n=0}^\infty F_n$. By standard arguments, the Hausdorff dimension of $F$ is $d_h(F)= \log m_F/\log l_F$. 

\begin{definition}
We say that $F$ is a \emph{generalized Sierpinski carpet (GSC)} if the following conditions on $F_1$ hold:
\begin{enumerate}[label={(H\arabic*)},nolistsep]
\item (Symmetry) $F_1$ is preserved under the isometries of the unit cube. \label{cond:H1}
\item (Connectedness) ${\rm Int}(F_1)$ is connected, and contains a path connecting the hyperplanes $\{x_1=0\}$ and $\{x_1=1\}$. \label{cond:H2}
\item (Non-diagonality) Let $B$ be a cube in $F_0$ which is the union of $2^d$ distinct elements of $\mathcal{S}_1$. Then if ${\rm Int}(F_1\cap B) \neq \emptyset$, it is connected. \label{cond:H3}
\item (Borders included) $F_1$ contains the segment $\{x \in \mathbb{R}^d: x_1\in [0,1],~x_2=\cdots=x_d=0\}$. \label{cond:H4}
\end{enumerate}
\label{def:GSC}
\end{definition}

We will denote by $\partial F$ the boundary of $F$, and by $\partial_o F = \partial([0,1]^d) \cap \partial F$ the outer boundary.

In order to discuss the thermodynamic limit we need to consider unbounded versions of GSC. For each GSC $F$, we call $\tilde F:=\bigcup_{n=0}^\infty l_F^n F_n$ the corresponding \emph{pre-carpet}, and $F_\infty := \bigcup_{n=0}^\infty l_F^n F$ the \emph{unbounded carpet}. Plainly speaking, the former is the infinite "blow-up" of the fractal, whereas the latter has both "blow-up" and "blow-down." Our focus will be on the unbounded carpet, where sharp results can be stated, though we believe many of them carry over to the pre-carpet modulo minor modifications.

\subsection{Statement of the problem} 

Let $F$ be a GSC, $\nu$ be the self-similar measure on $F$, and $\mathcal{H}_1 := L^2(F,\nu)$ be the single-particle Hilbert space. Given a self-adjoint, bounded-below Hamiltonian operator $H:\mathcal{H}_1\to\mathcal{H}_1$, which captures the physics of a certain ideal Bose gas inside $F$, we wish to compute its grand canonical partition function $\Xi_{\beta,\mu}$ at inverse temperature $\beta \in (0,\infty]$ and chemical potential $\mu \in (-\infty,\inf{\rm Spec}(H)]$:
\begin{equation}
\log \Xi_{\beta,\mu}  = -{\rm Tr}_{\mathcal{H}_1}\log\left(1-e^{-\beta(H-\mu)}\right).
\label{eq:GCPF}
\end{equation}
Note that~(\ref{eq:GCPF}) already takes into account the Bose-Einstein statistics satisfied by the particles (see Section~\ref{sec:QSM}).

The Hamiltonian considered in this paper involves the (nonnegative) Laplacian $-\Delta$ on the measure space $(F,\nu)$, constructed via either the Barlow-Bass approach (Brownian motions on the outer approximations) or the Kusuoka-Zhou approach (random walks on the associated graphs). Up to time change, both Laplacians generate the same (and the unique) Brownian motion which is invariant under the isometries of the carpet, as~\cite{BBKT} has shown. We will concentrate on the following quantum gases, working in units where Planck's constant ($\hbar$), and either the speed of light or twice the atomic mass, are $1$. 
\begin{itemize}
\item \textbf{Massive Bose gas} (\emph{e.g.} atoms), where $H =-\Delta$ and $\mu \leq \inf {\rm Spec} (H)$. The associated scalar field $\psi$ satisfies the usual Schr\"{o}dinger equation $i\partial_t \psi = -\Delta\psi$.
\item \textbf{Massless Bose gas} (\emph{e.g.} photons), where $H =\sqrt{-\Delta}$ and $\mu = 0$. Recall that the massless scalar field $\psi$ satisfies the wave equation $-\partial_{tt}\psi = (-\Delta) \psi$. Since the Laplacian $-\Delta$ is a nonnegative self-adjoint operator, it has a unique square root $\sqrt{-\Delta}$, also known as the Dirac operator. Thus we may take the formal square root of the wave equation to obtain a 1st-order-in-time PDE $i\partial_t \psi = \sqrt{-\Delta}\psi$, which is equivalent to a Schr\"{o}dinger equation $i\partial_t \psi = H\psi$ with $H=\sqrt{-\Delta}$. 
\end{itemize}

\subsection{Methodology}

An immediate obstacle to a rigorous thermodynamic calculation on GSCs is the lack of knowledge about the exact spectrum of the Laplacian. In fact, even the eigenvalue counting function (or the integrated density of states in physics parlance) in its optimal form cannot be proved (see Remark~\ref{re:ECF}). Fortunately, many efforts which went into studying the short-time asymptotics of the heat kernel on GSC have produced sharp enough estimates, to the point that we can show that the corresponding spectral zeta function $\zeta_\Delta(s,\gamma)= {\rm Tr}_{\mathcal{H}_1}(-\Delta+ \gamma)^{-s}$ admits a meromorphic extension (Section~\ref{sec:speczeta}). This zeta function is then used effectively in computing the thermodynamic partition function, among other things.

We stress that the zeta function technology used on fractals applies equally well to Euclidean domains, manifolds, and graphs, and \emph{a fortiori} produces the usual results on $\mathbb{R}^d$ or $\mathbb{Z}^d$. Furthermore, it unifies the treatments of massless and massive Bose gases: see~\cite{kirsten2010} for a nice exposition in the $\mathbb{R}^d$ case.

\subsection{Main results}

Below is a summary of the major results on the thermodynamics of massive and massless Bose gases in Sierpinski carpets. All terminology will be explained in subsequent sections.

\begin{itemize}
\item \underline{Bose-Einstein condensation (Section~\ref{sec:atoms}).} For a low-temperature, high-density ideal massive Bose gas in an unbounded GSC $F_\infty$, Bose-Einstein condensation occurs at positive temperature if and only if the \emph{spectral dimension} of the carpet $d_s(F_\infty) > 2$. In this case, whenever the Bose gas density exceeds $\overline\rho_c(\beta) := \frac{C_1(F_\infty)}{(4\pi\beta)^{d_s/2}} \zeta\left(\frac{d_s}{2}\right)$, where $C_1(F_\infty) \geq 1$ is a constant depending on $F_\infty$ (see Proposition~\ref{prop:uppdensity}), then any excess density must condense in the lowest eigenfunction of the Laplacian. See Theorems~\ref{thm:BEC} and~\ref{thm:equivBEC}.

\item \underline{Blackbody radiation (Section~\ref{sec:photon}).} Let $F$ be a GSC. At inverse temperature $\beta$, the energy per unit spectral volume of photons inside $LF$ is 
$$
\mathcal{E}(\beta,L)= \beta^{-(d_s(F)+1)} H_1 \left(-\log\left(\frac{\beta}{2L}\right)\right) + o(1)~~~\mbox{as}~~L\to\infty,
$$
where $H_1$ is a periodic function of period $\frac{1}{2}d_w(F)\log l_F$ given in Proposition~\ref{prop:BBEnergy}, and $d_w(F)$ is the \emph{walk dimension} of the carpet.

\item \underline{Casimir effect (Section~\ref{sec:Casimir}).} Consider a waveguide $\Omega_{a,b}=aF \times [0,b]$ modelled after a 2-dimensional GSC $F$, and impose Dirichlet conditions on the outer boundary. If one places a pair of parallel plates at $aF\times \{0\}$ and $aF \times \{b\}$, respectively, then the zero-temperature Casimir pressure on each plate is given by
$$
P_{\rm Cas}(a,b) = b^{-(d_s(\Omega)+1)} H_2 \left(-\log\left(\frac{b}{a}\right)\right) + o(1)~~~\mbox{as}~~a \to \infty,
$$
where $d_s(\Omega)=d_s(F)+1$ is the spectral dimension of the waveguide $\Omega_{a,b}$, and $H_2$ is a periodic function of period $\frac{1}{2}d_w(F)\log l_F$ given in Proposition~\ref{prop:CasimirPressure}.
\end{itemize}

These are to be compared with the classical "textbook" results in Euclidean space:

\begin{itemize}
\item For a low-temperature, high-density ideal massive Bose gas in $\mathbb{R}^d$ or $\mathbb{Z}^d$, Bose-Einstein condensation occurs at positive temperature if and only if $d\geq 3$. In this case, whenever the Bose gas density exceeds $\rho_c(\beta):= \frac{1}{(4\pi\beta)^{d/2}} \zeta\left(\frac{d}{2}\right)$, any excess density must condense in the lowest eigenfunction of the Laplacian.

\item At inverse temperature $\beta$, the energy per unit Euclidean volume of photons in a cube $[0,L]^d$ is
$$
\mathcal{E}(\beta,L) =\frac{1}{\beta^{d+1}} \frac{d}{\pi^{(d+1)/2}}\Gamma\left(\frac{d+1}{2}\right) \zeta(d+1) + o(1)~~~\mbox{as}~~L\to\infty.
$$
When $d=3$ we recover the familiar "$T^4$ law," $\mathcal{E}(\beta,L) = \pi^2/(30\beta^4) + o(1)$.

\item Consider the rectangular waveguide $[0,a]^2 \times [0,b]$ with Dirichlet boundary conditions, with $[0,a]^2 \times\{0\}$ and $[0,a]^2 \times \{b\}$ being the two parallel plates. Then the zero-temperature Casimir pressure on each plate is given by
$$
P_{\rm Cas}(a,b) = -\frac{\pi^2}{240 b^4} + o(1)~~~\mbox{as}~~a\to\infty.
$$
\end{itemize}

While our results are stated for true fractals, we believe that many of these behaviors can already be seen on finite-level approximations of the fractal, as indicated by our numerical work on the spectrum of the Laplacian (J.P. Chen and R.S. Strichartz, preprint [CS], M. Begu\'{e}, T. Kalloniatis and R.S. Strichartz, arXiv:1201.5136 [BKS]). It would therefore be edifying if experimentalists can take on the challenge of constructing fractal-based quantum systems using, \emph{e.g.} metamaterials, optical lattices, or superconducting qubits (for instance the proposal by~\cite{Tsomokos}), and testing our results.

The paper is organized as follows. In Section~\ref{sec:SC} we recapitulate several recently established results from the analysis on GSCs, namely the uniqueness of Brownian motion and the estimate of the heat kernel trace. In Section~\ref{sec:speczeta} we introduce the spectral zeta function on GSCs, show that it admits a meromorphic extension to $\mathbb{C}$, and give its poles and residues. After recalling the rudiments of quantum statistical mechanics in Section~\ref{sec:QSM}, we then present our thermodynamic computations of the massive and massless Bose gases in Sections~\ref{sec:atoms},~\ref{sec:photon} and~\ref{sec:Casimir}. We conclude with Section~\ref{sec:graph} by discussing a special case of interacting Bose gas on Sierpinski carpet graphs, and offering some open problems.

\section{Established results on Sierpinski carpets} \label{sec:SC}

\subsection{Existence and uniqueness of Brownian motion}

In this subsection we give a quick account of how the Laplacian on GSC is constructed. The precise details are highly nontrivial and involve various techniques in potential theory (see \emph{e.g.}~\cite{FOT} for background): we refer the reader to the original literature. For the purposes of this paper, it is enough to recognize the following mathematical facts. A Laplacian $\Delta$ is in 1-to-1 correspondence with a nonnegative symmetric Markovian quadratic form, called the \emph{Dirichlet form} $\mathcal{E}(u,v)=\int u(-\Delta v)$, on an appropriate Banach space. Moreover, the said Laplacian generates a Markov process, which in our setting is either a simple random walk (on a graph) or a Brownian motion (on a subset of $\mathbb{R}^d$).

\textbf{The Barlow-Bass construction}~\cite{BB89,BB90localtimes,BB92,BB99}.
Let $W^n_t$ be a reflecting Brownian motion on the $n$th approximating domain $F_n$ of $F$. In order to produce a Brownian motion on the fractal $F$, one has to use the self-similarity of the carpet and take a suitable scaling limit. Barlow and Bass proved that there exists a family of time-scale factors $\{a_n\}_n$ satisfying 
\begin{equation}
c_1 \left(\frac{\rho_F m_F}{l_F^2}\right)^n \leq a_n \leq c_2 \left(\frac{\rho_F m_F}{l_F^2}\right)^n,
\label{eq:rhoFest}
\end{equation}
for some $c_1, c_2, \rho_F \in (0,\infty)$ independent of $n$, such that the sequence of sped-up Brownian motions $X^n_t=W^n_{a_n t}$ on $F_n$ has a subsequential limit. (The role of $\rho_F$ will be discussed in the next subsection.) Any such limit process $X_t$, which respects the symmetry of $F$ (henceforth referred to as $F$-symmetric), is called a Brownian motion on $F$. We denote its infinitesimal generator by $\mathcal{L}_{BB}$, the Barlow-Bass Laplacian (with Neumann conditions on $\partial_o F$).

One can also rephrase the above result in terms of Dirichlet forms. Let $\nu_n(dx)=(l_F^d/m_F)^n dx$ be the Borel probability measure on $F_n$ which assigns equal weight to each $n$-th level cell of $F_n$. (Note that $\nu_n$ converges weakly to a probability measure $\nu$, which is a constant multiple of the $d_h(F)$-dimensional Hausdorff measure, on $F$.) We introduce the Dirichlet energy on $(F_n, \nu_n)$
\begin{equation}
\mathcal{E}_n(u) = \int_{F_n} \left|\nabla u(x)\right|^2 \nu_n(dx)
\end{equation}
for all $u \in L^2(F_n,\nu_n)$ such that $\mathcal{E}_n(u)<\infty$, and obtain the corresponding Dirichlet form by polarization: $\mathcal{E}_n(u,v) = \frac{1}{4}\left[\mathcal{E}_n(u+v) - \mathcal{E}_n(u-v)\right]$. Then the rescaled Dirichlet energies $\mathscr{E}_n(u) := a_n \mathcal{E}_n(u)$ converge in subsequence to
\begin{equation}
\mathscr{E}_{BB}(u) = \sup_{t>0} \frac{1}{t}\langle (1-T_t) u,u \rangle_{L^2(F,\nu)},
\label{eq:limitingDE}
\end{equation}
where $T_t = e^{t \mathcal{L}_{BB}}$ is the semigroup associated with $X_t$. The corresponding Dirichlet form is strongly local, regular, conservative, $F$-symmetric, and self-similar:
\begin{equation}
\mathscr{E}_{BB}(u,v) = \sum_{S \in S_1(F_1)} \rho_F \mathscr{E}_{BB}(u\circ \Psi_S,v\circ \Psi_S).
\label{eq:SSDE}
\end{equation}

\begin{figure}
\centering
\includegraphics[width=0.35\textwidth]{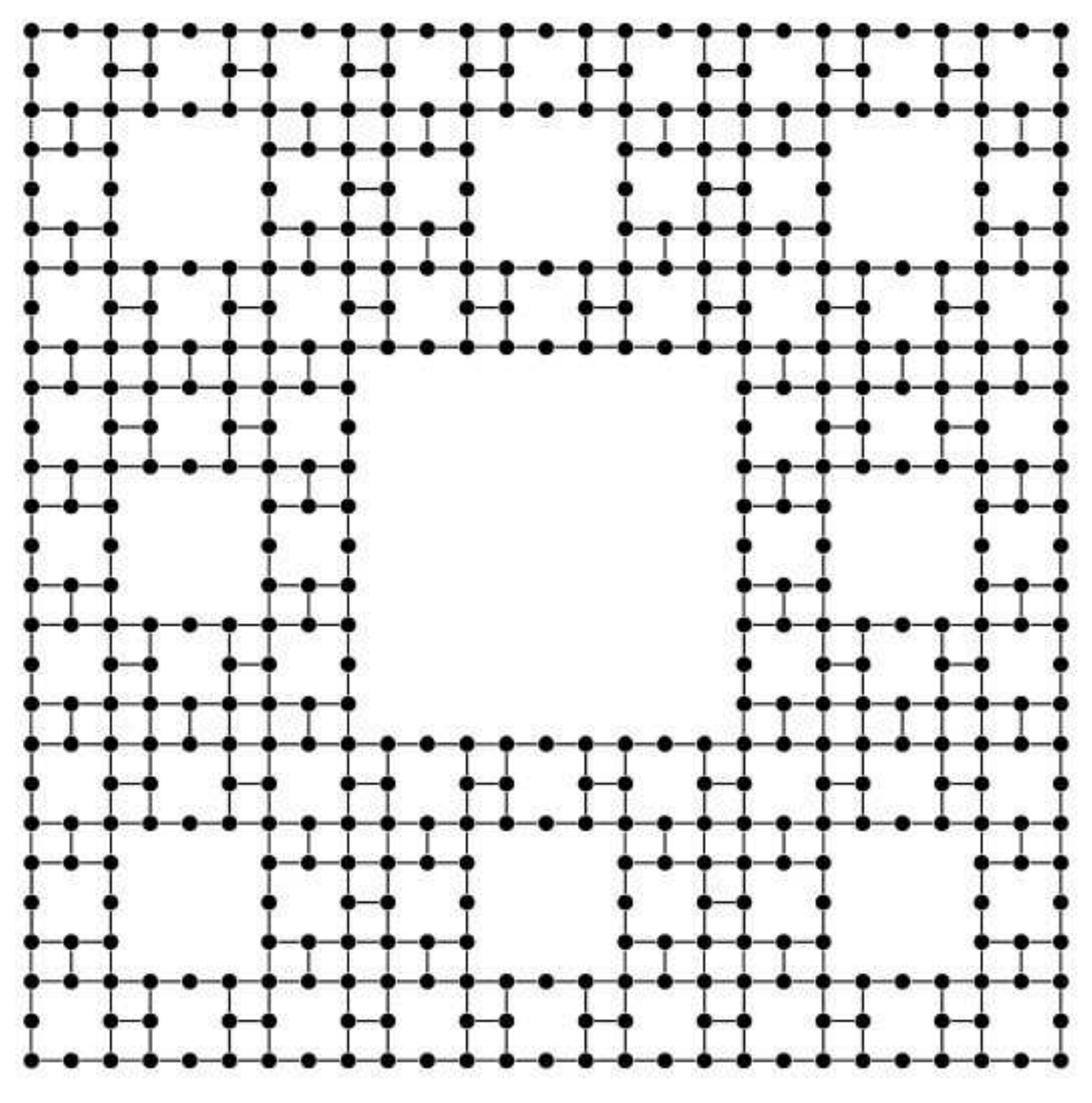}
\caption{The level-3 graph approximation $G_3$ of the standard Sierpinski carpet $SC(3,1)$.}
\label{fig:SC3graph}
\end{figure}

\textbf{The Kusuoka-Zhou construction}~\cite{KusuokaZhou}. Given $F_n$, let $G_n=(V_n,E_n)$ be the graph whose vertices lie at the centers of the cells $S\in \mathcal{S}_n(F_n)$, and whose edges connect vertices in nearest neighboring cells: see Fig.~\ref{fig:SC3graph}. Define the Dirichlet energy to be the usual graph energy
$$\mathcal{E}_n(u) =\sum_{\langle xy \rangle \in E_n} [u(x)-u(y)]^2$$
for all continuous functions $u, v \in C(G_n;\mathbb{R})$, and obtain the Dirichlet form via polarization. By checking a series of geometric conditions for which Poincar\'{e} and Harnack inequalities hold, Kusuoka and Zhou were able to prove that $\{\rho_F^n \mathcal{E}_n\}_n$ converges in subsequence to a strongly local, regular, conservative, $F$-symmetric Dirichlet form $\mathscr{E}_{KZ}$ satisfying the self-similar identity
\begin{equation}
\mathscr{E}_{KZ}(u,v) = \sum_{S \in S_1(F_1)} \rho_F \mathscr{E}_{KZ}(u\circ \Psi_S,v\circ \Psi_S).
\end{equation}
The corresponding Brownian motion has infinitesimal generator $\mathcal{L}_{KZ}$, which we call the Kusuoka-Zhou Laplacian.

A question which lingered for almost two decades was whether the two constructions yield the same limiting Laplacian on $F$. This was settled definitively by Barlow, Bass, Kumagai \& Teplyaev.

\begin{theorem}[{\cite{BBKT}*{Theorem 1.2}}]
Let $F$ be a GSC equipped with the self-similar measure $\nu$. Up to scalar multiples, the set of non-zero, local, regular, conservative, and $F$-symmetric Dirichlet forms on $(F,\nu)$ contains at most one element.
\label{thm:uniq}
\end{theorem}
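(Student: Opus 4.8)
The plan is to show that any two admissible forms are first \emph{comparable} and then, by a rigidity argument, \emph{proportional}. Fix two non-zero, local, regular, conservative, $F$-symmetric Dirichlet forms $(\mathcal{E}^{(1)},\mathcal{F}^{(1)})$, $(\mathcal{E}^{(2)},\mathcal{F}^{(2)})$ on $L^2(F,\nu)$. The class of such forms is convex after normalization (each structural property survives $\theta\mathcal{E}^{(1)}+(1-\theta)\mathcal{E}^{(2)}$) and carries two extra structures — the cube-isometry group fixing $F$, and, via $F=\bigcup_{S\in\mathcal{S}_1(F_1)}\Psi_S(F)$, the linear renormalization map $\mathcal{R}\mathcal{E}(u)=\rho\sum_{S}\mathcal{E}(u\circ\Psi_S)$ — and it is these that will ultimately collapse the class to a single ray.

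First I would run the Barlow--Bass potential theory on a \emph{single} admissible form $\mathcal{E}$. Using only the geometry of $F$ — the symmetry~\ref{cond:H1}, the connectedness~\ref{cond:H2} (which supplies enough crossing paths in each cell and, through~\ref{cond:H4}, along its borders), and the corner-moving (``knight-move'') technique that drags a Harnack chain around the re-entrant corners of a cell — one establishes the elliptic Harnack inequality for $\mathcal{E}$-harmonic functions; volume doubling for $\nu$ is automatic. By the Barlow--Bass / Grigor'yan--Telcs equivalences this upgrades to a jointly continuous transition density, sub-Gaussian heat-kernel bounds with some walk dimension $d_w(\mathcal{E})\ge 2$, a matching Poincar\'{e} inequality, and effective-resistance estimates $R_n(\mathcal{E})\asymp\rho(\mathcal{E})^n$ across an $n$-cell, with $\rho(\mathcal{E})$ determined by $d_w(\mathcal{E})$ and equal to the scale factor $\rho_F$ appearing in~\eqref{eq:rhoFest} once $\rho_F$ is known to be form-independent. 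At this stage $d_w$ and $\rho$ might still depend on $\mathcal{E}$.

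The core of the proof — and the step I expect to be the main obstacle — is to compare $\mathcal{E}^{(1)}$ and $\mathcal{E}^{(2)}$ with constants that do \emph{not} deteriorate as the scale is refined. The strategy is an induction on $n$ comparing the effective resistance across an $n$-cell for the two forms: cut the $n$-cell into its $m_F$ sub-cells, use Rayleigh (Markovian) monotonicity to sandwich the series/parallel combinations by the sub-cell resistances, and use the cube symmetry to pin down the off-diagonal couplings between adjacent sub-cells. The genuinely hard configuration is exactly the one isolated by the non-diagonality hypothesis~\ref{cond:H3}: when two sub-cells abut only along a lower-dimensional ``corner,'' one must exclude the possibility that some admissible form treats that corner as a bottleneck, or dually as a short, and~\ref{cond:H3} — forcing the relevant interface interior, when non-empty, to be connected — is precisely what keeps the inductive constants bounded. (This corner analysis is the point that kept the theorem open for close to two decades.) Carrying the induction through yields $C^{-1}R_n(\mathcal{E}^{(2)})\le R_n(\mathcal{E}^{(1)})\le C\,R_n(\mathcal{E}^{(2)})$ uniformly in $n$, hence $\rho(\mathcal{E}^{(1)})=\rho(\mathcal{E}^{(2)})$ and $d_w(\mathcal{E}^{(1)})=d_w(\mathcal{E}^{(2)})$; transporting the bound from resistances to energy measures cell by cell then gives the global comparability $C^{-1}\mathcal{E}^{(2)}(u,u)\le\mathcal{E}^{(1)}(u,u)\le C\,\mathcal{E}^{(2)}(u,u)$, so in particular $\mathcal{F}^{(1)}=\mathcal{F}^{(2)}=:\mathcal{F}$.

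Finally I would bootstrap comparability to exact proportionality. One route is a rigidity argument on energy measures: comparability makes $\Gamma^{(1)}(u)$ and $\Gamma^{(2)}(u)$ mutually absolutely continuous, and one shows the density $\mathrm{d}\Gamma^{(1)}(u)/\mathrm{d}\Gamma^{(2)}(u)$ equals a single constant at almost every point and for every $u$ — it must be invariant under the cube-isometry group, so it cannot favor a coordinate direction, and compatible with the self-similar cell decomposition, and the carpet's ``irreducibility'' then pins it down; integrating gives $\mathcal{E}^{(1)}=c\,\mathcal{E}^{(2)}$ on $\mathcal{F}$. Equivalently, the comparability bounds make the cone of admissible forms of finite projective diameter and $\mathcal{R}$-invariant, so Birkhoff's theorem renders $\mathcal{R}$ a strict contraction for Hilbert's projective metric and forces a unique invariant ray, onto which every admissible form must collapse. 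Besides the uniform corner estimate via~\ref{cond:H3}, the delicate part of this plan is this last collapse — verifying that the rigidity/contraction argument reaches \emph{every} $F$-symmetric form and not merely the manifestly self-similar ones; the rest is an application of the now-standard Harnack and heat-kernel machinery recalled in this section.
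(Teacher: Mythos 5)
This theorem is not proved in the paper at all: it is quoted verbatim from Barlow--Bass--Kumagai--Teplyaev \cite{BBKT}, so the only fair comparison is with their argument. Your first half is in the right spirit: one does run the Barlow--Bass corner-move/Harnack machinery on an \emph{arbitrary} element of the class, obtaining resistance and heat-kernel estimates with constants depending only on the carpet, and this yields that any two admissible forms are comparable, $C^{-1}\mathcal{E}^{(2)}\le\mathcal{E}^{(1)}\le C\,\mathcal{E}^{(2)}$ with a \emph{universal} $C=C(F)$, and in particular equal domains. Your framing of this as a direct resistance induction between the two forms is only a restatement of what must be proved, but as a sketch it points at the right ingredients (symmetry, connectedness, the non-diagonality condition controlling corner interfaces).

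The genuine gap is the last step, comparability $\Rightarrow$ proportionality, and neither of your two routes works. The energy-measure rigidity argument fails because the symmetry group of the carpet is finite, so invariance of the density $d\Gamma^{(1)}(u)/d\Gamma^{(2)}(u)$ under cube isometries cannot force it to be constant, and you have no self-similarity to exploit: the class in the theorem consists of $F$-symmetric forms that are \emph{not} assumed self-similar, so ``compatibility with the cell decomposition'' is not available. The Birkhoff/Hilbert-projective-metric route has the same defect and worse: the renormalization map acts on an infinite-dimensional cone of trace forms where finite projective diameter (hence strict contraction) is exactly what nobody knows how to prove for infinitely ramified carpets, and even a unique invariant ray would only identify the self-similar fixed points, saying nothing about a general element of the class. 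The actual mechanism in \cite{BBKT} is different and is the step you are missing: given two admissible forms $\mathcal{E}^{(1)},\mathcal{E}^{(2)}$, let $\lambda$ be the largest constant with $\mathcal{E}^{(1)}-\lambda\mathcal{E}^{(2)}\ge 0$ (finite and positive by comparability). Strong locality, via the chain rule for energy measures, shows that this difference, if not identically zero, is again a non-zero, local, regular, conservative, $F$-symmetric Dirichlet form, i.e.\ an element of the same class; but by optimality of $\lambda$ it cannot be bounded below by any positive multiple of $\mathcal{E}^{(2)}$, contradicting the universal comparability constant $C(F)$ valid for \emph{all} pairs in the class. Hence $\mathcal{E}^{(1)}=\lambda\mathcal{E}^{(2)}$. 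Without this (or some substitute) your outline establishes comparability but not uniqueness.
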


\begin{figure}
\subfigure{
\includegraphics[scale=0.4]{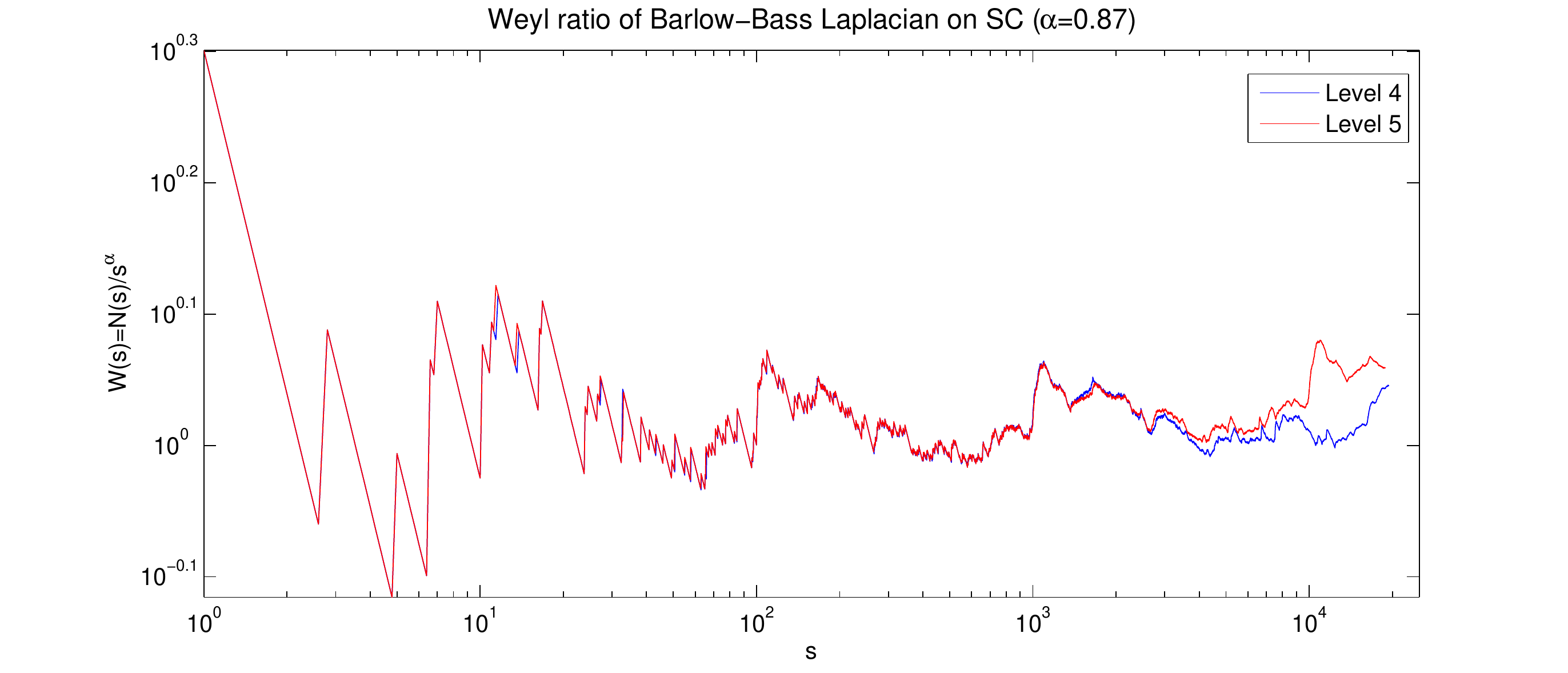}
\label{subfig:BBSC}
}
\subfigure{
\includegraphics[scale=0.4]{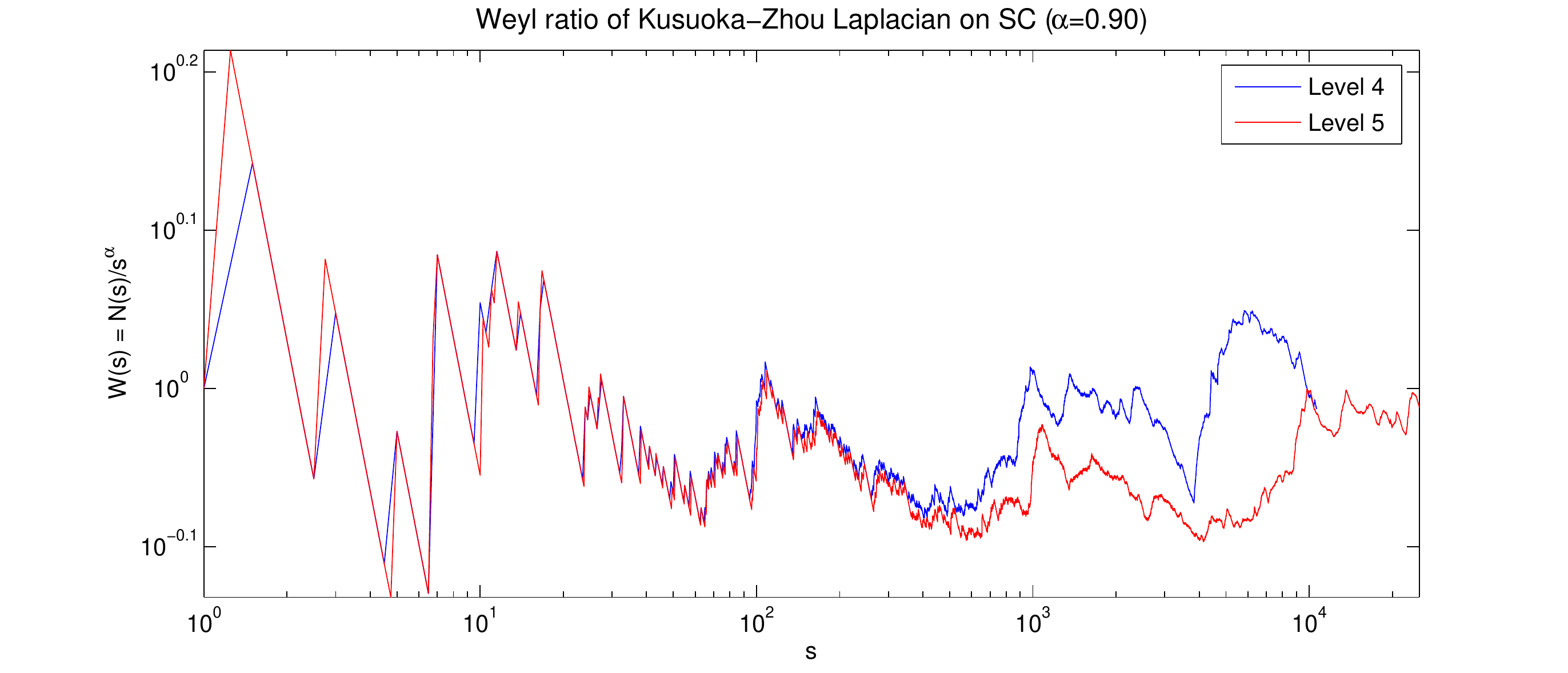}
\label{subfig:KZSC}
}
\caption{Numerical Weyl ratios $W(s) = s^{-d_s/2}N(s)$ associated with~\subref{subfig:BBSC} the Barlow-Bass Laplacian~[CS] and~\subref{subfig:KZSC} the Kusuoka-Zhou Laplacian~[BKS] on approximations $F_4$ and $F_5$ of the standard two-dimensional Sierpinski carpet $F=SC(3,1)$, where $N(s):=\#\{\lambda \in {\rm Spec}(-\Delta): \lambda/\lambda_1 <s\}$ is the eigenvalue counting function of the Neumann Laplacian, normalized by the lowest nonzero eigenvalue $\lambda_1$.}
\label{fig:SCWeyl}
\end{figure}

An equivalent statement to Theorem~\ref{thm:uniq} is that up to deterministic time change, $\mathcal{L}_{BB}$ and $\mathcal{L}_{KZ}$ both generate the unique $F$-symmetric Brownian motion. Henceforth we will denote this unique Laplacian by $\Delta$. 

It should be noted that through extensive numerical computations~[CS,BKS], we can demonstrate that both versions of the Laplacian on \emph{finite} approximations $F_n$ of $F$ coincide at the bottom of the spectrum, as Fig.~\ref{fig:SCWeyl} shows. Notice that upon removing the power-law growth from the eigenvalue counting function, the remainder exhibits logarithmically periodic oscillation, which indicates the fractal nature of the geometry (see Section~\ref{subsec:HKT} for more details). For large enough $n$ (typically $n \geq 3$), the first $(n-2)$ segments of the two spectra, and to a lesser extent the $(n-1)$th segment, agree very well. Then deviation creeps in the $n$th segment: the Barlow-Bass spectrum becomes more characteristic of the spectrum on $[0,1]^d$, while the Kusuoka-Zhou spectrum is truncated due to the finite cardinality $m_F^n$ of the graph.

So despite our inability to state precise results rigorously on the finite approximations $F_n$, we believe strongly that all the asymptotic thermodynamic results presented in this paper, which are stated for $F$ but invariably tied to the bottom of the spectrum, should be visible on $F_n$ for sufficiently large $n$, beginning perhaps from $n=3$.

\subsection{Resistance and heat kernel estimates}

The aforementioned coefficient $\rho_F$ in~(\ref{eq:rhoFest}) is called the \emph{resistance scale factor} of the carpet $F$. When viewing each $F_n$ as an electrical network, $\rho_F$ gives the renormalization factor relating the resistance of $F_n$ to that of $F_{n+1}$. To this date there is no known closed form expression for $\rho_F$. The best known bound, obtained by cutting and shorting resistances, is~\cite{BB99}*{Proposition 5.1}
\begin{equation}
\frac{l_F^2}{m_F} \leq \rho_F \leq 2^{1-d} l_F.
\label{eq:resistanceest}
\end{equation}

The connection between resistance and Brownian motion is as follows. Let $d_w(F) = \log (\rho_F m_F)/\log l_F$ be the \textbf{walk dimension} of the carpet: this is the time-to-space scaling exponent for Brownian motion. For instance, if $\tau(A)=\inf\{t>0: X_t \notin A\}$ is the exit time of a Brownian motion $X_t$ from the set $A$, then $d_w$ is defined through $\mathbb{E}^x[\tau(B(x,r))] \asymp r^{d_w}$, where $\mathbb{E}^x$ is the expectation with respect to the law of Brownian motions started at $x$.

A crucial point to make here is that while manifolds and Euclidean domains have $d_w=2$, fractals have $d_w>2$, as can be seen through \emph{e.g.}~(\ref{eq:resistanceest}). In a nutshell, this means that Brownian motion proceeds "more slowly" on fractals, due to the presence of obstacles at all length scales. One way to see this explicitly is through the \textbf{heat kernel} $p_t(x,y)$, which is the integral kernel of the Markov semigroup $T_t=e^{t\Delta}$, or equivalently, the transition density of Brownian motion. On fractals, regarded as a metric measure space $(F,d,\nu)$, the (short-time) heat kernel obeys sub-Gaussian rather than Gaussian bounds~\cite{BB92}:
\begin{equation}
p_t(x,y) \asymp C t^{-d_h/d_w} \exp\left(-c\left(\frac{d(x,y)^{d_w}}{t}\right)^{\frac{1}{d_w-1}}\right),\quad t \in (0,1), ~x,y \in F.
\label{eq:HKE}
\end{equation}
The leading power-law term is $t^{-d_h/d_w}$, as opposed to the $t^{-d/2}$ on $d$-dimensional Euclidean domains. This observation leads us to introduce the \textbf{spectral dimension} of the carpet $d_s(F) := 2d_h(F)/d_w(F) =2\log m_F /\log(\rho_F m_F)$. Note that for any GSC in $\mathbb{R}^d$, $1 \leq d_s(F) < d_h(F) < d$. 

The spectral dimension is a physically meaningful dimension because it is tied to the macroscopic behavior of Brownian motion. Recall that a Markov process on an unbounded state space is said to be \emph{recurrent} if, with probability one, the process returns to the origin infinitely many times. Otherwise it is said to be \emph{transient}, \emph{i.e.,} with positive probability the process leaves for infinity. Roughly speaking, on a homogeneous space, $d_s=2$ is the dimensional threshold above which Brownian motion is transient, and below which it is recurrent. Brownian motion at $d_s=2$ is typically also recurrent, but its behavior is more subtle than the $d_s<2$ case.

Indeed, for Brownian motion in an unbounded GSC $F_\infty$, Barlow and Bass showed that the process is transient (resp. recurrent) if $d_s(F) > 2$ or $\rho_F < 1$ (resp. $d_s(F) \leq 2$ or $\rho_F \geq 1$)~\cite{BB99}*{Theorem 8.1}. The same dichotomy holds on the pre-carpet $\tilde F$~\cite{BB99}*{Theorem 8.7}. This distinction has important consequences for the thermodynamics of the massive Bose gas, as we will explain later.

To make comparisons, we mention in passing that all post-critically finite (pcf) fractals (for definition see~\cite{KigamiBook, StrichartzBook}), \emph{e.g.} any $d$-dimensional Sierpinski gasket with $d\geq 2$, have $\rho_F >1$ or $d_s<2$, and support recurrent Brownian motion. Meanwhile, as an example of a random fractal, the incipient infinite cluster (IIC) at criticality of bond percolation on $\mathbb{Z}^d$ with $d\geq 19$, or of spread-out percolation with $d > 6$, has $d_s=4/3$, which was proved recently by~\cite{KozmaNachmias}.

\subsection{Sharp estimates of the heat kernel trace} \label{subsec:HKT}

In what follows we call $K(t):={\rm Tr}(e^{t\Delta})=\int_F p_t(x,x) d\nu(x)$ the \emph{heat kernel trace}. Since the Laplacian on GSCs admits an eigenfunction expansion, one can write $K(t) = \sum_{j=0}^\infty e^{-t\lambda_j}$, where $\{\lambda_j\}$ are the eigenvalues of $-\Delta$. Equivalently, $K(t)$ represents the probability that Brownian motion started at any point $x$ returns to the said point $x$ at time $t$.

Using~(\ref{eq:HKE}) we can already deduce that on GSC, $K(t) \asymp t^{-d_h/d_w}$ for small $t$. This estimate is typically enough on simple spaces, but on deterministic self-similar fractals there is logarithmically periodic modulation on top of the power-law dependence in $t$, which is attributed to the discrete scale invariance of the space. Indeed, using renewal theorem type arguments,  Hambly and Kajino separately proved the following short-time asymptotics of the heat kernel trace.

\begin{theorem}[{\cite{HamblySpec,KajinoSpec}}]
Let $F$ be a GSC, $\nu$ be the self-similar measure on $F$, and $\Delta$ be the Laplacian on $(F,\nu)$ associated with either Dirichlet or Neumann condition on $\partial_o F$. Then
\begin{equation}K(t) = t^{-d_h/d_w} \left[G(-\log t) + o(1)\right]\quad \mbox{as}~t\downarrow 0,\label{eq:HKT1}\end{equation}
where $G$ is $\log (\rho_F m_F)$-periodic, and is bounded away from $0$ and $\infty$.
\end{theorem}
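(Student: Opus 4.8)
The plan is to show that the normalized heat trace $\theta(t):=t^{d_h/d_w}K(t)=t^{d_s/2}K(t)$ converges, as $t\downarrow 0$, to a $\log(\rho_F m_F)$-periodic function of $-\log t$, by combining Dirichlet--Neumann bracketing with a (lattice) renewal theorem. First note that~(\ref{eq:HKE}) on the diagonal gives $p_t(x,x)\asymp t^{-d_h/d_w}$ uniformly in $x\in F$ for $t\in(0,1)$; integrating against $\nu$ and using $\nu(F)<\infty$ yields $c\le\theta(t)\le C$ for all small $t$. So the crude bound $K(t)\asymp t^{-d_h/d_w}$ is immediate, and the whole point is the existence of the oscillatory limit $G$; once that is established, the two-sided bound on $\theta$ passes to $G$, which is then bounded away from $0$ and $\infty$. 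This refinement cannot be extracted from a Tauberian theorem applied to the eigenvalue counting function, because the Karamata/Tauberian passage erases precisely the logarithmically periodic fluctuation we want; the renewal argument must be run on $K$ directly.

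Second, I would extract a renewal equation from the self-similarity. Decompose $F$ into its $m_F$ level-$1$ cells $\Psi_S(F)$, $S\in\mathcal{S}_1(F_1)$. Imposing Dirichlet (resp.\ Neumann) conditions along the internal cell walls decouples $F$ into $m_F$ isometric affine copies of $F$ contracted by $l_F^{-1}$; by the self-similar identity~(\ref{eq:SSDE}) together with the time/space scaling $\rho_F m_F=l_F^{d_w}$, the Laplacian on such a cell is the Laplacian on $F$ with time rescaled by $\rho_F m_F$. Domain monotonicity of the heat trace (Dirichlet conditions raise all eigenvalues, Neumann conditions lower them and introduce extra zero modes) gives $K^D_1(t)\le K(t)\le K^N_1(t)$, where $K^{D/N}_1(t)$ equals $m_F$ times the corresponding scaled trace $K^{D/N}_F(\rho_F m_F\,t)$ plus contributions from the cells that meet $\partial_o F$. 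Substituting $t=e^{-x}$ and normalizing by $e^{-x d_s/2}$ --- calibrated exactly so that $m_F(\rho_F m_F)^{-d_s/2}=1$ --- turns this into a renewal equation of the form $u(x)=u\bigl(x-\log(\rho_F m_F)\bigr)+z(x)$ whose inter-arrival law is the single atom at $\log(\rho_F m_F)$, i.e.\ a \emph{lattice} renewal equation with span $\log(\rho_F m_F)$, the single scaling ratio of the carpet being responsible for the lattice (rather than spread-out) structure. Iterating, $u(x)=\sum_{n\ge 0}z\bigl(x-n\log(\rho_F m_F)\bigr)$.

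The crux is to show the defect $z$ is negligible: summable along the lattice and tending to $0$, so that $u(x)$ converges as $x\to\infty$ to the $\log(\rho_F m_F)$-periodic function $G(x)=\sum_{k\in\mathbb{Z}}z\bigl(x+k\log(\rho_F m_F)\bigr)$. The defect is governed by the heat trace localized near the internal cell walls (where the Dirichlet and Neumann prescriptions disagree) and by the boundary cells; both are controlled using the off-diagonal sub-Gaussian estimate~(\ref{eq:HKE}) and the fact that $\partial F$ and $\partial_o F$ have strictly smaller Hausdorff, hence spectral, dimension than $F$, so that $\int_{\{x\,:\,d(x,\partial F)\le r\}}p_t(x,x)\,d\nu(x)=o(t^{-d_h/d_w})$. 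I expect this to be the main obstacle: on an infinitely ramified carpet the boundary is itself a nontrivial fractal, and one must (i) prove near-boundary heat-trace bounds that are quantitative and uniform across scales, and (ii) keep track of the level-$1$ cells meeting $\partial_o F$, which carry \emph{mixed} Dirichlet/Neumann data and so are neither the pure-Dirichlet nor the pure-Neumann copies of $F$ appearing in the clean part of the decomposition --- this bookkeeping has to be carried out consistently for both brackets and for both choices of boundary condition on $\partial_o F$.

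Finally, with $z$ shown to vanish, the lattice renewal theorem (equivalently, the telescoping above combined with dominated convergence) gives $\theta(t)=G(-\log t)+o(1)$ as $t\downarrow 0$ with $G$ of period $\log(\rho_F m_F)$, and periodicity of $G$ together with the bounds $c\le G\le C$ from the first paragraph completes the proof. Since $d_h/d_w$ and $\log(\rho_F m_F)$ are the same for the Barlow--Bass and Kusuoka--Zhou Laplacians (Theorem~\ref{thm:uniq}), the statement is construction-independent.
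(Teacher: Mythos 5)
First, a point of comparison: the paper itself does not prove this statement --- it is quoted from \cite{HamblySpec} and \cite{KajinoSpec}, whose proofs are, as the paper says, ``renewal theorem type arguments.'' Your outline (decouple along the level-one cell walls by Dirichlet--Neumann bracketing, use the scaling of the form under $\Psi_S$ so that the generator rescales by $\rho_F m_F$, calibrate with $m_F(\rho_F m_F)^{-d_s/2}=1$, and sum a lattice renewal equation of span $\log(\rho_F m_F)$) is exactly the strategy of that literature, and the elementary parts of your sketch --- the two-sided bound $c\le t^{d_s/2}K(t)\le C$ from (\ref{eq:HKE}), the telescoping in place of Feller's theorem, the construction-independence via Theorem~\ref{thm:uniq} --- are fine.

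The genuine gap is that the entire difficulty of the theorem sits in the step you explicitly defer. You need the defect $z$ --- equivalently the difference between the traces with Neumann and with Dirichlet conditions imposed on the internal cell walls --- to be $o(t^{-d_h/d_w})$ with quantitative, scale-uniform decay (in your variables, decay in $x$ uniform over offsets in one period), both to make the lattice sums converge with uniformly small tails (otherwise you only get convergence along each coset $x_0+\mathbb{Z}\log(\rho_F m_F)$, not the stated uniform $o(1)$ as $t\downarrow 0$) and to force the upper and lower brackets to the \emph{same} periodic $G$. On a p.c.f.\ fractal the walls are finite point sets and this is a finite-rank perturbation; on a GSC the walls are fractal sets of positive capacity, the decoupled cells carry mixed Dirichlet/Neumann data (for $SC(3,1)$ every level-one cell meets $\partial_o F$), and the estimate does not follow from integrating the on-diagonal bound of (\ref{eq:HKE}) over a wall neighborhood --- one must bound the difference of heat kernels with and without the wall condition (hitting-time or form-comparison arguments), uniformly across scales, and do the mixed-boundary bookkeeping consistently for both brackets and for both choices of condition on $\partial_o F$. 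Proving precisely these near-wall trace estimates is the main content of \cite{HamblySpec,KajinoSpec}; since your proposal names this obstacle but does not resolve it, what you have is a correct plan that reproduces the known architecture of the proof, not yet a proof.
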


\begin{remark}
Let $N(s) := \#\{\lambda <s : \lambda \in {\rm Spec}(-\Delta)\}$ be the eigenvalue counting function (or the integrated density of states) of the Laplacian. Then $K(t) = \int_0^\infty e^{-ts} dN(s)$, \emph{i.e.,} $K(\cdot)$ is the Laplace-Stieltjes transform of $N(\cdot)$. Using the estimate $K(t) \asymp t^{-d_s/2}$ one easily obtains the Weyl asymptotics $N(s) \asymp s^{d_s/2}$ for large $s$. 

In view of (\ref{eq:HKT1}), it is tempting to go a step further and claim that
\begin{equation}N(s) = s^{d_s/2}\left[h(\log s) + o(1)\right]~~~\mbox{as}~~s\to\infty,\label{eq:Weyl}\end{equation}
where $h$ is $\log (\rho_F m_F)$-periodic and is bounded away from $0$ and $\infty$. There is much numerical evidence that~(\ref{eq:Weyl}) holds on two-dimensional and three-dimensional GSCs~\cite{OuterApprox}*{CS, BKS}, as Figure~\ref{fig:SCWeyl} demonstrates. Indeed, if (\ref{eq:Weyl}) is true, then (\ref{eq:HKT1}) follows immediately. However, as pointed out in~\cite{HamblySpec}*{Section 4.2} and \cite{KajinoSpec}*{Section 9}, the inverse Laplace-Stieltjes transform involves sophisticated Tauberian theorems which are very difficult to prove. This explains why we have elevated the role of the heat kernel trace over that of the (integrated) density of states.
\label{re:ECF}
\end{remark}

More recently, Kajino (in preparation) proved a sharper estimate of the heat kernel trace than (\ref{eq:HKT1}) by exploiting the full symmetry of the GSC and its boundaries. 

\begin{theorem}
Let $F$ be a GSC, $\nu$ be the self-similar measure on $F$, and $\Delta$ be the Laplacian on $(F,\nu)$ with Dirichlet conditions on $\partial_o F$. Then there exist continuous, $\log (\rho_F m_F)$-periodic functions $G_k : \mathbb{R} \to \mathbb{R}$ for $k=0,1,\cdots,d$ such that
\begin{equation}
K(t) = \sum_{k=0}^d t^{-d_k/d_w} G_k(-\log t) + \mathcal{O}\left(\exp\left(-ct^{-\frac{1}{d_w-1}}\right)\right)\quad \mbox{as}~t\downarrow 0.
\label{eq:HK}
\end{equation}
Here $d_k := d_h(F \cap \{x_1 = \cdots x_k=0\})$. Moreover $G_0>0$ and $G_1<0$.
\label{prop:HKEst}
\end{theorem}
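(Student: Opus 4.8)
The plan is to decompose the on-diagonal heat kernel $p_t(x,x)$ according to the stratification of the carpet $F$ by its ``faces'' of various dimensions, and to apply a renewal-theoretic argument to each stratum separately. Concretely, I would first partition $F$ into the measurable pieces $F^{(k)} := (F\cap\{x_1=\cdots=x_k=0\}) \setminus (F\cap\{x_1=\cdots=x_{k+1}=0\})$ for $k=0,1,\dots,d$ (with the convention that the $k=d$ stratum is the single corner point, which carries no $\nu$-mass and is handled separately or absorbed into the error/lower strata as appropriate). The key geometric input is that, because the Dirichlet condition is imposed on $\partial_o F$, a point $x$ lying in the relative interior of a $k$-dimensional face ``sees'' locally a half-space-type geometry: Brownian motion started at $x$ is reflected off the lower-dimensional faces containing $x$ but killed on the outer boundary, so $p_t(x,x)$ picks up, in addition to the bulk $t^{-d_h/d_w}$ behavior, boundary-layer corrections at each scale $d_k/d_w$. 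Integrating $p_t(x,x)$ over $F^{(k)}$, whose Hausdorff dimension is exactly $d_k = d_h(F\cap\{x_1=\cdots=x_k=0\})$, produces a contribution of order $t^{-d_k/d_w}$ times a log-periodic fluctuation.

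The second step is to make the self-similarity explicit. Using~(\ref{eq:SSDE}) and the scaling $p_t(x,x) = (m_F/\rho_F)\, p_{t/(\rho_F m_F)}(\Psi_S^{-1}x,\Psi_S^{-1}x)$ restricted to a cell $S\in\mathcal{S}_1(F_1)$ (together with the compatibility of Dirichlet boundary conditions on the outer faces and the symmetry hypotheses~\ref{cond:H1}--\ref{cond:H4}), I would write down a renewal equation for each stratum: setting $\tau = \log(\rho_F m_F)$ and $u = -\log t$, the function $e^{-u d_k/d_w} \int_{F^{(k)}} p_{e^{-u}}(x,x)\,d\nu(x)$ satisfies, up to an exponentially small error coming from the sub-Gaussian off-diagonal decay in~(\ref{eq:HKE}), an asymptotic renewal identity of period $\tau$. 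The non-lattice vs.\ lattice dichotomy is here genuinely lattice (a single scale factor $l_F$), which is precisely why the limiting functions $G_k$ are $\tau$-periodic rather than constant; this is the mechanism already present in~\cite{HamblySpec,KajinoSpec}, and I would invoke the Hambly--Kajino renewal theorem in its periodic form, being careful that the error term transferred through the renewal argument remains of the stretched-exponential type $\exp(-ct^{-1/(d_w-1)})$ rather than merely $o(1)$ — this requires the sharp two-sided sub-Gaussian bound~(\ref{eq:HKE}), not just the leading order, and an estimate on how fast the rescaled traces converge.

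The sign assertions $G_0>0$ and $G_1<0$ I would extract as follows. Positivity of $G_0$ is immediate: $G_0$ governs the bulk term, and $p_t(x,x)>0$ with the lower bound in~(\ref{eq:HKE}) forcing $t^{d_h/d_w}K(t)$ to stay bounded below by a positive constant, hence its log-periodic limit is strictly positive. For $G_1<0$, the point is that $d_1 = d_h(F\cap\{x_1=0\})$ is the dimension of the outer-boundary face on which the \emph{Dirichlet} condition is imposed, so the first correction is a boundary-layer deficit: the killed (Dirichlet) heat kernel near a face is strictly smaller than the free one, by a reflection-principle comparison, which makes the $t^{-d_1/d_w}$ coefficient negative. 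Establishing this rigorously on the fractal — where there is no literal reflection principle — is what I expect to be the main obstacle: one must show that the second term in the expansion is not merely $O(t^{-d_1/d_w})$ but has a definite negative sign, which requires either a monotonicity/domination argument comparing the Dirichlet form on $F$ with Neumann conditions against the one with Dirichlet conditions on $\partial_o F$, or a direct analysis of the boundary-adapted renewal equation showing its periodic solution is negative. The symmetry and border hypotheses~\ref{cond:H1} and~\ref{cond:H4} should be what makes the faces ``flat enough'' for such a comparison to close.
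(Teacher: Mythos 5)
You should first be aware that the paper does not prove this theorem at all: it is quoted from N.~Kajino's then-unpublished work, with only the remark that the proof ``exploits the full symmetry of the GSC and its boundaries,'' and that even the Neumann analogue remains open. So there is no in-paper argument to compare against; judged on its own, your sketch contains a genuine gap at its very first step. You propose to obtain the term $t^{-d_k/d_w}G_k$ by integrating $p_t(x,x)$ over the stratum $F^{(k)}\subset F\cap\{x_1=\cdots=x_k=0\}$, but for $k\geq 1$ these strata have Hausdorff dimension $d_k<d_h$ and therefore $\nu$-measure zero, so every such integral vanishes identically and your decomposition produces only the bulk term (and nothing at all for the constant $k=d$ term). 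The sub-leading terms are not traces over lower-dimensional pieces; they arise as boundary-layer deficits in the bulk integral: for $x$ within distance of order $t^{1/d_w}$ of a codimension-$k$ outer face, the Dirichlet kernel $p_t(x,x)$ differs from its unconstrained counterpart by $O(t^{-d_h/d_w})$, and the $\nu$-measure of that collar is of order $t^{(d_h-d_k)/d_w}$, which is what yields $t^{-d_k/d_w}$. Consequently the decomposition must be performed on the heat kernels themselves --- e.g.\ an inclusion--exclusion over the faces of the cube comparing Dirichlet, Neumann and reflected kernels, which is where hypothesis~\ref{cond:H1} enters --- not on the domain of integration.

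Two further points. First, a renewal-theorem argument of the Hambly--Kajino type is precisely what gives the weaker statement~(\ref{eq:HKT1}), with only an $o(1)$ remainder; upgrading the error to the stretched-exponential form $\exp(-ct^{-1/(d_w-1)})$, and proving $G_1<0$ rather than merely an $O(t^{-d_1/d_w})$ bound, is exactly the part your sketch defers (``the main obstacle''), and it is the part for which the symmetry/reflection structure is indispensable, since there is no literal reflection principle on the carpet. Your intuition that the Dirichlet kernel is dominated by the unconstrained one (domain monotonicity) is correct but by itself only bounds the correction in one direction; it does not show the second term in the expansion is the boundary deficit at leading order. Second, a minor slip: under the self-similarity the on-diagonal kernel scales as $p^{\mathrm{cell}}_t(\Psi_S x,\Psi_S x)=m_F\,p_{(\rho_F m_F)t}(x,x)$ (time by $\rho_F m_F$, density by the measure ratio $m_F$), not by $m_F/\rho_F$ as written.
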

\begin{remark}
It is believed that the same estimate holds on GSCs with Neumann conditions on $\partial_o F$, but the proof is more elusive. (N. Kajino, personal communication)
\end{remark}

As the $G_k$ are periodic, we can expand them in Fourier series: $G_k(x) = \sum_{p\in \mathbb{Z}} \hat{G}_{k,p} e^{2\pi p i x/\log R}$, where $R := \rho_F m_F = d_w(F) \log l_F$. We should note that very little is known about the functions $G_k$, except for the signs of $G_0$ and $G_1$. There is strong numerical evidence that $G_0$ is nonconstant, with $(\max G_0 -\min G_0)/\hat{G}_{0,0}$ typically on the order of $10^{-2}$ for GSCs in $\mathbb{R}^2$ or $\mathbb{R}^3$~[CS]; but this has not been rigorously established. 

We point out three important features of the estimate~(\ref{eq:HK}) which are crucial to our thermodynamic computations in the sequel:
\begin{enumerate}
\item The polynomial terms all have nonpositive exponents $(-d_k/d_w)_k$, where $d_k$ is the Hausdorff dimension of the codimension-$k$ outer boundary of $F$. In particular, $d_0=d_h(F)$, $d_1=d_h(\partial_o F)$, $d_{d-1}=1$, and $d_{d}=0$. \label{pt:np}
\item Since the $G_k$ are real-valued bounded functions, we have $\hat{G}_{k,p}=\overline{\hat{G}_{k,-p}}$, and $|\hat{G}_{k,p}| \leq \frac{1}{2\pi}\int^{\log R}_0 |G_k(x)|dx \leq C < \infty$ for all $p\in\mathbb{Z}$. \label{pt:bd}
\item The remainder term decays exponentially in $t$, which is better than the power-law decay in the estimate~(\ref{eq:HKT1}).  \label{pt:exp}
\end{enumerate}
The first two features guarantee that the thermodynamic quantities (energy, pressure, etc.) associated with GSCs will be finite and meaningful. The last feature, which is the most important of all, enables us to carry out the Casimir energy calculation unambiguously and without making any \emph{ad hoc} regularization.

Now recall that the heat kernel trace on Riemannian manifolds $M \subset \mathbb{R}^d$  has the short-time asymptotics
\begin{equation}
K(t) = \frac{{\rm Vol}(M)}{(4\pi t)^{d/2}} + \mathcal{O}\left(t^{-(d-1)/2}\right)\quad \mbox{as}~t\downarrow 0.
\label{eq:mfdHK}
\end{equation}
Going back to~(\ref{eq:HK}), it is easy to check that if $F$ is a GSC, then the heat kernel trace for $LF$ ($L>0$) is given by $K(L^{-2} t) = L^{d_s} G_0(-\log(L^{-2}t)) t^{-d_s/2}+ \mbox{lower order terms}$ as $t \downarrow 0$. Comparing this against~(\ref{eq:mfdHK}), we see that $LF$ has an effective spectral volume of $C_2 L^{d_s} (4\pi)^{d_s/2}$, where $C_2 \asymp G_0$. In fact, we can use Ces\`{a}ro averaging to define more precisely the spectral content of a GSC, in the same manner that the Minkowski content of a fractal string is defined~\cite{LapidusBook}. This definition also appeared in~\cite{FractalThermo}.

\begin{definition}
Let $F$ be a GSC. Then we define the \emph{spectral volume} of $LF$ ($L>0$) to be
\begin{equation}
V_s(LF) := L^{d_s}(4\pi)^{d_s/2}\lim_{n\to\infty} \frac{1}{n\log R}\int_{R^{-n}}^{1} G_0(-\log t)\frac{dt}{t} = (4\pi)^{d_s/2} \hat{G}_{0,0} L^{d_s}.
\end{equation}
\end{definition}

We emphasize that the anomalous walk dimension $d_w>2$ of the GSC results in its spectral volume being distinct from its Hausdorff volume. As will be made clear in later sections, all "extensive" thermal observables (\emph{e.g.} energy, particle number) scale with the spectral volume, not the Hausdorff volume.

Moreover we denote $d_{k,p} := 2\left(\frac{d_k}{d_w} + \frac{2 p \pi i}{\log R}\right)$. These are the \emph{complex dimensions} of the GSC in the sense of M. Lapidus~\cite{LapidusBook}, which reflect both the various spectral contents (via the real part of $d_{k,p}$) and the discrete scale invariance of the fractal (via the imaginary part). Using this notation,~(\ref{eq:HK}) can be rewritten as
\begin{equation}
K(t)=\sum_{k=0}^d \sum_{p\in\mathbb{Z}}\hat{G}_{k,p} t^{-d_{k,p}/2}+ \mathcal{O}\left(\exp\left(-ct^{-\frac{1}{d_w-1}}\right)\right)~~~\mbox{as}~~t\downarrow 0.
\label{eq:HK2}
\end{equation}

\section{Spectral zeta function on Sierpinski carpets} \label{sec:speczeta}

The spectral zeta function of a self-adjoint Laplacian $\Delta$ on a bounded domain is given by
$$ \zeta_\Delta(s,\gamma) := {\rm Tr} \frac{1}{(-\Delta+\gamma)^s}.$$
This can be written as a Mellin transform of the heat kernel trace
\begin{equation}
\zeta_\Delta(s,\gamma) = \frac{1}{\Gamma(s)} \int_0^\infty t^s e^{-\gamma t} K(t) \frac{dt}{t}, 
\label{eq:speczetarep}
\end{equation}
whenever the right-hand side is defined. In the case of a GSC with Dirichlet boundary,~(\ref{eq:HK}) implies that the absicssa of convergence for $\zeta_\Delta(\cdot,\gamma)$ is located at ${\rm Re}(s) = d_{0,0}=d_h(F)/d_w$. It is natural to ask whether this function can be extended to the entire complex plane, save for a countable (possibly infinite) number of poles. The following theorem, which was reported in (B. Steinhurst \& A. Teplyaev, arXiv:1011.5485), serves as the linchpin for all subsequent results in this paper.

\begin{theorem}
The spectral zeta function $s\mapsto\zeta_\Delta(s,\gamma)$ of the Laplacian on a GSC $F$, with Dirichlet conditions on $\partial_o F$, admits a meromorphic extension to $\mathbb{C}$.
\label{thm:meroext}
\end{theorem}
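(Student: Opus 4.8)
The plan is to split the Mellin integral~(\ref{eq:speczetarep}) at $t=1$ and use the sharp heat kernel trace estimate~(\ref{eq:HK}) to control each piece separately. For the tail $\int_1^\infty t^s e^{-\gamma t} K(t)\,\frac{dt}{t}$, note that $K(t) = \sum_j e^{-t\lambda_j}$ decays exponentially (assuming $\gamma > -\lambda_0$, and otherwise after an elementary shift), so this integral defines an entire function of $s$ that contributes no poles. All the analytic structure therefore comes from the small-$t$ integral $\int_0^1 t^s e^{-\gamma t} K(t)\,\frac{dt}{t}$, and the strategy is to substitute the expansion~(\ref{eq:HK}) (equivalently~(\ref{eq:HK2})) and integrate term by term.

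The key step is then the following. For each fixed $k \in \{0,1,\dots,d\}$ and each Fourier mode $p \in \mathbb{Z}$, one expands $e^{-\gamma t} = \sum_{m=0}^\infty \frac{(-\gamma)^m}{m!} t^m$ and computes
\begin{equation}
\int_0^1 t^{s + m - 1 - d_{k,p}/2}\,dt = \frac{1}{s + m - d_{k,p}/2},
\end{equation}
valid for $\mathrm{Re}(s) > d_{k,p}/2 - m$ and extended meromorphically elsewhere, with a simple pole at $s = d_{k,p}/2 - m$ and residue $1$. Summing against the coefficients $\frac{(-\gamma)^m}{m!}\hat{G}_{k,p}$ and dividing by $\Gamma(s)$, one obtains that $\zeta_\Delta(s,\gamma)$ extends meromorphically with candidate poles located at $s = \frac{d_k}{d_w} + \frac{2p\pi i}{\log R} - m$ for $k=0,\dots,d$, $p \in \mathbb{Z}$, $m \in \mathbb{N}_0$, the residues being suppressed wherever $1/\Gamma(s)$ vanishes (in particular at nonpositive integers). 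The remainder term in~(\ref{eq:HK}), being $\mathcal{O}(\exp(-ct^{-1/(d_w-1)}))$, produces after the Mellin transform an entire function of $s$, since $t^{s-1}\exp(-ct^{-1/(d_w-1)})$ is integrable near $0$ for every $s \in \mathbb{C}$; this is exactly the payoff of feature~(\ref{pt:exp}) of the estimate.

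The main obstacle is justifying the interchange of the two infinite sums (over $p$ and over $m$) with the integral, and establishing that the resulting series of meromorphic functions converges locally uniformly on $\mathbb{C}$ away from the pole set, so that the limit is genuinely meromorphic. Feature~(\ref{pt:bd}), the uniform bound $|\hat{G}_{k,p}| \leq C$, is not by itself enough to sum over $p$; one needs decay of $\hat{G}_{k,p}$ in $p$, which follows from the continuity — and, with more care, the regularity — of the periodic functions $G_k$, or alternatively one must group the mode sum against the $\frac{1}{s+m-d_{k,p}/2}$ factors and exploit that for fixed $s$ only finitely many denominators are small while the rest are $\asymp |p|$. On any compact subset of $\mathbb{C}$ minus a neighborhood of the (discrete) pole set, the denominators $|s + m - d_{k,p}/2|$ grow linearly in $m$ and in $|p|$, the factor $|(-\gamma)^m/m!|$ decays superexponentially in $m$, and $1/|\Gamma(s)|$ is bounded; combining these gives absolute and locally uniform convergence, hence meromorphy of the extension, with the pole structure as described above. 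I would carry out the argument in this order: (i) dispose of the tail integral; (ii) insert~(\ref{eq:HK}) into the small-$t$ integral and peel off the remainder as an entire contribution; (iii) evaluate the term-by-term Mellin integrals; (iv) prove the convergence of the double series and conclude.
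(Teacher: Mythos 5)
Your architecture is the paper's own: split the Mellin integral (\ref{eq:speczetarep}) at $t=1$; the tail is entire in $s$ (the paper's $I_3$, controlled by $K(t)\le c_3e^{-c_4t}$ and the upper incomplete gamma function); the sub-Gaussian remainder in (\ref{eq:HK}) contributes an entire function (the paper's $I_2$, the payoff of feature~\ref{pt:exp}); and the polynomial terms, after expanding $e^{-\gamma t}$, give the Mittag--Leffler series with candidate poles $d_{k,p}/2-\mathbb{N}_0$ and residues killed by $1/\Gamma(s)$ at $-\mathbb{N}$. The one genuinely delicate point is the one you flag in step (iv) and then do not actually close: the interchange of the $p$-sum with the integral and the locally uniform convergence of $\sum_{p\in\mathbb{Z}}\hat{G}_{k,p}/(s+m-d_{k,p}/2)$. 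Neither of your two remedies works as stated. Decay of $\hat{G}_{k,p}$ ``from continuity'' is not available: Kajino's theorem gives only that the $G_k$ are continuous, so Riemann--Lebesgue yields $\hat{G}_{k,p}\to 0$ with no rate, and no higher regularity of $G_k$ is known that would give summable decay. And your fallback --- denominators $\asymp|p|$ away from the pole set together with the uniform bound $|\hat{G}_{k,p}|\le C$ of feature~\ref{pt:bd} --- produces only $\sum_p C/|p|$, which diverges; the superexponential decay in $m$ controls the $m$-sum but does nothing for the $p$-sum. So your concluding claim of ``absolute and locally uniform convergence'' of the double series is not justified by the ingredients you list.

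What the paper does at this point is a cancellation argument rather than an absolute-convergence argument: it sums symmetrically in $p$ and uses that the $G_k$ are real-valued, so $\hat{G}_{k,-p}=\overline{\hat{G}_{k,p}}$. Pairing the $+p$ and $-p$ terms, one bounds, for $r$ away from the poles,
\begin{equation*}
\left|\frac{\hat{G}_{k,p}}{r-d_{k,p}/2}+\frac{\hat{G}_{k,-p}}{r-d_{k,-p}/2}\right|
\;\leq\; C\left|\frac{2\left(r-\tfrac{d_k}{d_w}\right)}{\left(r-\tfrac{d_k}{d_w}\right)^2+\left(\tfrac{2p\pi}{\log R}\right)^2}\right|,
\end{equation*}
which decays like $p^{-2}$ and is therefore summable over $p\in\mathbb{N}$; the slowly decaying $O(1/|p|)$ pieces of the two terms cancel against each other. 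This pairing estimate (equivalently, interpreting the $p$-sum as a symmetric limit) is the missing ingredient in your proposal; with it, the rest of your outline --- tail, remainder, term-by-term Mellin evaluation, pole/residue bookkeeping --- goes through exactly as in the paper.
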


\begin{proof}
Without loss of generality we suppose that~(\ref{eq:HK2}) holds for $t\in (0,1)$, while there exist $c_3, c_4>0$ such that $K(t) \leq c_3  e^{-c_4 t}$ for $t>1$. Then $$\zeta_\Delta(s,\gamma) \Gamma(s) = I_1(s,\gamma) + I_2(s,\gamma) +I_3(s,\gamma),$$ 
where
\begin{eqnarray*}
I_1(s,\gamma)&=& \int_0^1 t^s e^{-\gamma t}\sum_{k=0}^d \sum_{p\in\mathbb{Z}}\hat{G}_{k,p} t^{-d_{k,p}/2}\frac{dt}{t},\\
I_2(s,\gamma)&=& \int_0^1 t^s e^{-\gamma t} \mathcal{O}\left(\exp\left(-ct^{-\frac{1}{d_w-1}}\right)\right) \frac{dt}{t},\\
|I_3(s,\gamma)|&\leq&  c_3\left| \int_1^\infty t^s e^{-\gamma t} e^{-c_4 t} \frac{dt}{t}\right|.
\end{eqnarray*}
 
To compute $I_1$ we use the identity
$$\int_0^1 t^s e^{-\gamma t} t^{-p} \frac{dt}{t} = \int_0^1 t^s \sum_{n=0}^\infty \frac{(-1)^n}{n!}\gamma^n t^n t^{-p} \frac{dt}{t} = \sum_{n=0}^\infty \frac{(-1)^n \gamma^n}{n! (s+n-p)}$$
for ${\rm Re}(p)\geq 0$. Using linearity of the polynomial terms we get
$$I_1(s,\gamma)=\sum_{k=0}^d \sum_{p\in\mathbb{Z}}\sum_{n=0}^\infty\frac{(-1)^n \gamma^n \hat{G}_{k,p}}{n! (s+n-d_{k,p}/2)}.$$
In deriving this expression we interchanged the order of $p$-summation and integration, which we justify as follows. For fixed $\gamma \neq 0$, the summand indicates that the simple poles are $d_{k,p}/2-\mathbb{N}_0$. Away from the poles $s\mapsto I_1(s,\gamma)$ is holomorphic, since for each $k=0,1,\cdots,d$ and each $r \in \mathbb{C}$, 
$$\left|\frac{\hat{G}_{k,p}}{r-d_{k,p}/2} + \frac{\hat{G}_{k,-p}}{r-d_{k,-p}/2}\right| \leq C \left| \frac{2 (r-\frac{d_k}{d_w})}{(r-\frac{d_k}{d_w})^2 + (\frac{2p \pi}{\log R})^2} \right|$$
is summable over $p \in \mathbb{N}$. When $\gamma=0$, we find
$$I_1(s,0) = \sum_{k=0}^d \sum_{p\in\mathbb{Z}} \frac{\hat{G}_{k,p}}{s-d_{k,p}/2}.$$

Next we let $\alpha := 1/(d_w-1)$, and make the bound
$$
\left|\frac{I_2(s,\gamma)}{\Gamma(s)}\right| \leq C \left|\frac{1}{\Gamma(s)}\int_0^1 t^s e^{-\gamma t} e^{-ct^{-\alpha}} \frac{dt}{t} \right| \leq C e^{-c} \left|\frac{\gamma^{-s}}{\Gamma(s)}\boldsymbol{\gamma}(s,\gamma)\right|,
$$
where $\boldsymbol{\gamma}(s,x)$ is the lower incomplete gamma function. Since
$$ \boldsymbol{\gamma}^*(s,\gamma) := \frac{\gamma^{-s}}{\Gamma(s)}\boldsymbol{\gamma}(s,\gamma) = e^{-\gamma}\sum_{m=0}^\infty \frac{\gamma^m}{\Gamma(s+m+1)}$$
is entire in both $s$ and $\gamma$, $I_2(s,\gamma)/\Gamma(s)$ contributes an entire-in-$s$ component to $\zeta_\Delta(s,\gamma)$.

For $I_3$, it suffices to know that for ${\rm Re}(\gamma) > -c_4$, 
$$|I_3(s,\gamma)| \leq c_3 \left|(\gamma + c_4)^{-s} \int_{\gamma+c_4}^\infty u^s e^{-u} \frac{du}{u}\right| =c_3 \left| (\gamma + c_4)^{-s} \Gamma(s,\gamma+c_4)\right|$$
is finite, thanks to the holomorphicity of the upper incomplete gamma function $s\mapsto \Gamma(s,x)$ when $x\neq 0$.

So finally we obtain, for ${\rm Re}(\gamma)>-c_4$, the Mittag-Leffler decomposition
\begin{equation}
\zeta_\Delta(s,\gamma) =  \frac{1}{\Gamma(s)}\sum_{k=0}^d \sum_{p\in\mathbb{Z}}\sum_{n=0}^\infty \frac{(-1)^n\gamma^n  \hat{G}_{k,p}}{n!\left(s+n-d_{k,p}/2\right)} + (\mbox{entire function in $s$}).
\label{eq:ML}
\end{equation}
This proves the theorem.
\qed \end{proof}

Now we can enumerate all the possible poles and residues of the spectral zeta function on a GSC.

\begin{corollary}
For a GSC $F$ with Dirichlet condition on $\partial_o F$:
\begin{enumerate}
\item The poles of $\zeta_\Delta(\cdot,\gamma)$, where ${\rm Re}(\gamma)\geq -c_4$, are contained in the set
$$ \bigcup_{k=0}^d \bigcup_{p\in \mathbb{Z}} \bigcup_{n=0}^\infty \left\{-n+\frac{d_{k,p}}{2}\right\},$$
with corresponding residues
$$ {\rm Res}\left(\zeta_\Delta(\cdot,\gamma),-n+\frac{d_{k,p}}{2}\right) = \frac{(-1)^n \gamma^n \hat{G}_{k,p}}{n! \Gamma\left(-n+d_{k,p}/2\right)}.$$

\item The poles of $\zeta_\Delta(\cdot,0)$ are contained in the set
$\displaystyle\bigcup_{k=0}^d \bigcup_{p\in \mathbb{Z}} \left\{\frac{d_{k,p}}{2}\right\}$,
with corresponding residues
$$ {\rm Res}\left(\zeta_\Delta(\cdot,0),\frac{d_{k,p}}{2}\right) = \frac{\hat{G}_{k,p}}{\Gamma\left(d_{k,p}/2\right)}.$$
In particular, $\zeta_\Delta(s,0)$ is analytic for ${\rm Re}(s)<0$.

\item $\zeta_\Delta(s,0)=0$ for all $s\in-\mathbb{N}$.
\end{enumerate}
\label{cor:polesresidues}
\end{corollary}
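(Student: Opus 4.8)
The plan is to extract all three assertions directly from the Mittag-Leffler decomposition~(\ref{eq:ML}) established in Theorem~\ref{thm:meroext}; the only delicate point throughout is the interplay between the poles of the Mellin-transform sum and the zeros of $1/\Gamma(s)$ at the non-positive integers.

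For part (1), fix $\gamma$ with ${\rm Re}(\gamma)\ge -c_4$. In~(\ref{eq:ML}) the prefactor $1/\Gamma(s)$ is entire and the displayed triple sum is, term by term, a simple pole at $-n+d_{k,p}/2$; by the summability estimates already used to prove Theorem~\ref{thm:meroext} (pairing $p$ with $-p$, and the $n!$ in the denominator controlling the $n$-sum), deleting from the sum the finitely many triples $(k,p,n)$ landing on a fixed point $s_0$ of that form leaves a series converging locally uniformly near $s_0$, hence holomorphic there. This yields the asserted pole set at once, and multiplying the residue of the retained term(s) by $1/\Gamma(s_0)$ produces the residue formula. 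I would remark in passing that coinciding complex dimensions contribute additively, and that when $s_0\in-\mathbb{N}_0$ the zero of $1/\Gamma$ cancels the would-be pole, consistently with the residue formula returning $0$ there.

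Part (2) is the specialization $\gamma=0$: all $n\ge1$ terms in~(\ref{eq:ML}) drop out, leaving $\zeta_\Delta(s,0)=\Gamma(s)^{-1}\sum_{k,p}\hat{G}_{k,p}/(s-d_{k,p}/2)+(\text{entire in }s)$, exactly the form recorded for $I_1(s,0)$ in the proof, and rerunning the argument of part (1) with $n=0$ gives the pole set $\{d_{k,p}/2\}$ and the residues $\hat{G}_{k,p}/\Gamma(d_{k,p}/2)$. For the analyticity statement I would use that ${\rm Re}(d_{k,p}/2)=d_k/d_w$, together with $d_0/d_w=d_s/2>0$, $d_k/d_w\ge d_{d-1}/d_w=1/d_w>0$ for $1\le k\le d-1$, and $d_d/d_w=0$; hence no candidate pole has negative real part, so $\zeta_\Delta(\cdot,0)$ is holomorphic on $\{{\rm Re}(s)<0\}$ (the single candidate on the imaginary axis at $s=d_{d,0}/2=0$ being removable since $1/\Gamma$ vanishes there).

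For part (3) the essential observation is that the entire-in-$s$ term of~(\ref{eq:ML}) is not arbitrary: tracing the proof of Theorem~\ref{thm:meroext} at $\gamma=0$, it equals $(I_2(s,0)+I_3(s,0))/\Gamma(s)$ with $I_2(\cdot,0)$ and $I_3(\cdot,0)$ entire, so the whole of $\zeta_\Delta(\cdot,0)$ has the shape $\Gamma(s)^{-1}\cdot(\text{something holomorphic away from the }d_{k,p}/2)$. Fixing $s=-n$ with $n\in\mathbb{N}$, part (2) shows no $d_{k,p}/2$ equals $-n$ (they all have nonnegative real part), so the bracketed factor is finite at $-n$ while $1/\Gamma(-n)=0$; therefore $\zeta_\Delta(-n,0)=0$. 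The main obstacle — if one can call it that — is simply to keep this $\Gamma$-zero bookkeeping straight; there is no further analytic difficulty.
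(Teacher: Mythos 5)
Your proposal is correct and takes essentially the same approach as the paper: all three parts are read off from the Mittag-Leffler decomposition~(\ref{eq:ML}), with the zeros of $1/\Gamma(s)$ at the non-positive integers accounting both for the vanishing residues along $d_{d,0}/2-\mathbb{N}$ and for part (3). The only slip, immaterial to the claim, is that at $\gamma=0$ the imaginary axis carries the whole tower of candidate poles $d_{d,p}/2=2\pi p i/\log R$, $p\in\mathbb{Z}$, not just $s=0$; since these all have real part zero they still do not affect analyticity on $\{{\rm Re}(s)<0\}$.
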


\begin{proof}
The poles and residues can be read off from~(\ref{eq:ML}), so the first two claims are immediate. Note that $d_{d,0}/2-\mathbb{N} = -\mathbb{N}$ are poles of $\zeta_\Delta(\cdot,\gamma)$ with zero residue because $1/\Gamma(s)=0$ for all $s \in -\mathbb{N}$. The last claim follows from the observation that  when ${\rm Re}(s)<0$, $\zeta_\Delta(s,0)$ is a product of $1/\Gamma(s)$ and a holomorphic function. 
\qed \end{proof}

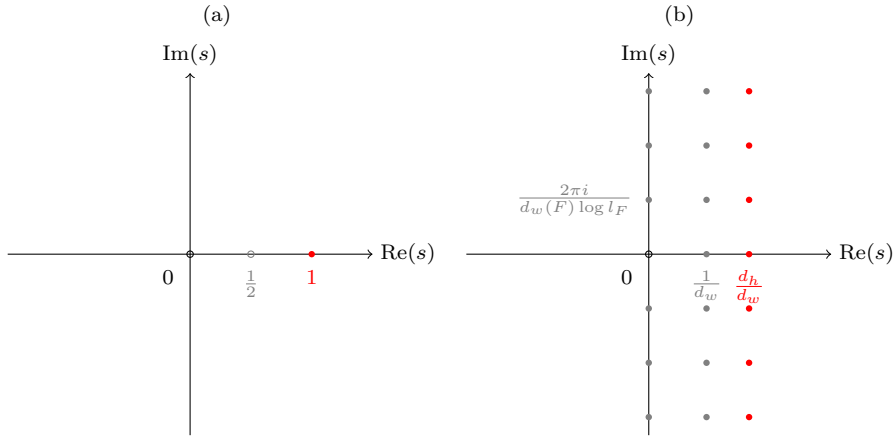
\begin{figure}
\centering
\subfigure[~]{
\begin{tikzpicture}[scale=0.8]
    %draw the x-y plane
    \draw[axis_line, ->] (-3,0) -- (3.0,0) node[textlabel,anchor=west]{${\rm Re}(s)$};
    \draw[axis_line, ->] (0,-3.0) -- (0,3.0) node[textlabel,anchor=south]{${\rm Im}(s)$};
    \coordinate (s2) at (2,0);
    \coordinate (s1) at (1,0);
    \coordinate (s0) at (0,0);
    \fill[red] (s2) circle (1.5pt) node[label=below:\small{$1$}] {};
    \draw[gray] (s1) circle (1.5pt) node[label=below:\small{$\frac{1}{2}$}] {};
    \draw[black] (s0) circle (1.5pt) node[label=below left:\small{$0$}] {};
\end{tikzpicture}
\label{subfig:zeta2d}    
}
\subfigure[]{
\begin{tikzpicture}[scale=0.8]
    %draw the x-y plane
    \draw[axis_line, ->] (-3,0) -- (3.0,0) node[textlabel,anchor=west]{${\rm Re}(s)$};
    \draw[axis_line, ->] (0,-3.0) -- (0,3.0) node[textlabel,anchor=south]{${\rm Im}(s)$};
% Define poles at \frac{d_s}{2}
\coordinate (c0) at (1.65,0);
\coordinate (c1) at (1.65,0.9);
\coordinate (c2) at (1.65,1.8);
\coordinate (c3) at (1.65,2.7);
\coordinate (c1-) at (1.65,-0.9);
\coordinate (c2-) at (1.65,-1.8);
\coordinate (c3-) at (1.65,-2.7);
% Define poles at Re=1/d_w, d_w>2
\coordinate (b0) at (0.95,0.0);
\coordinate (b1) at (0.95,0.9);
\coordinate (b2) at (0.95,1.8);
\coordinate (b3) at (0.95,2.7);
\coordinate (b1-) at (0.95,-0.9);
\coordinate (b2-) at (0.95,-1.8);
\coordinate (b3-) at (0.95,-2.7);
% Define poles at Re=0, d_w>2
\coordinate (a0) at (0,0);
\coordinate (a1) at (0,0.9);
\coordinate (a2) at (0,1.8);
\coordinate (a3) at (0,2.7);
\coordinate (a1-) at (0,-0.9);
\coordinate (a2-) at (0,-1.8);
\coordinate (a3-) at (0,-2.7);
% To only mark the y intersection, you can for instance use:
% \draw[dashed] (c -| y) node [left] {$y_1$} -- (c);
% Draw a red dot to indicate major poles
\fill[red] (c0) circle (1.5pt) node[label=below:\small{$\frac{d_h}{d_w}$}] {};
\fill[red] (c1) circle (1.5pt);
\fill[red] (c2) circle (1.5pt);
\fill[red] (c3) circle (1.5pt);
\fill[red] (c1-) circle (1.5pt);
\fill[red] (c2-) circle (1.5pt);
\fill[red] (c3-) circle (1.5pt);
\fill[gray] (b0) circle (1.5pt) node[label=below:\small{$\frac{1}{d_w}$}] {};
\fill[gray] (b1) circle (1.5pt);
\fill[gray] (b2) circle (1.5pt);
\fill[gray] (b3) circle (1.5pt);
\fill[gray] (b1-) circle (1.5pt);
\fill[gray] (b2-) circle (1.5pt);
\fill[gray] (b3-) circle (1.5pt);

\draw[black] (a0) circle (1.5pt) node[label=below left:\small{$0$}] {};
\fill[gray] (a1) circle (1.5pt) node[label=left:\small{$\frac{2\pi i}{d_w(F) \log l_F}$}] {};
\fill[gray] (a2) circle (1.5pt);
\fill[gray] (a3) circle (1.5pt);
\fill[gray] (a1-) circle (1.5pt);
\fill[gray] (a2-) circle (1.5pt);
\fill[gray] (a3-) circle (1.5pt);

\end{tikzpicture}
\label{subfig:zetaGSC}
}

\caption{The singularities of $s\mapsto\zeta_\Delta(s,0)$ associated to~\subref{subfig:zeta2d} a 2-dimensional parallelpiped, using the Epstein zeta function;~\subref{subfig:zetaGSC} a 2-dimensional GSC, using Corollary~\ref{cor:polesresidues}. A hollow circle indicates that the singularity is either a removable singularity, or a pole with zero residue.}
\label{fig:pole}
\end{figure}

Fig.~\ref{fig:pole} depicts the singularities of $\zeta_\Delta(\cdot,0)$ for a typical GSC versus that for a parallelpiped. In both domains, the poles with the largest real part represent the spectral volume dimension. What distinguishes a GSC from an Euclidean domain is the "tower" of poles, or \emph{complex dimensions}, along the imaginary direction at each ${\rm Re}(s) = d_{k,0}$, with even spacing of $2\pi i/(d_w\log l_F)$. Thus the discrete scale invariance of the GSC manifests itself in two ways: it produces log-periodic modulations of the heat kernel trace and, via a Mellin transform, creates towers of complex-valued poles in the spectral zeta function.  

\section{Review of quantum statistical mechanics}\label{sec:QSM}

Here we briefly review notions of quantum statistical mechanics that are needed for the rest of the paper. Much of this is known to physicists, but we use the language of functional analysis to make notations precise. For a complete treatment please consult, e.g.,~\cite{BratelliRobinson}*{Chapter 5}.

\subsection{Gibbs state and partition function}

The Bose (resp. Fermi) gas is a system of many quantum particles satisfying an appropriate rule of quantum statistics. Let us denote the Hilbert space for a single quantum particle confined to domain $F \subset \mathbb{R}^d$ by $\mathcal{H}_1 := L^2 (F)$: the state of the particle is described by a wavefunction $\psi \in \mathcal{H}_1$. For a system of $n$ such identical bosons (resp. fermions), the wavefunction $\psi_n(x_1,\cdots, x_n) \in L^2(F^n)$ is (anti)symmetric under particle exchange: $\psi_n(\cdots, x_i,\cdots, x_j,\cdots) = \pm \psi_n(\cdots, x_j, \cdots, x_i, \cdots)$. Therefore the appropriate $n$-body Hilbert space, denoted by $\mathcal{H}_n$, is the (anti)symmetric subspace of $\mathcal{H}_1^{\otimes n}$. By default we set $\mathcal{H}_0 = \mathbb{C}$.

To describe the equilibrium thermodynamics of the quantum many-body system, it is convenient to adopt the \emph{grand canonical ensemble}, where the particle number is not kept fixed. Mathematically, we take the \emph{Fock space} $\mathcal{F} = \overline{\bigoplus_{n=0}^\infty \mathcal{H}_n}$ as the underlying Hilbert space. For each one-body self-adjoint operator $W:\mathcal{H}_1 \to \mathcal{H}_1$, define $W_n : \mathcal{H}_n \to \mathcal{H}_n$ by
$$W_n P_{\pm} (\psi_1 \otimes \cdots \otimes \psi_n) =\left\{ \begin{array}{ll}0,&\mbox{if $n=0$} \\ \displaystyle \sum_{j=1}^n P_{\pm} \left(\psi_1 \otimes \cdots \otimes (W\psi_j) \otimes \cdots \otimes \psi_n\right),& \mbox{if $n \geq 1$}\end{array}\right.$$
for all $\psi_j \in \mathcal{H}_1$, where $P_{\pm}$ is the (anti)symmetric permutation on the components of the tensor product. Then we call $d\Gamma(W) := \overline{\bigoplus_{n=0}^\infty W_n}$ the \emph{second quantization} of $W$ on $\mathcal{F}$. An important example is where $W$ is the identity operator ${\bf 1}$ on $\mathcal{H}_1$: then ${\bf 1}_n$ is $n$ times the identity operator on $\mathcal{H}_n$, so its second quantization reads $d\Gamma({\bf 1}) = \oplus_{n=0}^\infty n =: {\bf N}$, also known as the \emph{number operator} on $\mathcal{F}$.

For our purposes it suffices to consider a one-body Hamiltonian operator $H$ which is bounded below. Let $\beta \in (0,\infty]$ be the inverse temperature, $\mu \in (-\infty,\inf{\rm Spec}(H)]$ be the chemical potential, and ${\bf H}:=d\Gamma(H)$. Consequently $\rho_{\beta,\mu}:=e^{-\beta d\Gamma(H-\mu {\bf 1})}$ is a trace-class operator on $\mathcal{F}$, and its trace
$$\Xi_{\beta,\mu} := {\rm Tr}_{\mathcal{F}} \rho_{\beta,\mu}={\rm Tr}_\mathcal{F} e^{-\beta({\bf H}-\mu {\bf N})}$$
is called the \emph{grand canonical partition function} of a thermal system with Hamiltonian $H$. The \emph{free energy} of the system is
$$ F_{\beta,\mu}=-\beta^{-1} \log\Xi_{\beta,\mu}.$$
The \emph{Gibbs state} $\omega_{\beta,\mu}$ at $(\beta,\mu)$ is a linear functional over the $C^*$-algebra on $\mathcal{F}$ such that
$$\omega_{\beta,\mu}({\bf A}) = \Xi_{\beta,\mu}^{-1}{\rm Tr}_{\mathcal{F}}\left({\bf A}\rho_{\beta,\mu}\right).$$
Physically, if $A$ belongs to the (*-)algebra of observables on $\mathcal{H}_1$, and ${\bf A}=d\Gamma(A)$, then $\omega_{\beta,\mu}({\bf A})$ represents the expectation value of $A$ in the grand canonical Gibbs state. To experts of operator algebra, we remark that the Gibbs state is a $(\tau,\beta)$-Kubo-Martin-Schwinger (KMS) state~\cite{BratelliRobinson}*{Definition 5.3.1}: namely, for all ${\bf A}, {\bf B}$ in the algebra of observables on $\mathcal{F}$,
$$\omega_{\beta,\mu}({\bf A}\tau_{i\beta}({\bf B})) = \omega_{\beta,\mu}({\bf B}{\bf A}),\quad~\mbox{where}~ \tau_t({\bf O}) := e^{it({\bf H}-\mu {\bf N})} {\bf O} e^{-it({\bf H}-\mu {\bf N})}.$$

Two observables we will be interested in the sequel are the expected number of bosons 
$$ \omega_{\beta,\mu}({\bf N}) =  \beta^{-1} \frac{1}{\Xi_{\beta,\mu}} \frac{\partial \Xi_{\beta,\mu}}{\partial \mu}=\beta^{-1} \frac{\partial}{\partial \mu} \log \Xi_{\beta,\mu},$$
and the expected energy
$$ \omega_{\beta,\mu}({\bf H}) = -\frac{1}{\Xi_{\beta,\mu}}\frac{\partial \Xi_{\beta,\mu}}{\partial \beta} = -\frac{\partial}{\partial \beta} \log\Xi_{\beta,\mu}.$$

For a non-interacting Bose gas the expression for the partition function can be reduced to a trace over the one-body Hilbert space $\mathcal{H}_1$. Assume that $H$ has pure point spectrum with increasing eigenvalues $0 \leq E_0 < E_1 \leq \cdots \leq E_i \leq \cdots \uparrow \infty$. Then
$$\Xi_{\beta,\mu}= \prod_i \sum_{n_i=0}^\infty e^{-n_i\beta (E_i-\mu)} = \prod_i \frac{1}{1-e^{-\beta(E_i-\mu)}}.$$
Therefore
\begin{eqnarray*}
\log\Xi_{\beta,\mu} &=& -{\rm Tr}_{\mathcal{H}_1} \log(1-e^{-\beta(H-\mu)}),\\
\omega_{\beta,\mu}({\bf N}) &=& {\rm Tr}_{\mathcal{H}_1} \frac{1}{e^{\beta(H-\mu)}-1} ,\\
\omega_{\beta,\mu}({\bf H}) &=& {\rm Tr}_{\mathcal{H}_1} \frac{H e^{\beta(H-\mu)}}{e^{\beta(H-\mu)}-1}.
\end{eqnarray*}
The following fact is useful and will be invoked several times: $\mu\mapsto\omega_{\beta,\mu}({\bf N})$ is a convex, monotone increasing function which maps $(-\infty,\inf{\rm Spec}(H)]$ to $(0,\infty]$.

\subsection{Thermodynamic limit}

Since we will deal with the delicate topic of phase transitions on fractals, it is important to discuss how we take the thermodynamic limit. As is customary, we exhaust an unbounded space $S_\infty$ by an increasing family of finite subsets $\{S_n\}_n$ with suitable boundary conditions. The thermodynamic limit of an "intensive" thermal observable (\emph{e.g.} pressure, particle density) on $S_\infty$ is then understood as the limit of a sequence of the said observables on $S_n$, subject to proper constraints.

For the unbounded carpet $F_\infty$, we choose the natural exhaustion $\{\Lambda_n\}_n :=\{ l_F^n F\}_n$. This involves scaling up the size of the system by $l_F$ each time, which in effect scales down the Laplacian by $l_F^{-2}$. To make this length dependence explicit, we will denote by $\underline\Delta$ the dimensionful Laplacian on a bounded domain not necessarily of unit characteristic length. If the characteristic length is $L$, then we write $\underline\Delta_L$, which is $L^{-2}$ times the dimensionless Laplacian $\Delta$ (on the domain of unit length). In the same manner we will add a subscript $L$ to any quantity which depends explicitly on the domain size $L$.

We will show shortly that, due to the log-periodic modulation inherited from the heat kernel trace, the thermodynamic limit along the natural exhaustion does not exist in the literal sense, but rather in the weaker limsup/liminf sense. Of course our choice of exhaustion is by no means unique. There might be other choices of exhaustion for which a stronger limit is attained, but we suspect that those would marginally improve the results to be presented below.

\section{Massive Bose gas in Sierpinski carpets} \label{sec:atoms}

In this section we focus on the Bose gas with $H=-\Delta$ and $\mu \leq \inf{\rm Spec}(H)$. This Hamiltonian captures the behavior of a gas of non-relativistic, non-interacting, and charge-neutral atoms. The partition function of the massive Bose gas in a general bounded domain is given by
\begin{equation}\log\Xi_{\beta,\mu}= -{\rm Tr}_{\mathcal{H}_1}\log\left(1-e^{-\beta(-\underline{\Delta}-\mu)}\right).\label{eq:BosePartitionFcn}\end{equation}
Its connection to the spectral zeta function is the following.
\begin{proposition}
The grand canonical partition function $\Xi_{\beta,\mu}$ for an ideal massive Bose gas in a bounded domain $F$ is given by
\begin{equation}
\log\Xi_{\beta,\mu}=\frac{1}{2\pi i}\int^{\sigma+i\infty}_{\sigma-i\infty}\beta^{-t} \Gamma(t) \zeta(t+1) \zeta_{\underline\Delta}\left(t,-\mu\right)dt,\quad \sigma > \frac{d_s}{2}.
\label{eq:Bosepart}
\end{equation}
\label{prop:Bosepart}
\end{proposition}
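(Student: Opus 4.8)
The plan is to reduce $\log\Xi_{\beta,\mu}$ to a Mellin transform of the heat kernel trace, exactly as in the classical Euclidean computation (cf.~\cite{kirsten2010}). First I would expand the logarithm in~(\ref{eq:BosePartitionFcn}) spectrally. Writing $0<E_0\le E_1\le\cdots$ for the (Dirichlet) eigenvalues of $-\underline\Delta$ and assuming $\mu<\inf{\rm Spec}(H)$, every $E_j-\mu>0$, so applying $-\log(1-x)=\sum_{n\ge1}x^n/n$ eigenvalue by eigenvalue and using Tonelli (all summands nonnegative) gives
\[
\log\Xi_{\beta,\mu}=\sum_{j}\sum_{n\ge1}\frac{1}{n}\,e^{-n\beta(E_j-\mu)}=\sum_{n\ge1}\frac{e^{n\beta\mu}}{n}\,K(n\beta),\qquad K(t):={\rm Tr}_{\mathcal{H}_1}e^{t\underline\Delta}=\sum_j e^{-tE_j}.
\]
The heat kernel estimates recalled in Section~\ref{sec:SC} ensure $K(t)<\infty$ for every $t>0$, and the bound $e^{n\beta\mu}K(n\beta)\le e^{-(n-1)\beta(E_0-\mu)}e^{\beta\mu}K(\beta)$ shows the $n$-series converges geometrically; in particular $\log\Xi_{\beta,\mu}<\infty$.

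Next I would invert the Mellin representation~(\ref{eq:speczetarep}). For ${\rm Re}(s)>d_s/2$ — the abscissa of convergence of $\zeta_{\underline\Delta}(\cdot,-\mu)$, which by~(\ref{eq:HK}) is pinned down by the leading term $\asymp t^{-d_s/2}$ of $K(t)$ as $t\downarrow0$ — one has $\Gamma(s)\,\zeta_{\underline\Delta}(s,-\mu)=\int_0^\infty t^{s-1}e^{\mu t}K(t)\,dt$. Since $t\mapsto e^{\mu t}K(t)$ is continuous on $(0,\infty)$ and $s\mapsto\Gamma(s)\zeta_{\underline\Delta}(s,-\mu)$ is integrable along every vertical line ${\rm Re}(s)=\sigma>d_s/2$ — Stirling gives exponential decay of $\Gamma(\sigma+i\tau)$ as $|\tau|\to\infty$, while $|\zeta_{\underline\Delta}(\sigma+i\tau,-\mu)|\le\zeta_{\underline\Delta}(\sigma,-\mu)<\infty$ uniformly in $\tau$ — the standard Mellin inversion theorem yields $e^{\mu t}K(t)=\frac{1}{2\pi i}\int_{\sigma-i\infty}^{\sigma+i\infty}t^{-s}\Gamma(s)\zeta_{\underline\Delta}(s,-\mu)\,ds$. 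Substituting $t=n\beta$ and inserting this into the $n$-series gives
\[
\log\Xi_{\beta,\mu}=\frac{1}{2\pi i}\int_{\sigma-i\infty}^{\sigma+i\infty}\Big(\sum_{n\ge1}n^{-(s+1)}\Big)\beta^{-s}\,\Gamma(s)\,\zeta_{\underline\Delta}(s,-\mu)\,ds=\frac{1}{2\pi i}\int_{\sigma-i\infty}^{\sigma+i\infty}\beta^{-s}\,\Gamma(s)\,\zeta(s+1)\,\zeta_{\underline\Delta}(s,-\mu)\,ds,
\]
which is~(\ref{eq:Bosepart}) after renaming $s\to t$.

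The one point needing care is the interchange of $\sum_n$ with $\int$. On ${\rm Re}(s)=\sigma$ the $n$-th term is bounded in modulus by $\beta^{-\sigma}\,n^{-1-\sigma}\,|\Gamma(\sigma+i\tau)|\,\zeta_{\underline\Delta}(\sigma,-\mu)$, and $\sum_n n^{-1-\sigma}<\infty$ because $\sigma>d_s/2\ge1/2>0$; hence Fubini applies and simultaneously identifies $\sum_{n\ge1}n^{-(s+1)}$ with the Riemann zeta value $\zeta(s+1)$, convergent precisely for ${\rm Re}(s)>0$. I expect this justification — together with recording that the abscissa of convergence of $\zeta_{\underline\Delta}(\cdot,-\mu)$ really is $d_s/2$ — to be the ``hard part'' of the argument, but it is entirely routine given the uniform bound on $\zeta_{\underline\Delta}$ along vertical lines and the exponential decay of $\Gamma$. (When $\mu=\inf{\rm Spec}(H)$ the zero eigenvalue $E_0-\mu=0$ makes $\zeta_{\underline\Delta}(t,-\mu)$ diverge while $e^{n\beta\mu}K(n\beta)\ge1$ forces the $n$-series to diverge, so both sides of~(\ref{eq:Bosepart}) equal $+\infty$, consistent with the statement.)
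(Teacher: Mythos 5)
Your proposal is correct and takes essentially the same route as the paper: expand $-\log(1-x)$ eigenvalue-wise, convert the resulting exponentials into a Mellin--Barnes contour integral involving $\Gamma(t)$, and interchange sum and integral to produce $\zeta(t+1)\,\zeta_{\underline\Delta}(t,-\mu)$, with $\sigma>\frac{d_s}{2}$ securing all the needed convergence. The only cosmetic difference is that the paper applies the Cahen--Mellin identity $e^{-a}=\frac{1}{2\pi i}\int_{\sigma-i\infty}^{\sigma+i\infty}a^{-t}\Gamma(t)\,dt$ at the operator level before taking the trace, whereas you pass through the heat-kernel trace $K(n\beta)$ and invert the Mellin representation of $\zeta_{\underline\Delta}$; your justification of the inversion and of the Fubini step is, if anything, more detailed than the paper's.
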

\begin{proof}
Using the Taylor expansion for $\log(1-x)$ and functional calculus, we expand the RHS of~(\ref{eq:BosePartitionFcn}) in a power series: 
$$
\log\Xi_{\beta,\mu} = {\rm Tr}_{\mathcal{H}_1} \sum_{n=1}^\infty \frac{1}{n} e^{-n\beta (-\underline\Delta-\mu)}.
$$
By the identity
$$ e^{-a} = \frac{1}{2\pi i} \int_{\sigma-i\infty}^{\sigma+i\infty} a^{-t}\Gamma(t) dt, \quad \sigma>0,
$$
we have
$$
\log\Xi_{\beta,\mu}={\rm Tr}_{\mathcal{H}_1}\frac{1}{2\pi i}\sum_{n=1}^\infty \frac{1}{n} \int_{\sigma-i\infty}^{\sigma+i\infty} (n\beta)^{-t} \Gamma(t) (-\underline{\Delta}-\mu)^{-t} dt.
$$
It remains to interchange the order of integration and summations. To do so we need to pick $\sigma$ such that both $\sum_{n=1}^\infty n^{-(\sigma+ic+1)}$ and ${\rm Tr}_{\mathcal{H}_1} (-\underline\Delta-\mu)^{-(\sigma+ic)}$ converge for any $c\in \mathbb{R}$. This happens when $\sigma >  0 \vee \frac{d_s}{2} = \frac{d_s}{2}$.
\qed \end{proof}

\subsection{Bose-Einstein condensation in the unbounded carpet}

For the massive Bose gas we are interested in the density of bosons, that is, the number of bosons per unit spectral volume:
\begin{equation}
\rho_L(\beta,\mu) := \frac{\omega_{L,\beta,\mu}({\bf N})}{V_s(L)} =\frac{1}{V_s(L)} {\rm Tr}_{\mathcal{H}_1} \frac{1}{e^{\beta(-\underline\Delta_L-\mu)}-1} =\frac{1}{\beta V_s(L)}\frac{\partial}{\partial \mu} \log \Xi_{L,\beta,\mu}.
\label{eq:defdensity}
\end{equation}

\begin{lemma}
Let $F$ be a GSC. Then the density of ideal massive Bose gas in $LF$ at $(\beta,\mu)$ is
\begin{equation}
\rho_L(\beta,\mu) = \frac{1}{(4\pi)^{\frac{d_s}{2}} \hat{G}_{0,0}}\left(\sum_{k=0}^d \sum_{p\in\mathbb{Z}}\right)' \frac{L^{d_{k,p}-d_s}}{\beta^{d_{k,p}/2}} \hat{G}_{k,p} \sum_{m=0}^\infty\frac{(\beta\mu)^m}{m!}\zeta\left(-m+\frac{d_{k,p}}{2}\right),
\label{eq:rhoL}
\end{equation}
where $\left(\sum_{k=0}^d \sum_{p\in\mathbb{Z}}\right)'$ means the double sum excluding the term $(k,p)=(d,0)$.
\label{lem:rhoc}
\end{lemma}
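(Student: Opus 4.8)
The plan is to write the boson number as a Matsubara sum of scaled heat kernel traces, insert Kajino's sharp asymptotics (Theorem~\ref{prop:HKEst}), and evaluate the resulting frequency sums with the Mellin/analytic-continuation machinery already used in Section~\ref{sec:speczeta}.

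I would begin from~(\ref{eq:defdensity}). Since $\mu \le \inf{\rm Spec}(-\underline\Delta_L)$, the operator $-\underline\Delta_L - \mu$ is nonnegative, so expanding $(e^{x}-1)^{-1} = \sum_{n\ge 1}e^{-nx}$ in the functional calculus gives
$$
\omega_{L,\beta,\mu}({\bf N}) = {\rm Tr}_{\mathcal{H}_1}\frac{1}{e^{\beta(-\underline\Delta_L - \mu)}-1} = \sum_{n=1}^\infty e^{n\beta\mu}\,{\rm Tr}_{\mathcal{H}_1}\,e^{n\beta\underline\Delta_L} = \sum_{n=1}^\infty e^{n\beta\mu}\,K(L^{-2}n\beta),
$$
where I used $\underline\Delta_L = L^{-2}\Delta$ and wrote $K(t) = {\rm Tr}_{\mathcal{H}_1}e^{t\Delta}$. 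Into this I would feed the sharp heat kernel trace estimate (Theorem~\ref{prop:HKEst}) in its Fourier-expanded form~(\ref{eq:HK2}), $K(L^{-2}n\beta) = \sum_{k,p}\hat{G}_{k,p}L^{d_{k,p}}(n\beta)^{-d_{k,p}/2} + \mathcal{O}(\exp(-c(L^{-2}n\beta)^{-1/(d_w-1)}))$, valid for $L^{-2}n\beta < 1$, together with the crude bound $K(s) \le c_3 e^{-c_4 s}$ for $s \ge 1$. Splitting the $n$-sum around $n \sim L^2/\beta$ and using $|e^{n\beta\mu}|\le 1$ (recall $\mu \le 0$ on $F_\infty$), one checks that the tail $n \gtrsim L^2/\beta$ and the subexponential remainders together contribute only $\mathcal{O}(e^{-cL^2|\mu|})$, and that extending the main sum back to all $n\ge 1$ costs no more; dividing by $V_s(L) = (4\pi)^{d_s/2}\hat{G}_{0,0}L^{d_s}$ then leaves, up to these exponentially small corrections,
$$
\rho_L(\beta,\mu) = \frac{1}{(4\pi)^{d_s/2}\hat{G}_{0,0}}\sum_{k=0}^d\sum_{p\in\mathbb{Z}}\hat{G}_{k,p}\,\frac{L^{d_{k,p}-d_s}}{\beta^{d_{k,p}/2}}\sum_{n=1}^\infty\frac{e^{n\beta\mu}}{n^{d_{k,p}/2}}.
$$

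The remaining ingredient is the frequency sum $\sum_{n\ge1}n^{-s}e^{n\beta\mu}$ with $s = d_{k,p}/2$. I would evaluate it through its Mellin representation $\sum_n n^{-s}e^{n\beta\mu} = \frac{1}{2\pi i}\int_{(\sigma)}(-\beta\mu)^{-w}\Gamma(w)\zeta(s+w)\,dw$ and push the contour to $-\infty$: the residues at the poles $w = -m$ of $\Gamma$ give exactly $\sum_{m\ge0}\frac{(\beta\mu)^m}{m!}\zeta(-m+\tfrac{d_{k,p}}{2})$, which is the claimed inner sum. (There is in addition a residue $\Gamma(1-s)(-\beta\mu)^{s-1}$ at the pole $w = 1-s$ of $\zeta$; for the leading index $(k,p)=(0,0)$ this is a bounded multiple of $(-\beta\mu)^{d_s/2-1}$, which vanishes as $\mu\uparrow 0$ precisely when $d_s>2$, so it does not disturb the critical-density statement this lemma serves.) Finally, the index $(k,p) = (d,0)$ must be pulled out: there $d_d = d_h(F\cap\{x_1=\cdots=x_d=0\}) = 0$, so $d_{d,0}=0$ and this is the $n$-independent coefficient $\hat{G}_{d,0}$ of~(\ref{eq:HK2}), whose frequency sum is the bare geometric series $\sum_n e^{n\beta\mu}$ --- the bottom-of-spectrum contribution --- which is why the prime appears on $(\sum_k\sum_p)'$. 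Reassembling everything gives~(\ref{eq:rhoL}).

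The step I expect to be the real obstacle is this last one: the tempting move of expanding $e^{n\beta\mu}$ and interchanging $\sum_n$ with $\sum_m$ term by term is not absolutely convergent (each $\sum_n n^{m-s}$ diverges for $m$ large), so it must be carried out as a shifted Mellin integral, in exact parallel with the meromorphic continuation of $\zeta_\Delta$ (Theorem~\ref{thm:meroext} and Corollary~\ref{cor:polesresidues}), with care about which poles are crossed and about the $(d,0)$/zero-mode bookkeeping. The remaining points are routine: the exponential tail bounds require $\mu<0$ strictly, the case $\mu=0$ being recovered from the monotonicity of $\mu\mapsto\omega_{L,\beta,\mu}({\bf N})$; and absolute, locally uniform convergence of the final triple series over $(k,p,m)$ follows from features~(\ref{pt:np})--(\ref{pt:bd}) of Theorem~\ref{prop:HKEst} --- the uniform bound $|\hat{G}_{k,p}|\le C$ and the $p$-summability estimate used in the proof of Theorem~\ref{thm:meroext} --- together with the remark that at fixed $k$ the $p$-sum reconstitutes the continuous periodic profile $G_k$.
\qed
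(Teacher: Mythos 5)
Your route is genuinely different from the paper's: the paper never touches the fugacity series, but instead evaluates the Mellin--Barnes representation~(\ref{eq:Bosepart}) of $\log\Xi_{L,\beta,\mu}$ by collecting the residues of $\zeta_{\underline\Delta}(\cdot,-\mu)$ listed in Corollary~\ref{cor:polesresidues}, then takes $\partial_\mu$ and divides by $V_s(L)$. As a proof of the lemma \emph{as stated}, however, your proposal has a genuine gap, which you flag and then dismiss. Carrying out your contour shift for $\sum_{n\geq 1} n^{-s}e^{n\beta\mu}$ honestly gives the full polylogarithm expansion $\mathrm{Li}_s(e^{\beta\mu}) = \Gamma(1-s)(-\beta\mu)^{s-1} + \sum_{m\geq 0}\frac{(\beta\mu)^m}{m!}\zeta(s-m)$, so what you arrive at differs from~(\ref{eq:rhoL}) by
\begin{equation*}
\frac{1}{(4\pi)^{\frac{d_s}{2}}\hat{G}_{0,0}}\left(\sum_{k=0}^d\sum_{p\in\mathbb{Z}}\right)'\hat{G}_{k,p}\,\frac{L^{d_{k,p}-d_s}}{\beta^{d_{k,p}/2}}\,\Gamma\!\left(1-\frac{d_{k,p}}{2}\right)(-\beta\mu)^{\frac{d_{k,p}}{2}-1}.
\end{equation*}
This is not an error term: for fixed $\mu<0$ the $(0,0)$ contribution is $L$-independent, i.e.\ of exactly the same order as the leading term you keep, and for $k\geq 1$ with $d_k<d_w$ the factor $(-\beta\mu)^{d_{k,p}/2-1}$ even diverges as $\mu\uparrow 0$. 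Arguing that the $(0,0)$ piece vanishes as $\mu\uparrow 0$ when $d_s>2$ and "does not disturb the critical-density statement" proves a different, weaker assertion than Lemma~\ref{lem:rhoc}, which claims a formula for $\rho_L(\beta,\mu)$ at arbitrary $(\beta,\mu)$ and finite $L$. To land on~(\ref{eq:rhoL}) itself you need the paper's order of operations, in which the $\mu$-dependence enters only through the second argument of the spectral zeta function, and the residue calculus of Corollary~\ref{cor:polesresidues} produces exactly the series $\sum_m\frac{(\beta\mu)^m}{m!}\zeta(-m+\frac{d_{k,p}}{2})$ and nothing else.

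A secondary mismatch of the same kind: your bookkeeping of the excluded index is not the paper's. In the paper, $(k,p)=(d,0)$ drops out because the corresponding poles $-\mathbb{N}_0$ have zero residue ($1/\Gamma$ vanishes there), whereas in your scheme the constant term $\hat{G}_{d,0}$ of~(\ref{eq:HK2}) contributes the geometric series $\hat{G}_{d,0}\,z/(1-z)$, which is nonzero at finite $L$ and must be estimated rather than silently omitted (it is $O(L^{-d_s})$ only after dividing by $V_s(L)$ and only in the limit). Together with the heat-kernel remainders you retain, what your argument actually establishes is the asymptotic polylogarithm form $\rho_L(\beta,z)=\frac{1}{(4\pi\beta)^{d_s/2}\hat{G}_{0,0}}\sum_{m\geq1} z^m G_0\!\left(-\log\frac{m\beta}{L^2}\right)m^{-d_s/2}+o(1)$ --- essentially the expression the paper extracts later, inside the proof of Proposition~\ref{prop:uppdensity} --- rather than formula~(\ref{eq:rhoL}) of Lemma~\ref{lem:rhoc}.
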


\begin{proof}
We evaluate the contour integral in~(\ref{eq:Bosepart}) via a residue calculation. By Corollary~\ref{cor:polesresidues}, $\zeta_\Delta(\cdot, \gamma)$ has poles $\bigcup_{n\in \mathbb{N}_0} (d_{k,p}/2-n)$ with residues $\frac{(-1)^n \gamma^n \hat{G}_{k,p}}{n! \Gamma(-n+d_{k,p}/2)}$. Upon excluding the poles with zero residue, $d_{d,0}/2-\mathbb{N}_0$, one finds
$$
\log\Xi_{L,\beta,\mu} = \left(\sum_{k=0}^d \sum_{p\in\mathbb{Z}}\right)'  \left(\frac{L^2}{\beta}\right)^{d_{k,p}/2}\sum_{m=0}^\infty \frac{(\beta\mu)^m \hat{G}_{k,p}}{m!}\zeta\left(\frac{d_{k,p}}{2}-m+1\right).
$$
The lemma follows from taking the $\mu$-derivative of $\log \Xi_{L,\beta,\mu}$ and then dividing by the spectral volume $V_s(L) = (4\pi)^{\frac{d_s}{2}}\hat{G}_{0,0}L^{d_s}$.
\qed \end{proof}

We are in a position now to take the thermodynamic limit of the particle density along $\Lambda_n \nearrow F_\infty$. To streamline the arguments, we introduce the \emph{fugacity} $z=e^{\beta\mu}$, and state all quantities in terms of $z$ rather than $\mu$. We write $\rho_{\Lambda_n}(\beta,z)$ for the particle density in $\Lambda_n$ at $(\beta,z)$.

\begin{proposition}
Let $\displaystyle\overline{\rho}_c(\beta) = \limsup_{n\to\infty}\rho_{\Lambda_n}(\beta,1)$ and $\displaystyle\underline\rho_c(\beta) = \liminf_{n\to\infty} \rho_{\Lambda_n}(\beta,1)$ be, respectively, the upper and lower critical density of the massive Bose gas in $F_\infty$. Then $\underline\rho_c(\beta)=\overline\rho_c(\beta)=\infty$ if $d_s\leq 2$, while $0 < \underline\rho_c(\beta) \leq \overline\rho_c(\beta) <\infty$ if $d_s>2$. In the latter case,
\begin{equation}
\overline\rho_c(\beta) = \frac{1}{(4\pi \beta)^{\frac{d_s}{2}}}\frac{\max G_0}{\hat{G}_{0,0}} \zeta\left(\frac{d_s}{2}\right),~~\underline\rho_c(\beta) = \frac{1}{(4\pi \beta)^{\frac{d_s}{2}}}\frac{\min G_0}{\hat{G}_{0,0}} \zeta\left(\frac{d_s}{2}\right).
\label{eq:density}
\end{equation}
\label{prop:uppdensity}
\end{proposition}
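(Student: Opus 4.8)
The plan is to read off the asymptotics of $\rho_{\Lambda_n}(\beta,1)$ directly from the closed-form expression~(\ref{eq:rhoL}) in Lemma~\ref{lem:rhoc}, evaluated at $\mu=0$ (i.e.\ $z=1$), with $L=l_F^n$, and then to isolate the dominant term as $n\to\infty$. Setting $\mu=0$ kills all the $m\geq 1$ terms, so
$$
\rho_{\Lambda_n}(\beta,1) = \frac{1}{(4\pi)^{d_s/2}\hat{G}_{0,0}}\left(\sum_{k=0}^d\sum_{p\in\mathbb{Z}}\right)' \frac{l_F^{n(d_{k,p}-d_s)}}{\beta^{d_{k,p}/2}}\,\hat{G}_{k,p}\,\zeta\!\left(\frac{d_{k,p}}{2}\right).
$$
First I would record that ${\rm Re}(d_{k,p}-d_s) = 2(d_k-d_h)/d_w \leq 0$, with equality exactly when $k=0$; and that $l_F^{n\cdot 2p\pi i/\log R} = e^{2\pi i n p (\log l_F)/\log R}$ has modulus $1$. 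Hence every $k\geq 1$ term is $O(l_F^{-cn})\to 0$ (using point~\ref{pt:bd}, the uniform bound on $\hat{G}_{k,p}$, plus the bound on $\zeta$ along vertical lines with real part bounded away from $1$, to get summability in $p$ and a uniform decay in $n$), and we are left with the $k=0$ contribution $\frac{1}{(4\pi)^{d_s/2}\hat{G}_{0,0}\beta^{d_s/2}}\sum_{p\in\mathbb{Z}}\hat{G}_{0,p}\,e^{2\pi i np(\log l_F)/\log R}\,\zeta(\tfrac{d_s}{2}+\tfrac{2p\pi i}{\log R})$.

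Next I would handle the divergence when $d_s\leq 2$. If $d_s<2$ the surviving $p=0$ term already involves $\zeta(d_s/2)$ with $d_s/2<1$, where $\zeta$ has a pole or takes a negative value — but the cleaner route is to note that the series defining $\zeta_\Delta(s,0)$ via~(\ref{eq:speczetarep}), equivalently ${\rm Tr}_{\mathcal{H}_1}(-\underline\Delta_L)^{-s}$, genuinely diverges at $s=d_s/2$ when $d_s\leq 2$ because of the zero eigenvalue / bottom of the spectrum: the Neumann Laplacian on each $\Lambda_n$ has $\lambda_0=0$, so $\omega_{\Lambda_n,\beta,0}({\bf N}) = {\rm Tr}(e^{\beta\underline\Delta_{\Lambda_n}}-1)^{-1} = \infty$ termwise already from the ground state. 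Actually the physically correct statement is that the \emph{density} $\rho_{\Lambda_n}(\beta,1)$ is finite for each $n$ once the zero mode is treated, but its limit diverges iff $d_s\leq 2$; I would make this precise by comparing $\sum_{j\geq 1} (e^{\beta l_F^{-2n}\lambda_j}-1)^{-1} \asymp \int_{\lambda_1}^\infty \frac{dN(l_F^{2n}\cdot)}{e^{\beta\lambda}-1}$ and invoking the Weyl bound $N(s)\asymp s^{d_s/2}$ (Remark~\ref{re:ECF}) to see this integral is $\asymp l_F^{n d_s}\int_0 \lambda^{d_s/2-1}(e^{\beta\lambda}-1)^{-1}d\lambda / V_s(l_F^n)$, which after dividing by $V_s(l_F^n)\asymp l_F^{n d_s}$ stays bounded iff $\int_0 \lambda^{d_s/2-2}\,d\lambda<\infty$ iff $d_s>2$. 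This both establishes $\overline\rho_c=\underline\rho_c=\infty$ for $d_s\leq 2$ and confirms $0<\underline\rho_c\leq\overline\rho_c<\infty$ for $d_s>2$.

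Finally, for $d_s>2$ I would identify the limsup and liminf of the log-periodic surviving sum. The key observation is that $\sum_{p\in\mathbb{Z}}\hat{G}_{0,p}\,\zeta(\tfrac{d_s}{2}+\tfrac{2p\pi i}{\log R})\,e^{2\pi i p x/\log R}$, as a function of the real variable $x$, is precisely the Fourier series of the continuous $\log R$-periodic function obtained by Mellin-inverting the $k=0$ block of~(\ref{eq:HK2}) against $(e^{\beta\lambda}-1)^{-1}$; and in fact a direct computation shows this function equals $\frac{1}{(4\pi)^{d_s/2}}G_0(x + \text{const})\zeta(d_s/2)$ up to the same normalization — more carefully, I would argue that the coefficient-wise identification forces the surviving sum, divided by $\hat{G}_{0,0}(4\pi)^{-d_s/2}\beta^{-d_s/2}$, to be $\frac{G_0(-\log\beta + 2n\log l_F + \cdots)}{\hat{G}_{0,0}}\zeta(d_s/2)$ or a close variant, whose values as $n$ ranges over $\mathbb{N}$ are equidistributed modulo $\log R$ when $\log l_F/\log R$ is irrational (and dense enough otherwise, or one argues directly that the relevant orbit is dense since $\log R = d_w\log l_F$ with $d_w\notin\mathbb{Q}$ is expected, but to be safe one takes $\limsup$/$\liminf$ over the closure of the orbit which still hits $\max G_0$, $\min G_0$ by continuity and periodicity — the honest statement is that $n\mapsto 2n\log l_F \bmod \log R$ has closure equal to all of $[0,\log R)$ precisely when $2\log l_F/\log R$ is irrational, and in the rational case one gets a finite set but then replaces $\max G_0,\min G_0$ by max/min over that set). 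The main obstacle is exactly this last point: pinning down that the limsup over the natural exhaustion sees the full range $[\min G_0,\max G_0]$ requires either an irrationality input on $d_w$ or a more careful statement; the cleanest fix, which I expect the paper to take, is to phase-average or to note that $G_0$ being bounded and continuous the $\limsup$ over $n$ of $G_0(c+2n\log l_F)$ equals $\max G_0$ over the closed subgroup generated, and to simply \emph{define} $\max G_0,\min G_0$ in~(\ref{eq:density}) to mean these orbit-closure extrema (which coincide with the global extrema in the generic irrational case). Everything else — interchanging sums, summability in $p$, the $k\geq1$ terms vanishing — is routine given points~\ref{pt:np}--\ref{pt:exp} and standard bounds on $\zeta$.
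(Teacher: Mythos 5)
Your reduction to the $k=0$ Fourier-series expression and your treatment of the $d_s\le 2$ dichotomy via the Weyl bound $N(s)\asymp s^{d_s/2}$ are workable (the paper instead reads the divergence off its resummed formula, and note that the framework of Lemma~\ref{lem:rhoc} is the Dirichlet Laplacian, so $E_0(\Lambda_n)>0$ and the Neumann zero-mode remark is a red herring). The genuine gap is in the $d_s>2$ case, at the step you call "coefficient-wise identification." The surviving $k=0$ sum is the Fourier series $\Phi(x)=\sum_{p\in\mathbb{Z}}\hat G_{0,p}\,\zeta\!\left(\tfrac{d_s}{2}+\tfrac{2\pi i p}{\log R}\right)e^{2\pi i p x/\log R}$ evaluated at $x_n=\log(l_F^{2n}/\beta)$, and this is \emph{not} $G_0(x+\mathrm{const})\,\zeta(d_s/2)$: the latter has $p$-th coefficient $\hat G_{0,p}e^{2\pi i pc/\log R}\zeta(d_s/2)$, of modulus $|\hat G_{0,p}|\zeta(d_s/2)$, while the actual coefficient has modulus $|\hat G_{0,p}|\,|\zeta(\tfrac{d_s}{2}+\tfrac{2\pi ip}{\log R})|<|\hat G_{0,p}|\zeta(\tfrac{d_s}{2})$ strictly for $p\neq 0$; so your identity can hold only if $G_0$ is constant, which the paper expects to be false. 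The correct resummation --- and this is exactly how the paper's proof proceeds --- expands each $\zeta(\tfrac{d_s}{2}+\tfrac{2\pi ip}{\log R})$ as a Dirichlet series and interchanges sums, giving $\Phi(x)=\sum_{m\ge 1}m^{-d_s/2}G_0(x-\log m)$, i.e.\ the paper's formula $\rho_{\Lambda_n}(\beta,1)=\frac{1}{(4\pi\beta)^{d_s/2}\hat G_{0,0}}\sum_m G_0\!\left(-\log\!\big(m\beta\, l_F^{-2n}\big)\right)m^{-d_s/2}+o(1)$: a $\zeta$-weighted average of translates of $G_0$, not $G_0$ itself.

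Because of this, your route does not deliver~(\ref{eq:density}): from the correct formula, $\max G_0\,\zeta(d_s/2)$ and $\min G_0\,\zeta(d_s/2)$ are a priori only upper and lower bounds on the $m$-sum (they give $0<\underline\rho_c\le\overline\rho_c<\infty$, but attaining them in the $\limsup/\liminf$ along $\Lambda_n$ would require $G_0(x_n-\log m)$ to approach its extremum simultaneously for the dominant range of $m$, which neither you nor, frankly, the paper's terse "from which the proposition follows" actually establishes). Your proposed fix --- redefining $\max G_0,\min G_0$ as orbit-closure extrema of $n\mapsto x_n \bmod \log R$ --- proves a different statement from the one asserted, so it cannot stand in for a proof of the proposition as written. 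Your instinct that there is a density/equidistribution subtlety here is sound, but it should be aimed at the correct limiting function $\Phi$ (whose extrema over a period are what the exhaustion can possibly see), not at $G_0$; as it stands, the central identity in your argument is false and the claimed equalities do not follow from it.
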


\begin{proof}
First we note that as $L\to\infty$, only the volume terms ($k=0$) dominate:
$$
 \rho_L(\beta,\mu) = \frac{1}{(4\pi \beta)^{\frac{d_s}{2}}\hat{G}_{0,0}}\sum_{p\in\mathbb{Z}} \left(\frac{L^2}{\beta}\right)^{\frac{2\pi ip}{\log R}}\hat{G}_{0,p} \sum_{m=0}^\infty \frac{(\beta\mu)^m}{m!}\zeta\left(-m+\frac{d_{0,p}}{2}\right) + o(1).
$$
By expanding the Riemann zeta function in a series and using some manipulation, we arrive at a more transparent expression
$$
\rho_L(\beta,z)= \frac{1}{(4\pi\beta)^{\frac{d_s}{2}} \hat{G}_{0,0}}\sum_{m=1}^\infty z^m G_0\left(-\log\left(\frac{m\beta}{L^2}\right)\right) m^{-\frac{d_s}{2}} + o(1).
\label{eq:rhoLsimp}
$$
Replacing $L$ with $l_F^n$ we conclude that as $n\to\infty$,
$$
\rho_{\Lambda_n}(\beta,z)= \frac{1}{(4\pi\beta)^{\frac{d_s}{2}} \hat{G}_{0,0}}\sum_{m=1}^\infty z^m G_0\left(-\log\left(\frac{m\beta}{(l_F)^{2n}}\right)\right) m^{-\frac{d_s}{2}} + o(1),
$$
from which the proposition follows. Finiteness comes from the convergence of the Riemann zeta function $\zeta(d_s/2)=\sum_{m=1}^\infty m^{-d_s/2}$. Note that we do not claim that $\underline\rho_c(\beta) =\overline\rho_c(\beta)$ since $G_0$ is probably nonconstant.
\qed \end{proof}

The consequence of a finite upper critical density is that any excess Bose gas must occupy the lowest eigenfunction of the Hamiltonian, a phenomenon known as \emph{Bose-Einstein condensation (BEC)}. This is the content of the next theorem. We will state the result for a fixed-density Bose gas, \emph{i.e.,} fix the particle density $\rho_{\rm tot}$ along $\Lambda_n \nearrow F_\infty$. This of course requires adjusting the fugacity $z_n$ for each $n$. Moreover, we denote by $E_k(L)$ and $E_k(\Lambda_n)$ the $(k+1)$th eigenvalue of the Laplacian on, respectively, $LF$ and $\Lambda_n$.

\begin{theorem}
Assume $d_s>2$. For each $\rho_{\rm tot}>0$, let $z_n$ be the unique root of $\rho_{\Lambda_n}(\beta,z_n) = \rho_{\rm tot}$. 
\begin{enumerate}[label={(\roman*)},nolistsep]
\item If $\rho_{\rm tot} \leq \overline\rho_c(\beta)$, and $\overline z$ is the root of $\displaystyle\limsup_{n\to\infty}\rho_{\Lambda_n}(\beta,\overline z)=\rho_{\rm tot}$, then $\displaystyle\liminf_{n\to\infty} z_n=\overline z$. \label{BECthm1}
\item If $\rho_{\rm tot} > \overline\rho_c(\beta)$, then $\displaystyle\lim_{n\to\infty}z_n =1$. Moreover if we denote the boson occupation density in the ground state, or the \textbf{condensate density}, by
$$\rho^0_{\Lambda_n}(\beta,z) := \frac{1}{V_s(\Lambda_n)}\frac{1}{z^{-1}e^{\beta E_0(\Lambda_n)}-1},$$
then $\displaystyle \lim_{n\to\infty}\left[\rho^0_{\Lambda_n}(\beta,z_n) + \rho_{\Lambda_n}(\beta,1)\right]=\rho_{\rm tot}$.  
\label{BECthm2}
\end{enumerate}
\label{thm:BEC}
\end{theorem}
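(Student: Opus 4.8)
The plan is to treat the total density as the sum of a ``regular'' part $\rho_{\Lambda_n}(\beta,z)$ (the excited-state contribution, already analyzed in Lemma~\ref{lem:rhoc} and Proposition~\ref{prop:uppdensity}) and a ``singular'' part $\rho^0_{\Lambda_n}(\beta,z)$ coming from the ground state $E_0(\Lambda_n)$. The key structural input is that, for fixed $\beta$, the map $z\mapsto\rho_{\Lambda_n}(\beta,z)$ is continuous and strictly increasing on $(0,1]$ with $\rho_{\Lambda_n}(\beta,0^+)=0$, so the equation $\rho_{\Lambda_n}(\beta,z_n)=\rho_{\rm tot}$ has a unique root $z_n\in(0,1)$ as soon as $\rho_{\rm tot}<\rho_{\Lambda_n}(\beta,1)$ (and one should note $\rho_{\Lambda_n}(\beta,1)\to\overline\rho_c(\beta)$ only along a subsequence, which is exactly why the statements are phrased with $\limsup$/$\liminf$). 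I would first record these monotonicity/continuity facts, then split into the two regimes.

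For part~\ref{BECthm1}, $\rho_{\rm tot}\le\overline\rho_c(\beta)$: here the excited states alone can absorb all the density, so $z_n$ stays bounded away from $1$ along the relevant subsequence and the condensate is negligible. I would show $\liminf_n z_n=\overline z$ by a squeeze argument: from $\rho_{\Lambda_n}(\beta,z_n)=\rho_{\rm tot}$ and the explicit series expression for $\rho_{\Lambda_n}(\beta,z)$ derived in Proposition~\ref{prop:uppdensity}, namely $\rho_{\Lambda_n}(\beta,z)=\frac{1}{(4\pi\beta)^{d_s/2}\hat G_{0,0}}\sum_{m=1}^\infty z^m G_0(-\log(m\beta/l_F^{2n}))m^{-d_s/2}+o(1)$, one sees that $G_0$ enters only through its values on a log-periodic orbit, so $\limsup_n\rho_{\Lambda_n}(\beta,z)$ is attained along a subsequence on which $G_0(\cdot)\to\max G_0$. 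Along that subsequence $\rho_{\Lambda_n}(\beta,z_n)\to\rho_{\rm tot}$ forces $z_n$ to converge to the root $\overline z$ of the limiting (monotone, hence invertible) density function, and a uniform-convergence/equicontinuity estimate on the series in $z$ (justified by feature~\ref{pt:bd} of Theorem~\ref{prop:HKEst} and dominated convergence with the summable majorant $m^{-d_s/2}$) upgrades pointwise convergence of $\rho_{\Lambda_n}(\beta,\cdot)$ to local uniform convergence, which is what lets us pass the inverse through the limit and identify $\liminf_n z_n$.

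For part~\ref{BECthm2}, $\rho_{\rm tot}>\overline\rho_c(\beta)$: now $\sup_{z<1}\rho_{\Lambda_n}(\beta,z)=\rho_{\Lambda_n}(\beta,1)\to\overline\rho_c(\beta)$ (on a subsequence; $\le\overline\rho_c(\beta)$ always in the limsup sense) is strictly less than $\rho_{\rm tot}$ for large $n$, so if $z_n$ stayed bounded below $1$ the total density could not reach $\rho_{\rm tot}$ — hence $z_n\to1$. This uses that $\rho^0_{\Lambda_n}(\beta,z)=\frac{1}{V_s(\Lambda_n)}(z^{-1}e^{\beta E_0(\Lambda_n)}-1)^{-1}$ is itself increasing in $z$ but, for $z$ bounded away from $1$ and $E_0(\Lambda_n)\downarrow0$, stays $O(1/V_s(\Lambda_n))\to0$, so it cannot rescue the deficit; thus $z_n\to1$. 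For the condensate identity, write $\rho_{\rm tot}=\rho^0_{\Lambda_n}(\beta,z_n)+\rho_{\Lambda_n}(\beta,z_n)$ exactly (this is just the definition of the root, splitting $\omega_{\Lambda_n,\beta,z_n}(\mathbf N)$ into the $j=0$ term and the rest), and then show $\rho_{\Lambda_n}(\beta,z_n)-\rho_{\Lambda_n}(\beta,1)\to0$: since $z_n\uparrow1$ and $\rho_{\Lambda_n}(\beta,\cdot)$ is uniformly Lipschitz near $z=1$ on the relevant scales, $|\rho_{\Lambda_n}(\beta,z_n)-\rho_{\Lambda_n}(\beta,1)|\le C(1-z_n)\cdot(\text{bounded})$, but I must be careful that the bound is uniform in $n$ — the derivative $\partial_z\rho_{\Lambda_n}(\beta,z)|_{z=1}\sim\sum_m m^{1-d_s/2}$ is finite precisely because $d_s>2$ is needed there only if $d_s>4$; for $2<d_s\le4$ the derivative at $z=1$ diverges and one instead argues via monotone convergence that $0\le\rho_{\Lambda_n}(\beta,1)-\rho_{\Lambda_n}(\beta,z_n)\to0$ because $\rho_{\Lambda_n}(\beta,z)\to\rho_{\Lambda_n}(\beta,1)$ as $z\uparrow1$ uniformly in $n$ (Dini-type, using the summable envelope at $z=1$, i.e. $\zeta(d_s/2)<\infty$). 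Combining, $\rho^0_{\Lambda_n}(\beta,z_n)=\rho_{\rm tot}-\rho_{\Lambda_n}(\beta,z_n)\to\rho_{\rm tot}-\overline\rho_c(\beta)$ along any subsequence realizing the limsup, and the stated limit $\lim_n[\rho^0_{\Lambda_n}(\beta,z_n)+\rho_{\Lambda_n}(\beta,1)]=\rho_{\rm tot}$ follows without needing the subsequence, since the two correction terms cancel.

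The main obstacle I anticipate is the uniformity in $n$ of the continuity/Lipschitz estimates for $\rho_{\Lambda_n}(\beta,\cdot)$ near $z=1$ in the low-dimensional window $2<d_s\le4$, where the naive derivative blows up; handling this cleanly requires replacing the Lipschitz bound by a uniform-in-$n$ Dini argument built on the $m^{-d_s/2}$ summability (equivalently $\zeta(d_s/2)<\infty$) together with the uniform boundedness of $G_0$. A secondary subtlety, already flagged in Proposition~\ref{prop:uppdensity}, is that because $G_0$ is (believed) nonconstant the sequences do not converge but only have $\limsup$/$\liminf$, so every ``$\lim$'' in part~\ref{BECthm1} and in the convergence $z_n\to1$ of part~\ref{BECthm2} must be interpreted via the appropriate sub/super-sequences, and one should verify that the root $\overline z$ in~\ref{BECthm1} is well-defined, i.e. that $\zeta(d_s/2)\max G_0/((4\pi\beta)^{d_s/2}\hat G_{0,0})=\overline\rho_c(\beta)\ge\rho_{\rm tot}$ guarantees the limiting equation $\limsup_n\rho_{\Lambda_n}(\beta,\overline z)=\rho_{\rm tot}$ has a unique solution in $(0,1]$, which again is just monotonicity of the limiting density in $z$.
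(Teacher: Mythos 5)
Your proposal rests on a misreading of the setup that breaks part~\ref{BECthm2}. In the paper, $\rho_{\Lambda_n}(\beta,z)$ defined in~(\ref{eq:defdensity}) is the \emph{full} density (ground state included), and since the Laplacian carries Dirichlet conditions, $E_0(\Lambda_n)>0$, so the admissible fugacity range is $0<z\le e^{\beta E_0(\Lambda_n)}$, which strictly exceeds $1$ at every finite $n$. You instead confine $z_n$ to $(0,1)$ and write $\rho_{\rm tot}=\rho^0_{\Lambda_n}(\beta,z_n)+\rho_{\Lambda_n}(\beta,z_n)$ with $\rho_{\Lambda_n}$ reinterpreted as the excited-state density. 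But for every $z\le 1$ one has $\rho^0_{\Lambda_n}(\beta,z)\le [V_s(\Lambda_n)\,\beta E_0(\Lambda_n)]^{-1}\sim l_F^{-n(d_s-2)}\to 0$, so with $z$ restricted to $(0,1]$ the total density is asymptotically capped by (roughly) $\overline\rho_c(\beta)<\rho_{\rm tot}$: in the supercritical regime your root equation has \emph{no} solution with $z_n\le 1$ for large $n$, and your assertion that the condensate ``rescues the deficit'' as $z_n\uparrow 1$ is internally inconsistent. The actual mechanism, and the paper's proof of $z_n\to1$, is the squeeze $1<z_n\le e^{\beta E_0(\Lambda_n)}\to1$ (equivalently $0<\mu_n\le E_0(\Lambda_n)$ with $E_0(\Lambda_n)-\mu_n$ of order $1/(\beta V_s(\Lambda_n))$), which is what makes the ground-state occupancy macroscopic. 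Consequently your key analytic step for the condensate identity --- uniform-in-$n$ (Dini-type) continuity of $\rho_{\Lambda_n}(\beta,\cdot)$ as $z\uparrow 1$, built on the majorant $m^{-d_s/2}$ --- is aimed at the wrong side of $z=1$: one must bound the \emph{excess} $\rho^{0+}_{\Lambda_n}(\beta,z_n)-\rho_{\Lambda_n}(\beta,1)$ for $z_n>1$, where the dangerous contributions come from low-lying excited states with $z_n^{-1}e^{\beta E_k(\Lambda_n)}-1$ small. The paper handles exactly this with Lemma~\ref{lem:convexineq}, inequality~(\ref{eq:convexineq2}) applied to $\rho^{m+}$, together with the gap $E_m(\Lambda_n)-E_0(\Lambda_n)$ and the Weyl asymptotics $E_m\asymp m^{2/d_s}$; nothing in your series-in-$z$ argument substitutes for this.

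There are two further, more repairable, issues. In part~\ref{BECthm1} your subsequence argument (local uniform convergence of $\rho_{\Lambda_n}(\beta,\cdot)$ along the subsequence realizing the $\limsup$, then inverting) only yields $\liminf_n z_n\le\overline z$; the complementary bound $z_n\ge\overline z$ for all large $n$ must come from the fact that $\rho_{\Lambda_n}(\beta,z)$ is bounded above by the $\limsup$ density for \emph{every} large $n$ (the paper gets this from $G_0\le\max G_0$, $G_1<0$) together with monotonicity in $z$, and the quantitative convergence $\liminf_n z_n=\overline z$ then follows from the convexity estimate~(\ref{eq:convexineq}); your sketch never addresses the other subsequences, along which the density is smaller and $z_n$ is pushed above $\overline z$ (indeed possibly above $1$). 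Finally, the bookkeeping $\rho_{\rm tot}=\rho^0+\rho_{\Lambda_n}$ double counts the ground state relative to the theorem's definition of $z_n$ as the root of $\rho_{\Lambda_n}(\beta,z_n)=\rho_{\rm tot}$ with $\rho_{\Lambda_n}$ the full density~(\ref{eq:defdensity}); this is cosmetic by itself, but it is the source of the fugacity-range error above.
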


The proof, which mirrors that of~\cite{BratelliRobinson}*{Theorem 5.2.30} with some modifications, relies upon extensive use of the convexity of $z\mapsto \rho_L(\beta,z)$ and the lemma below.

\begin{lemma}
For any $z_1 > z_2$,
\begin{equation}
\frac{\rho_L(\beta,z_2)}{z_2} \leq \frac{\rho_L(\beta,z_1)-\rho_L(\beta,z_2)}{z_1-z_2} \leq \frac{\rho_L(\beta,z_1)}{z_1(1-z_1 e^{-\beta E_0(L)})}.
\label{eq:convexineq}
\end{equation}
Analogously, if we denote the boson occupation density above the $(m+1)$th eigenfunction by
\begin{equation}
\rho_L^{m+}(\beta,z) := \frac{1}{V_s(L)}\sum_{k>m} \frac{1}{z_L^{-1} e^{\beta E_k(L)}-1},
\end{equation}
then for any $z_1>z_2$,
\begin{equation}
\frac{\rho_L^{m+}(\beta,z_2)}{z_2} \leq \frac{\rho_L^{m+}(\beta,z_1)-\rho_L^{m+}(\beta,z_2)}{z_1-z_2} \leq \frac{\rho_L^{m+}(\beta,z_1)}{z_1(1-z_1 e^{-\beta E_m(L)})}.
\label{eq:convexineq2}
\end{equation}
\label{lem:convexineq}
\end{lemma}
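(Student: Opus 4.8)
The plan is to prove both chains of inequalities simultaneously by recognizing that $\rho_L(\beta,z)$ and $\rho_L^{m+}(\beta,z)$ are, up to the common normalizing factor $1/V_s(L)$, sums of functions of the form $f_E(z) = (z^{-1}e^{\beta E}-1)^{-1} = z/(e^{\beta E}-z)$ over eigenvalues $E$; for the first chain $E$ ranges over all eigenvalues $E_k(L)$, and for the second chain over the eigenvalues $E_k(L)$ with $k>m$. Since finite (or absolutely convergent) sums and positive scalar multiples preserve all three desired inequalities, it suffices to establish, for a single term,
\begin{equation}
\frac{f_E(z_2)}{z_2} \leq \frac{f_E(z_1)-f_E(z_2)}{z_1-z_2} \leq \frac{f_E(z_1)}{z_1\left(1-z_1 e^{-\beta E}\right)}
\label{eq:singleterm}
\end{equation}
for $0 < z_2 < z_1 < e^{\beta E}$, and then sum: for the first chain of the lemma the smallest eigenvalue is $E_0(L)$, so the uniform upper bound factor $1-z_1 e^{-\beta E} \geq 1 - z_1 e^{-\beta E_0(L)}$ gives the stated right-hand side; for the second chain the smallest eigenvalue appearing is $E_{m+1}(L) \geq E_m(L)$ — wait, more carefully, the smallest index is $k=m+1$ so one uses $1 - z_1 e^{-\beta E_{m+1}(L)} \geq 1 - z_1 e^{-\beta E_m(L)}$ only if $E_{m+1}(L)\ge E_m(L)$, which holds since eigenvalues are nondecreasing — yielding the denominator $z_1(1-z_1 e^{-\beta E_m(L)})$ as claimed. (The one subtlety worth a remark is that when $z_1$ is close to $e^{\beta E_0(L)}$ the factor $1-z_1 e^{-\beta E_0(L)}$ can be small or the series can diverge; in that regime both sides are $+\infty$ and the inequality is trivially true, so one may assume $z_1 < e^{\beta E_0(L)}$, which also legitimizes the termwise manipulations by absolute convergence.)

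To prove the left inequality in \eqref{eq:singleterm}, I would use convexity of $z \mapsto f_E(z)$ on $(0,e^{\beta E})$: a direct computation gives $f_E'(z) = e^{\beta E}/(e^{\beta E}-z)^2 > 0$ and $f_E''(z) = 2e^{\beta E}/(e^{\beta E}-z)^3 > 0$. Convexity together with $f_E(0)=0$ yields $f_E(z_2)/z_2 = (f_E(z_2)-f_E(0))/(z_2-0) \leq (f_E(z_1)-f_E(z_2))/(z_1-z_2)$, which is exactly the left inequality. For the right inequality, the cleanest route is to write the divided difference as $\int_{z_2}^{z_1} f_E'(u)\,du / (z_1-z_2)$, and since $f_E'$ is increasing this is $\leq f_E'(z_1) = e^{\beta E}/(e^{\beta E}-z_1)^2$; then one checks the purely algebraic identity
\begin{equation}
\frac{e^{\beta E}}{(e^{\beta E}-z_1)^2} = \frac{f_E(z_1)}{z_1\left(1-z_1 e^{-\beta E}\right)},
\label{eq:algid}
\end{equation}
which follows immediately from $f_E(z_1) = z_1/(e^{\beta E}-z_1)$ and $1 - z_1 e^{-\beta E} = (e^{\beta E}-z_1)/e^{\beta E}$. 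Summing \eqref{eq:singleterm} over the relevant eigenvalues and dividing by $V_s(L)$ then gives \eqref{eq:convexineq} and \eqref{eq:convexineq2}.

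The main obstacle is not any single inequality — each is elementary — but rather the bookkeeping around convergence and the uniform replacement of $E$ by $E_0(L)$ (resp. $E_m(L)$) in the denominator of the upper bound. One must argue that the termwise upper bound $f_E(z_1)/(z_1(1-z_1 e^{-\beta E_0(L)}))$ is summable (it is, being dominated by a constant multiple of $f_E(z_1)$, whose sum is $V_s(L)\rho_L(\beta,z_1) < \infty$ for $z_1 < e^{\beta E_0(L)}$ by the spectral-zeta estimates of Corollary~\ref{cor:polesresidues}), so that the interchange of summation with the divided-difference operation is valid; and one must handle the boundary cases $z_1 \geq e^{\beta E_0(L)}$ separately as noted above. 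Beyond that, the proof is a routine assembly of the termwise convexity estimate.
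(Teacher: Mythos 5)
Your proof is correct and is essentially the paper's argument: the paper also sandwiches the divided difference between derivatives using convexity of the occupation number in the fugacity and then bounds the derivative via $1-ze^{\beta\underline\Delta_L}\geq 1-ze^{-\beta E_0(L)}$ (resp.\ $E_m(L)$), only doing so at the trace/operator level rather than eigenvalue-by-eigenvalue as you do. Your termwise identity $f_E'(z_1)=f_E(z_1)/\bigl(z_1(1-z_1e^{-\beta E})\bigr)$ and the chord-from-zero argument for the left inequality are just a mode-wise rephrasing of the same estimates, so no substantive difference.
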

\begin{proof}
Using the convexity of $z\mapsto \rho_L(\beta,z)$, we have for any $z_1>z_2$,
\begin{equation}
\frac{\partial \rho_L}{\partial z}(\beta,z_2) \leq \frac{\rho_L(\beta,z_1)-\rho_L(\beta,z_2)}{z_1-z_2} \leq \frac{\partial \rho_L}{\partial z}(\beta,z_1),
\label{eq:convexity}
\end{equation}
where the derivative is easily computed:
\begin{eqnarray}
\nonumber \frac{\partial \rho_L}{\partial z}(\beta,z) &=& \frac{1}{V_s(L)}{\rm Tr}_{\mathcal{H}_1} \frac{e^{\beta \underline\Delta_L}}{\left(1-z e^{\beta \underline\Delta_L}\right)^2} \\
& =& \frac{1}{V_s(L)}
{\rm Tr}_{\mathcal{H}_1} \left[\frac{1}{z^{-1} e^{-\beta \underline\Delta_L}-1} \cdot \frac{z^{-1}}{1-ze^{\beta \underline\Delta_L}}\right].
\label{eq:derivrho}
\end{eqnarray}
Since $-\underline\Delta_L \geq E_0(L)$ and $z \leq e^{-\beta \underline\Delta_L}$ by assumption, we can make the bound
\begin{equation}
z^{-1}\leq\frac{z^{-1}}{1-ze^{\beta\underline\Delta_L}} \leq \frac{z^{-1}}{1-ze^{-\beta E_0(L)}}.
\label{eq:bound}
\end{equation}
Applying~(\ref{eq:derivrho}) and~(\ref{eq:bound}) to~(\ref{eq:convexity}) yields~(\ref{eq:convexineq}). The same argument carries over to $\rho_L^{m+}$ by changing $E_0(L)$ to $E_m(L)$ in the upper bound.
\qed \end{proof}

\begin{proof}[of Theorem~\ref{thm:BEC}]
Since $G_0>0$ and $G_1<0$, we have for all sufficiently large $n$,
\begin{eqnarray*}
\rho_{\Lambda_n}(\beta,z) &\leq& \frac{1}{(4\pi\beta)^{\frac{d_s}{2}} \hat{G}_{0,0}} \sum_{m=1}^\infty z^m G_0 \left(-\log\left(\frac{m\beta}{(l_F)^{2n}}\right)\right)m^{-\frac{d_s}{2}} \label{eq:ineq} \\ 
\nonumber & \leq& \frac{1}{(4\pi\beta)^{\frac{d_s}{2}} \hat{G}_{0,0}} \sum_{m=1}^\infty z^m m^{-\frac{d_s}{2}} \max(G_0)= \limsup_{n\to\infty} \rho_{\Lambda_n}(\beta,z).
\end{eqnarray*}

Now we prove~\ref{BECthm1}. By the preceding inequality~(\ref{eq:ineq}) and the convexity of $\rho_L(\beta,\cdot)$, we deduce that $z_n \geq \overline z$ for all sufficiently large $n$. Then together with the first inequality in~(\ref{eq:convexineq}) of Lemma~\ref{lem:convexineq} with $z_1=z_n$ and $z_2 =\overline z$, we get
$$
0\leq z_n - \overline z \leq \overline z\left(\frac{\rho_{\Lambda_n}(\beta,z_n)}{\rho_{\Lambda_n}(\beta,\overline z)}-1\right) = \overline z \left(\frac{\rho_{\rm tot}}{\rho_{\Lambda_n}(\beta,\overline z)}-1\right).
$$
Taking the $\liminf$ of this inequality yields the result.

To prove~\ref{BECthm2}, suppose $\rho_{\rm tot} > \overline\rho_c(\beta)$. If $z_n  \leq 1$ for some large  $n$, then $\displaystyle \overline\rho_c(\beta) < \rho_{\Lambda_n}(\beta,z_n) \leq \limsup_{n\to\infty} \rho_{\Lambda_n}(\beta,z_n) \leq \limsup_{n\to\infty} \rho_{\Lambda_n}(\beta,1) =: \overline\rho_c(\beta)$, which is a contradiction. Thus $z_n >1$. But since $z_n \leq e^{\beta E_0(\Lambda_n)}$ we have $\displaystyle \lim_{n\to\infty} z_n =1$. Moreover $\rho_{\Lambda_n}^{m+}(\beta,z_L) > \rho_{\Lambda_n}^{m+}(\beta,1)$ for any $m\in\mathbb{N}_0$, so in particular
$$
\rho_{\Lambda_n}^{0+}(\beta,z_n) > \rho_{\Lambda_n}^{0+}(\beta,1) \geq \rho_{\Lambda_n}(\beta,1) - \frac{1}{V_s(\Lambda_n) \beta E_0(\Lambda_n)},
$$
and
\begin{eqnarray*}
\rho^0_{\Lambda_n}(\beta,z_n) + \rho_{\Lambda_n}(\beta,1)&\leq& \rho^0_{\Lambda_n}(\beta,z_n) + \rho^{0+}_{\Lambda_n}(\beta,z_n) + \frac{1}{V_s(\Lambda_n)\beta E_0(\Lambda_n)} \\
 &= & \rho_{\rm tot} +\frac{1}{V_s(\Lambda_n)\beta E_0(\Lambda_n)}.
\end{eqnarray*}
Noting that $[V_s(\Lambda_n)E_0(\Lambda_n)]^{-1} = Cn^{2-d_s} \to 0$ as $n\to\infty$, we find 
$$ \liminf_{n\to\infty} \left[\rho^0_{\Lambda_n}(\beta,z_n) + \rho_{\Lambda_n}(\beta,1)\right] \leq \rho_{\rm tot}.$$

It remains to show that $\displaystyle \limsup_{n\to\infty} \left[\rho^0_{\Lambda_n}(\beta,z_n) + \rho_{\Lambda_n}(\beta,1)\right] \geq \rho_{\rm tot}$. Here we utilize the fact that for any $m \in \mathbb{N}$ and $n\in\mathbb{N}_0$, the occupation density in the $(m+1)$th eigenfunction of the Laplacian on $\Lambda_n$
\begin{eqnarray*}
\frac{1}{V_s(\Lambda_n)} \frac{1}{z_n^{-1}e^{\beta E_m(\Lambda_n)}-1} &\leq& \frac{1}{V_s(\Lambda_n)} \frac{1}{e^{\beta(E_m(\Lambda_n) - E_0(\Lambda_n))}-1}\\ &\leq& \frac{1}{V_s(\Lambda_n)}\frac{1}{\beta(E_m(\Lambda_n)-E_0(\Lambda_n))}.
\end{eqnarray*}
The right-hand side scales with $n^{2-d_s}$ and hence tends to $0$ as $n\to\infty$, independently of $m$. This implies that
$$\rho_{\rm tot} = \lim_{n\to\infty} \left[\rho^0_{\Lambda_n}(\beta,z_n) + \rho^{0+}_{\Lambda_n}(\beta, z_n)\right] = \lim_{n\to\infty} \left[\rho^0_{\Lambda_n}(\beta,z_n) + \rho^{m+}_{\Lambda_n}(\beta, z_n)\right] $$
for any $m\in \mathbb{N}$. By replacing $L$ with $\Lambda_n$ in the second inequality in~(\ref{eq:convexineq2}) of Lemma~\ref{lem:convexineq}, and setting $z_1=z_n$ and $z_2=1$, we obtain
$$
\rho_{\Lambda_n}^{m+}(\beta,z_n) \leq \rho_{\Lambda_n}(\beta,1) \frac{z_n^{-1}- e^{-\beta E_m(\Lambda_n)}}{z_n^{-2}-e^{-\beta E_m(\Lambda_n)}}.
$$
It follows that for all $n$ and for all sufficiently large $m$,
$$
\frac{\rho_{\Lambda_n}^{m+}(\beta,z_n)}{\rho_{\Lambda_n}(\beta,1)} \leq  \frac{z_n^{-1}- e^{-\beta E_m(\Lambda_n)}}{z_n^{-2}-e^{-\beta E_m(\Lambda_n)}} \leq  \left(1-2\frac{E_0(\Lambda_n)}{E_m(\Lambda_n)}\right)^{-1} \leq  \left(1-2 C m^{1-\frac{d_s}{2}}\right)^{-1},
$$
where the Weyl asymptotics of the Laplacian was used in the last line. Thus
$$ \rho_{\rm tot} \leq \limsup_{n\to\infty}\left[\rho^0_{\Lambda_n}(\beta,z_n) + \rho_{\Lambda_n}(\beta,1) \left(1-2Cm^{1-\frac{d_s}{2}}\right)^{-1}  \right].$$
As the left-hand side is independent of $m$, we can take the limit $m\to\infty$ to complete the proof.
\qed \end{proof}

\begin{remark}
Here we encounter a subtlety not seen in the Euclidean setting, namely, that the length scale enters into the log-periodic modulation $G_0$ in the particle density. Therefore the thermodynamic limit needs to be handled with care. Theorem~\ref{thm:BEC} says that in the $\limsup$ density sense, condensation occurs if $\rho_{\rm tot}$ exceeds the upper critical density $\overline\rho_c(\beta)$. This appears to be the optimal threshold as our next result (Theorem~\ref{thm:BECfreeenergy}) suggests. Notice also that neither the condensate density $\rho_{\Lambda_n}^0(\beta,z_n)$ nor the critical density $\rho_{\Lambda_n}(\beta,1)$ likely has a limit as $n\to\infty$, but the sum of the two converges to $\rho_{\rm tot}$.

One might ask whether a stronger sense of convergence could be used to characterize the thermodynamic limit on fractals. A possibility would be to use the Ces\`{a}ro averaged density in place of the $\limsup$ density. However incorporating the former into the convexity proof of Theorem~\ref{thm:BEC} appears difficult.
\end{remark}

We have now shown that for an ideal massive Bose gas in an unbounded GSC with $d_s>2$, there is macroscopic occupation of particles in the ground state at sufficiently low temperatures. It is well known that Bose-Einstein condensation is a \textbf{quantum phase transition}. To illustrate this point, we will show that in the thermodynamic limit $L\to\infty$, the \emph{free energy density} of the Bose gas
$$ f_L(\beta,z) := \frac{F_L(\beta,z)}{V_s(L)}  = -\frac{1}{\beta V_s(L)}\log \Xi_{L,\beta,z}$$
becomes non-analytic in an appropriate sense. This is a familiar criterion of phase transition in the physics literature. A more technical criterion, namely, the non-uniqueness of Gibbs (or KMS) states at low temperatures, will be discussed elsewhere.

A straightforward computation shows that as $L\to\infty$,
$$ f_L(\beta,z) = -\frac{1}{(4\pi)^{\frac{d_s}{2}} \beta^{\frac{d_s}{2}+1}\hat{G}_{0,0}} \sum_{m=1}^\infty z^m G_0\left(-\log\left(\frac{m\beta}{L^2}\right)\right) m^{-\left(\frac{d_s}{2}+1\right)} + o(1).$$
Taking the sequence of free energy densities along $\{\Lambda_n\}_n$ yields the following result.
\begin{theorem}
For each $\rho_{\rm tot}>0$, let $z_n$ be the unique root of $\rho_{\Lambda_n}(\beta,z_n) = \rho_{\rm tot}$.
\begin{enumerate}[label={(\roman*)},nolistsep]
\item If $\rho_{\rm tot} \leq \overline\rho_c(\beta)$, and $\overline z$ is the root of $\displaystyle\limsup_{n\to\infty}\rho_{\Lambda_n}(\beta,\overline z)=\rho_{\rm tot}$, then 
\begin{equation}
  \liminf_{n\to\infty} f_{\Lambda_n}(\beta,z_n) = \liminf_{n\to\infty} f_{\Lambda_n}(\beta,\overline z) = -\frac{1}{(4\pi)^{\frac{d_s}{2}} \beta^{\frac{d_s}{2}+1}}\frac{\max G_0}{\hat{G}_{0,0}} {\rm Li}_{\frac{d_s}{2}+1}(\overline z),
\end{equation}
where $\displaystyle {\rm Li}_s(z)=\sum_{n=1}^\infty \frac{z^n}{n^s}$ is the polylogarithm.
 \label{freeenergy1}
\item If $\rho_{\rm tot} > \overline\rho_c(\beta)$, then $\displaystyle \lim_{n\to\infty} \left[f_{\Lambda_n}(\beta,z_n) - f_{\Lambda_n}(\beta,1)\right]=0$. \label{freeenergy2}
\end{enumerate}
\label{thm:BECfreeenergy}
\end{theorem}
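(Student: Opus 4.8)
The plan is to reduce everything to the asymptotic formula for $f_L(\beta,z)$ displayed just above the theorem, which already isolates the log-periodic modulation $G_0$, and then to reuse the machinery of Theorem~\ref{thm:BEC}, Proposition~\ref{prop:uppdensity} and Lemma~\ref{lem:convexineq}. Put $C_\beta:=\big[(4\pi)^{d_s/2}\beta^{d_s/2+1}\hat{G}_{0,0}\big]^{-1}$, so that along $\Lambda_n=l_F^nF$ one has $f_{\Lambda_n}(\beta,z)=-C_\beta\Theta_n(z)+o(1)$ with $\Theta_n(z):=\sum_{m\geq1}z^mG_0\big(-\log(m\beta/l_F^{2n})\big)m^{-(d_s/2+1)}$, together with the companion expansion $(4\pi\beta)^{d_s/2}\hat{G}_{0,0}\,\rho_{\Lambda_n}(\beta,z)=\sum_{m\geq1}z^mG_0\big(-\log(m\beta/l_F^{2n})\big)m^{-d_s/2}+o(1)$ from the proofs of Lemma~\ref{lem:rhoc} and Proposition~\ref{prop:uppdensity}. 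I would record three preliminaries: (a) both families of partial-sum functions are uniformly (in $n$) equicontinuous on $[0,1]$ and the $o(1)$ errors are uniform in $z\in[0,1]$, since $G_0$ is bounded and the tails $\sum_{m>M}m^{-(d_s/2+1)}$, $\sum_{m>M}m^{-d_s/2}$ vanish as $M\to\infty$ (here $d_s>2$); (b) $z\mapsto f_L(\beta,z)$ is decreasing, with $f_L(\beta,z_1)-f_L(\beta,z_2)=-\beta^{-1}\int_{z_2}^{z_1}w^{-1}\rho_L(\beta,w)\,dw$; (c) since $z_n<e^{\beta E_0(\Lambda_n)}$ and $E_0(\Lambda_n)=l_F^{-2n}E_0(F)\to0$, the sequence $\{z_n\}_n$ is bounded with $\limsup_n z_n\leq1$.

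For part~\ref{freeenergy2} (condensed phase, $\rho_{\rm tot}>\overline\rho_c(\beta)$) I would not touch the bulk expansion. By Theorem~\ref{thm:BEC}\ref{BECthm2}, $z_n\to1$ with $z_n>1$, and the condensate density $\rho^0_{\Lambda_n}(\beta,z_n)=\rho_{\rm tot}-\rho_{\Lambda_n}(\beta,1)+o(1)$ stays bounded away from $0$ and $\infty$, because up to $o(1)$ it lies in $[\rho_{\rm tot}-\overline\rho_c(\beta),\,\rho_{\rm tot}-\underline\rho_c(\beta)]$ and $\rho_{\rm tot}-\overline\rho_c(\beta)>0$. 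Separating the ground state, $f_{\Lambda_n}(\beta,z)=\frac{1}{\beta V_s(\Lambda_n)}\big[\log(1-ze^{-\beta E_0(\Lambda_n)})+\sum_{k\geq1}\log(1-ze^{-\beta E_k(\Lambda_n)})\big]$: the excited-state part of $f_{\Lambda_n}(\beta,z_n)-f_{\Lambda_n}(\beta,1)$ is, by $\log(1+x)\leq x$ and $\rho^{0+}_{\Lambda_n}(\beta,z_n)\leq\rho_{\Lambda_n}(\beta,z_n)=\rho_{\rm tot}$, at most $\beta^{-1}(z_n-1)\rho_{\rm tot}\to0$; the ground-state part is $\tfrac{1}{\beta V_s(\Lambda_n)}\big|\log(1-e^{-\beta E_0(\Lambda_n)})-\log(1-z_ne^{-\beta E_0(\Lambda_n)})\big|$, and since $1-e^{-\beta E_0(\Lambda_n)}\asymp\beta E_0(\Lambda_n)$ while $1-z_ne^{-\beta E_0(\Lambda_n)}\asymp V_s(\Lambda_n)^{-1}$ (from the two-sided bound on $\rho^0_{\Lambda_n}(\beta,z_n)$ together with $z_ne^{-\beta E_0(\Lambda_n)}\to1$), both logarithms are $O(n)$ and are killed upon division by $V_s(\Lambda_n)\asymp l_F^{nd_s}$. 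This yields $f_{\Lambda_n}(\beta,z_n)-f_{\Lambda_n}(\beta,1)\to0$.

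For part~\ref{freeenergy1} ($\rho_{\rm tot}\leq\overline\rho_c(\beta)$) I would first compute $\liminf_n f_{\Lambda_n}(\beta,\overline z)=-C_\beta\limsup_n\Theta_n(\overline z)$: the argument that proves Proposition~\ref{prop:uppdensity}, run with exponent $d_s/2+1$ in place of $d_s/2$ and fugacity $\overline z$ in place of $1$, gives $\limsup_n\Theta_n(\overline z)=\max G_0\cdot{\rm Li}_{d_s/2+1}(\overline z)$, which is the asserted value. It remains to identify $\liminf_n f_{\Lambda_n}(\beta,z_n)$ with it. One inequality is free: by (the proof of) Theorem~\ref{thm:BEC}\ref{BECthm1}, $z_n\geq\overline z$ for all large $n$, so fact~(b) gives $f_{\Lambda_n}(\beta,z_n)\leq f_{\Lambda_n}(\beta,\overline z)$ and hence $\liminf_n f_{\Lambda_n}(\beta,z_n)\leq\liminf_n f_{\Lambda_n}(\beta,\overline z)$. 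For the reverse inequality I would pass to a subsequence realizing $\liminf_n f_{\Lambda_n}(\beta,z_n)$ along which, simultaneously, $z_n$ and the phase $2n\log l_F\bmod\log R$ converge; using fact~(a) to pass the constraint $\rho_{\Lambda_n}(\beta,z_n)=\rho_{\rm tot}$ to the limit, and the integral representation in~(b) together with the convexity estimates of Lemma~\ref{lem:convexineq} to compare $f_{\Lambda_n}(\beta,z_n)$ with $f_{\Lambda_n}(\beta,\overline z)$ (and, when the limiting fugacity equals $1$, with $f_{\Lambda_n}(\beta,1)$, treated as in part~\ref{freeenergy2}), one would pin the subsequential limit of $f_{\Lambda_n}(\beta,z_n)$ to a subsequential limit of $f_{\Lambda_n}(\beta,\overline z)$, which is $\geq\liminf_n f_{\Lambda_n}(\beta,\overline z)$.

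This last step --- matching the two liminf values in part~\ref{freeenergy1} --- is the one I expect to be the main obstacle, for precisely the reason that one obtains only $\underline\rho_c(\beta)\leq\overline\rho_c(\beta)$ (not equality) in Proposition~\ref{prop:uppdensity}: the length scale $l_F^{2n}$ sits inside the argument of $G_0$, so none of $f_{\Lambda_n}(\beta,z_n)$, $f_{\Lambda_n}(\beta,\overline z)$ or $z_n$ itself converges, and the root $z_n$ of $\rho_{\Lambda_n}(\beta,\cdot)=\rho_{\rm tot}$ co-oscillates with the phase $2n\log l_F\bmod\log R$ in a way that must be tracked by hand; the convexity inequalities of Lemma~\ref{lem:convexineq}, which performed exactly this bookkeeping in the proof of Theorem~\ref{thm:BEC}, are the natural device. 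Everything else --- facts~(a)--(c), the evaluation of $\liminf_n f_{\Lambda_n}(\beta,\overline z)$, and all of part~\ref{freeenergy2} --- is routine given the sharp heat-kernel-trace estimate of Theorem~\ref{prop:HKEst} and the results already established.
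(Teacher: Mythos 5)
Your part~\ref{freeenergy2} is correct, but it re-derives by hand what the paper obtains in one line: since $-f_L(\beta,\cdot)$ is nonnegative, increasing and convex, one has $\frac{f_L(\beta,z_2)-f_L(\beta,z_1)}{z_1-z_2}\leq-\frac{\partial f_L}{\partial z}(\beta,z_1)=\frac{1}{\beta z_1}\rho_L(\beta,z_1)$ for $z_1>z_2$ (this is essentially your fact~(b)); applying it with $(z_1,z_2)=(z_n,1)$ gives $0\leq f_{\Lambda_n}(\beta,1)-f_{\Lambda_n}(\beta,z_n)\leq\frac{\rho_{\rm tot}}{\beta}\bigl(1-z_n^{-1}\bigr)\to0$ by Theorem~\ref{thm:BEC}\ref{BECthm2}, with no need to separate the ground state, because the constraint $\rho_{\Lambda_n}(\beta,z_n)=\rho_{\rm tot}$ already controls the full density, condensate included. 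Your evaluation of $\liminf_n f_{\Lambda_n}(\beta,\overline z)$ is at the same level of rigor as the paper's (both rest on the Proposition~\ref{prop:uppdensity}-type identification of $\max G_0$), so that part is fine.

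The genuine gap is in part~\ref{freeenergy1}: you prove only $\liminf_n f_{\Lambda_n}(\beta,z_n)\leq\liminf_n f_{\Lambda_n}(\beta,\overline z)$ and explicitly leave the reverse inequality as ``the main obstacle,'' offering a subsequence/phase-compactness sketch. That sketch does not close as stated: along a subsequence realizing $\liminf_n f_{\Lambda_n}(\beta,z_n)$ with convergent phase and $z_{n_k}\to z^*$, you only know $z^*\geq\overline z$; if $z^*>\overline z$ strictly, monotonicity gives that the limiting value at fugacity $z^*$ is \emph{below} the corresponding limit at $\overline z$ -- the wrong direction -- and the convexity bound $f(\beta,\overline z)-f(\beta,z^*)\leq\frac{\rho_{\rm tot}}{\beta}\bigl(1-\overline z/z^*\bigr)$ does not vanish, so nothing pins the two subsequential limits together. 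The paper's mechanism is precisely the estimate you wrote down but never deployed: apply the displayed convexity inequality with $(z_1,z_2)=(z_n,\overline z)$ (legitimate for large $n$ since $z_n\geq\overline z$ by the proof of Theorem~\ref{thm:BEC}\ref{BECthm1}) to obtain the two-sided sandwich $0\leq f_{\Lambda_n}(\beta,\overline z)-f_{\Lambda_n}(\beta,z_n)\leq\frac{\rho_{\rm tot}}{\beta}\bigl(1-\overline z/z_n\bigr)$, and then take liminf using $\liminf_n z_n=\overline z$. (The paper's closing step ``take the liminf on all sides'' is itself terse -- from $\liminf_n z_n=\overline z$ alone one directly gets only $\liminf_n\bigl[f_{\Lambda_n}(\beta,\overline z)-f_{\Lambda_n}(\beta,z_n)\bigr]=0$, and equating the two separate liminfs still requires aligning the subsequence where $z_n\to\overline z$ with one nearly realizing $\liminf_n f_{\Lambda_n}(\beta,\overline z)$ -- but that sandwich is the intended and essential device, and its absence is what leaves your part~\ref{freeenergy1} incomplete.)
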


\begin{proof}
The key is to realize that $-f_L(\beta,z) \geq 0$, and that $z\mapsto -f_L(\beta,z)$ is increasing and convex. From this it follows that for any $z_1>z_2$,
\begin{equation}
\frac{f_L(\beta,z_2) - f_L(\beta,z_1)}{z_1-z_2} \leq -\frac{\partial f_L}{\partial z}(\beta,z_1) = \frac{1}{\beta z_1}\rho_L(\beta,z_1).
\label{eq:freeenergyineq}
\end{equation}

For~\ref{freeenergy1} we replace $L$, $z_1$, and $z_2$ in~(\ref{eq:freeenergyineq}) with $\Lambda_n$, $z_n$, and $\overline z$ respectively to find
$$ 0\leq f_{\Lambda_n}(\beta,\overline z) - f_{\Lambda_n}(\beta, z_n) \leq \frac{\rho_{\rm tot}}{\beta}\left(1-\frac{\overline z}{z_n}\right).$$ 
Take the $\liminf$ on all sides and use Part~\ref{BECthm1} of Theorem~\ref{thm:BEC} to finish.

Similarly for~\ref{freeenergy2} we replace $L$, $z_1$, and $z_2$ in~(\ref{eq:freeenergyineq}) with $\Lambda_n$, $z_n$, and $1$ respectively. By virtue of Part~\ref{BECthm2} of Theorem~\ref{thm:BEC}, we can take the limit this time to deduce the result.
\qed \end{proof}

Part~\ref{freeenergy2} of the preceding theorem says that in the condensation regime the thermodynamic free energy density is, in a proper sense, independent of the particle density $\rho_{\rm tot}$, or of the (nonzero) condensate density. This is because in the thermodynamic limit, the condensate resides in the zero-energy eigenfunction and does not contribute to the free energy.  For example, in the sense of Ces\`{a}ro averaging, we have
$$ \lim_{n\to\infty} \frac{1}{n} \sum_{k=1}^n f_{\Lambda_k}(\beta,z_k) = \lim_{n\to\infty} \frac{1}{n} \sum_{k=1}^n f_{\Lambda_k}(\beta,1) = -\frac{1}{(4\pi)^{\frac{d_s}{2}} \beta^{\frac{d_s}{2}+1}}\zeta\left(\frac{d_s}{2}+1\right).$$
On the other hand, in the liminf sense
$$ \liminf_{n\to\infty}f_{\Lambda_n}(\beta,z_n) = \liminf_{n\to\infty} f_{\Lambda_n}(\beta,1) = -\frac{1}{(4\pi)^{\frac{d_s}{2}} \beta^{\frac{d_s}{2}+1}}\frac{\max G_0}{\hat{G}_{0,0}} {\rm Li}_{\frac{d_s}{2}+1}(1). $$
Combined with the result in Part~\ref{freeenergy1}, one may check that $\displaystyle \liminf_{n\to\infty} f_{\Lambda_n}(\beta,z_n)$, regarded as a function of $\rho_{\rm tot} \in (0,\infty)$, is non-analytic at $\overline\rho_c(\beta)$, since the left derivative (a nonzero constant times $\zeta\left(\frac{d_s}{2}\right)$) does not equal the right derivative ($0$).

\subsection{Connection to Brownian motion}

\begin{theorem}
Let $F$ be a GSC. Then the following are equivalent in the unbounded carpet $F_\infty$:
\begin{enumerate}
\item The spectral dimension $d_s>2$.
\item Brownian motion is transient.
\item Bose-Einstein condensation occurs for a low-temperature, high-density ideal Bose gas.
\end{enumerate}
\label{thm:equivBEC}
\end{theorem}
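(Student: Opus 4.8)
The plan is to establish the chain of equivalences $(1)\Leftrightarrow(2)\Leftrightarrow(3)$, with $(1)$ serving as the hub. The equivalence $(1)\Leftrightarrow(2)$ is already essentially quoted in the text: by~\cite{BB99}*{Theorem 8.1}, Brownian motion on the unbounded carpet $F_\infty$ is transient if and only if $d_s(F)>2$ (equivalently $\rho_F<1$), and recurrent if and only if $d_s(F)\leq 2$. So the first step is simply to invoke this dichotomy, perhaps with a one-line reminder of why the threshold is $d_s=2$: the on-diagonal heat kernel bound $p_t(x,x)\asymp t^{-d_s/2}$ from~(\ref{eq:HKE}) makes the Green's function $\int_0^\infty p_t(x,x)\,dt$ finite precisely when $d_s>2$, which is the analytic characterization of transience.

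The substantive content is $(1)\Leftrightarrow(3)$, and here I would quote the work already done in this section. For $(1)\Rightarrow(3)$: assuming $d_s>2$, Proposition~\ref{prop:uppdensity} gives a \emph{finite} upper critical density $\overline\rho_c(\beta)<\infty$ (finiteness coming from convergence of $\zeta(d_s/2)$), and then Theorem~\ref{thm:BEC}\ref{BECthm2} shows that for any fixed total density $\rho_{\rm tot}>\overline\rho_c(\beta)$ the excess $\rho_{\rm tot}-\overline\rho_c(\beta)$ condenses into the ground-state eigenfunction along the exhaustion $\Lambda_n\nearrow F_\infty$; this is exactly BEC at positive temperature. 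For the converse $(3)\Rightarrow(1)$, I would argue the contrapositive: if $d_s\leq 2$, then Proposition~\ref{prop:uppdensity} gives $\overline\rho_c(\beta)=\underline\rho_c(\beta)=\infty$, i.e. $\rho_{\Lambda_n}(\beta,z)$ can be made to exhaust all of $(0,\infty)$ with $z<1$ bounded away from $1$ for each finite $n$ (using monotonicity and convexity of $z\mapsto\rho_L(\beta,z)$ and the fact that it maps $(0,1)$ onto $(0,\infty)$ when the excited-state sum diverges). Hence the fugacity $z_n$ solving $\rho_{\Lambda_n}(\beta,z_n)=\rho_{\rm tot}$ stays strictly below $1$, the ground-state occupation density $\rho^0_{\Lambda_n}(\beta,z_n)=[V_s(\Lambda_n)]^{-1}(z_n^{-1}e^{\beta E_0(\Lambda_n)}-1)^{-1}$ is a bounded quantity divided by $V_s(\Lambda_n)\to\infty$, and so no macroscopic condensate forms: BEC fails at every positive temperature. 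This closes the loop.

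The main obstacle, such as it is, lies in the $(3)\Rightarrow(1)$ direction when $d_s=2$: the recurrent borderline case. One must check that even though $d_s=2$ is the critical dimension, the divergence $\overline\rho_c(\beta)=\infty$ from Proposition~\ref{prop:uppdensity} genuinely precludes condensation, i.e. that the ground-state occupation remains microscopic. This follows because the divergence of $\sum_m m^{-d_s/2}G_0(\cdots)$ at $d_s=2$ forces $z_n\to 1$ only \emph{at the rate} dictated by $\rho_{\Lambda_n}(\beta,1)=\infty$ in the limit, yet for each finite $n$ the gap $1-z_n e^{-\beta E_0(\Lambda_n)}$ stays positive and $\rho^0_{\Lambda_n}(\beta,z_n)=o(1)$ because $V_s(\Lambda_n)\to\infty$ dominates; one should verify the quantitative rates (e.g. $z_n = 1 - O(n^{-1})$-type estimates against $V_s(\Lambda_n)=(4\pi)^{d_s/2}\hat G_{0,0}\,l_F^{nd_s}$) to be safe. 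Everything else is a direct citation of results already proved in Sections~\ref{sec:SC} and~\ref{sec:atoms} together with~\cite{BB99}.
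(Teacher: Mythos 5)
Your proposal is correct and follows essentially the same route as the paper, which likewise disposes of $(1)\Leftrightarrow(2)$ by citing \cite{BB99}*{Theorem 8.1} and of $(1)\Leftrightarrow(3)$ by appealing to the density computation (Lemma~\ref{lem:rhoc}/Proposition~\ref{prop:uppdensity}) together with Theorem~\ref{thm:BEC}. Your extra elaboration of the contrapositive $(3)\Rightarrow(1)$, including the $d_s=2$ borderline where the infinite critical density keeps $z_n$ bounded away from $1$ and hence the ground-state occupation microscopic, is a sound filling-in of details the paper leaves implicit rather than a different argument.
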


\begin{proof}
For $F_\infty$, the equivalence between (1) and (2) is given in~\cite{BB99}*{Theorem 8.1}. The equivalence between (1) and (3) is implied by Lemma~\ref{lem:rhoc} and Theorem~\ref{thm:BEC}.
\qed \end{proof}

On general graphs, the connection between spectral dimension and BEC has been mentioned in previous literature~\cite{BurioniCassiVezzani01,BurioniCassiVezzani02}. The equivalence between transient Brownian motion and BEC, on the other hand, appears to be a more general and robust criterion. To the author's knowledge, outside of $\mathbb{R}^d$ and $\mathbb{Z}^d$, an explicit statement about the latter has only been made by~\cite{BECMatsui, FidaleoGuidoIsola, Fidaleo} in the context of inhomogeneous graphs. In these works, the authors considered the adjacency operator, as opposed to the Laplacian, on various exotic graphs (\emph{e.g.} comb graphs, star graphs) and density-zero perturbations of Cayley trees. (We will discuss their model briefly in Section~\ref{subsec:BoseHubbard}.) They gave conditions for BEC by analyzing the spectral properties of the adjacency operator, in particular its Perron-Frobenius eigenvector, and showed that the existence of "hidden spectrum" implies the finiteness of the critical density. These developments lead us to believe that the deep connection between Brownian motion and BEC holds on very general spaces, including graphs, manifolds, and fractals. A careful study of KMS states on these spaces may uncover more details.

\begin{table}
\begin{tabular}{m{2.7cm}cccc}
&\includegraphics[width=0.14\textwidth]{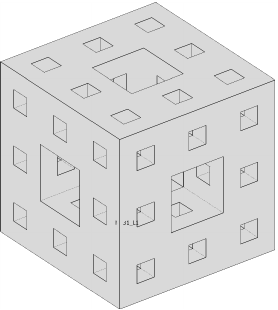}&
\includegraphics[width=0.14\textwidth]{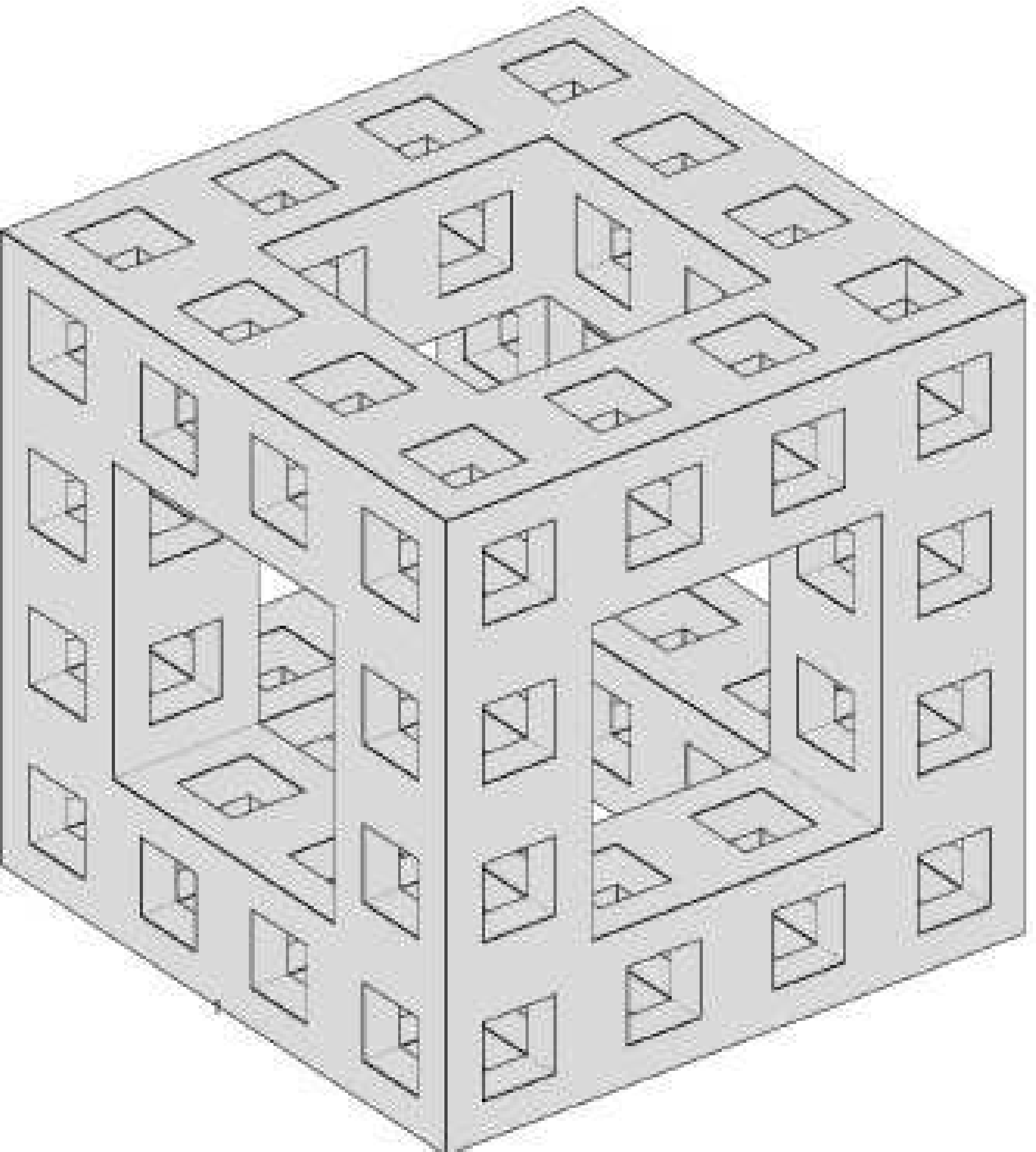}&
\includegraphics[width=0.14\textwidth]{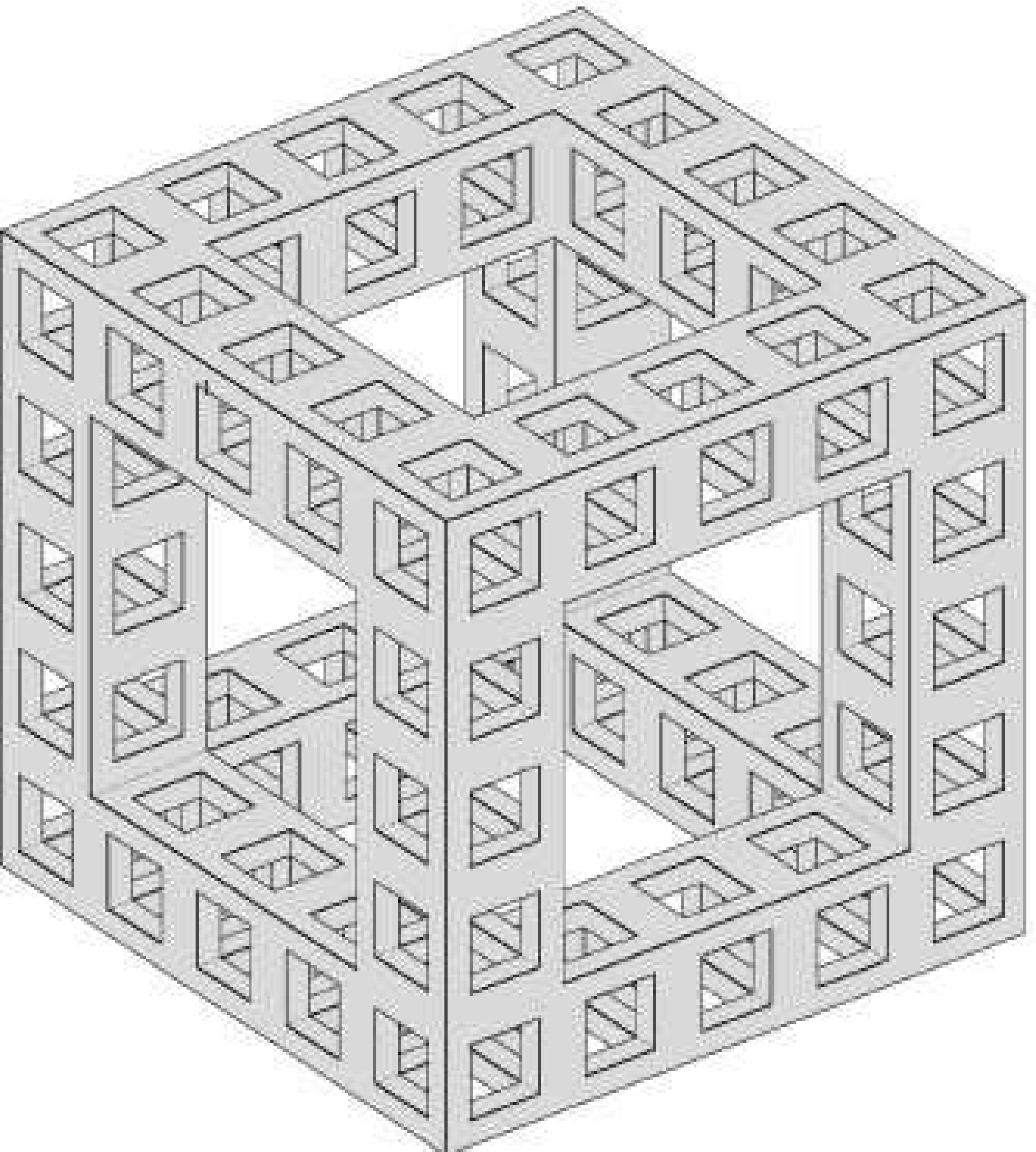}&
\includegraphics[width=0.14\textwidth]{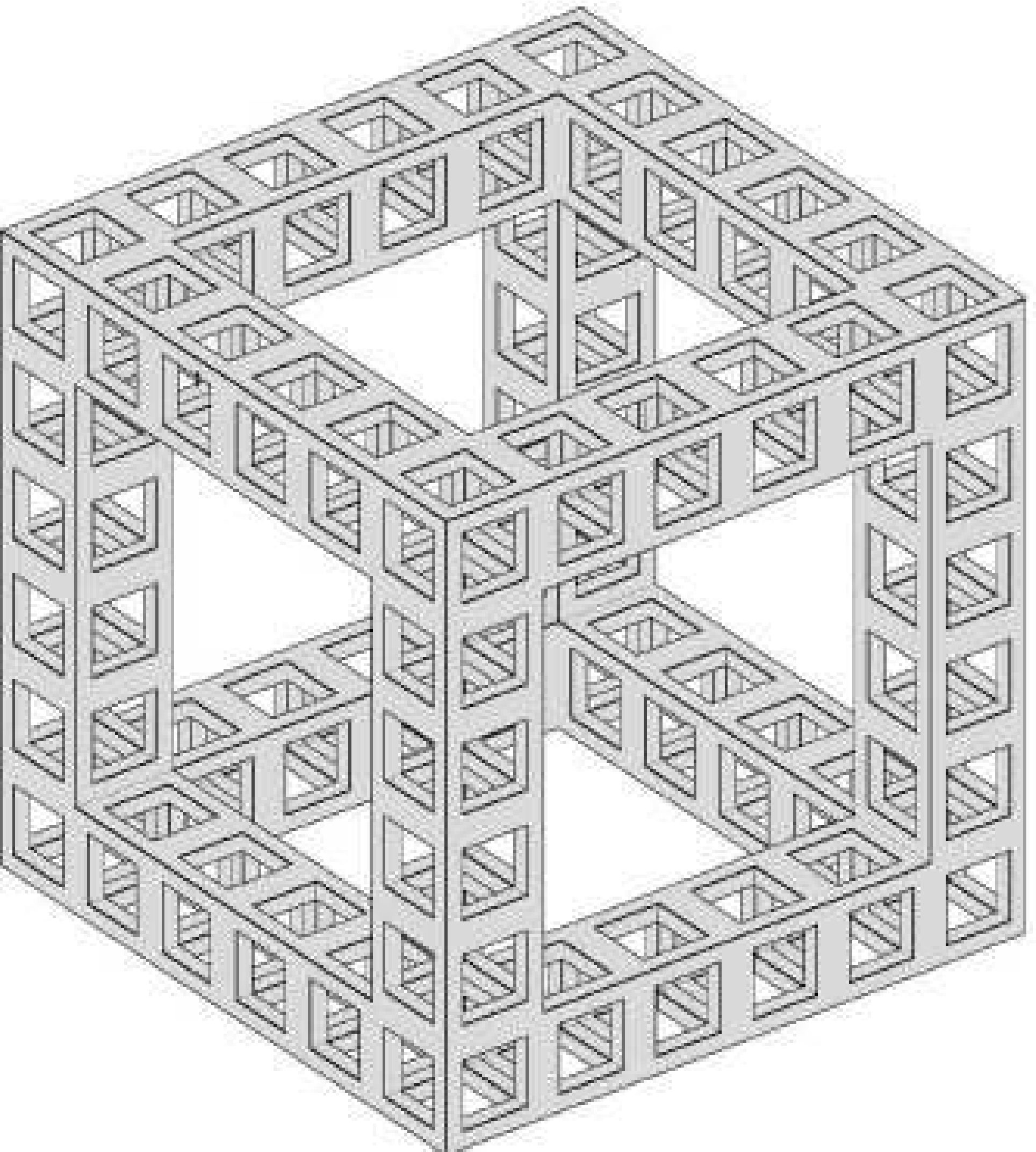}\\ 
& $MS(3,1)$ & $MS(4,2)$ & $MS(5,3)$ & $MS(6,4)$ \\ \hline
$d_h$ & $\log_3 20 \approx 2.73$ & $\log_4 32 = 2.5$ & $\log_5 44 \approx 2.35$ & $\log_6 56 \approx 2.25$ \\ \hline
Rigorous bounds on $d_s$~\begin{scriptsize}\cite{BB99}\end{scriptsize} & $2.21 \sim 2.60$ & $2.00\sim 2.26$ & ${\color{red} 1.89}\sim {\color{blue} 2.07}$  & $1.82 \sim 1.95$ \\ \hline
Numerical $d_s$~\begin{scriptsize}[CS]\end{scriptsize}& $2.51...$ & - & ${\color{blue} 2.01...}$ &  -\\ \hline
BEC exists? & Yes & Yes & {\color{blue} Yes (?)} & No \\
\vspace{5pt}
\end{tabular}
\caption{A sequence of Menger sponges whose spectral dimensions cross over $2$. The criterion for existence of BEC in an ideal Bose gas is determined by Theorem~\ref{thm:equivBEC}.}
\label{tab:sponge}
\end{table}

We end this section by giving specific examples where one may probe BEC in non-integer dimensions. Recall that if $F$ is a GSC embedded in $\mathbb{R}^d$, then $1 \leq d_s(F)<d_h(F) < d$. Also, an unbounded GSC whose cross-section contains a full copy of $\mathbb{R}_+^2$ supports transient Brownian motion, \emph{i.e.,} the carpet has $d_s>2$~\cite{BB99}. So if one's goal is to find a GSC whose spectral dimension is close to $2$, the most natural candidates are the Menger sponges in $\mathbb{R}^3$. For example, consider the family of Menger sponges $\{MS(a,a-2)\}_{a\geq 3}$, the first four members of which are shown in Table~\ref{tab:sponge}. All such sponges have $d_h>2$, but only $MS(3,1)$, $MS(4,2)$, and possibly $MS(5,3)$ have $d_s>2$, in which condensation of ideal Bose gas is possible. In particular $MS(5,3)$ lies just a hair above the critical dimension $2$, as our numerics show. It would be very interesting if one could find a one-parameter family of fractal sponges whose spectral dimensions vary across $2$, and study the onset of BEC in such spaces.

\section{Blackbody radiation in Sierpinski carpets} \label{sec:photon}

In the next two sections we discuss the Bose gas with $H=\sqrt{-\Delta}$ and $\mu=0$. This Hamiltonian corresponds to that of a gas of massless, non-interacting, relativistic particles, the prime example being photons. Here we have set $\mu=0$ because the particle number cannot be kept finite in realistic experiments. The grand canonical partition function $\Xi_\beta$ in this case reads
\begin{equation}\log \Xi_\beta= -{\rm Tr}_{\mathcal{H}_1} \log(1-e^{-\beta \underline{D}}) ,\end{equation}
where $\underline{D}=\sqrt{-\underline\Delta}$ is the dimensionful Dirac operator, carrying units of inverse length. 

\begin{proposition}
The grand canonical partition function $\Xi_\beta$ for an ideal photon gas in a bounded domain $F$ is given by
\begin{equation}
\log\Xi_\beta = \frac{1}{\pi i}\int_{\sigma-i\infty}^{\sigma+i\infty} \beta^{-2t} \Gamma(2t) \zeta(2t+1)\zeta_{\underline{\Delta}}(t,0) dt, \quad \sigma > \frac{d_s}{2}.
\label{eq:PhotonPF}
\end{equation}
\end{proposition}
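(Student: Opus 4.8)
The plan is to mirror the proof of Proposition~\ref{prop:Bosepart} almost verbatim, the only new feature being that the Dirac operator $\underline D=\sqrt{-\underline\Delta}$ carries a square root, which forces the Mellin variable to appear doubled. First I would observe that on a bounded domain with Dirichlet conditions on $\partial_o F$ one has $E_0=\inf{\rm Spec}(-\underline\Delta)>0$, so $\underline D$ has pure point spectrum $\{\sqrt{E_i}\}$ bounded away from $0$, $e^{-\beta\underline D}$ is trace class, and hence $\log\Xi_\beta=-{\rm Tr}_{\mathcal H_1}\log(1-e^{-\beta\underline D})$ is well defined. Expanding $-\log(1-x)=\sum_{n\ge1}x^n/n$ and applying functional calculus to $x=e^{-\beta\underline D}$ gives $\log\Xi_\beta={\rm Tr}_{\mathcal H_1}\sum_{n=1}^\infty\frac1n e^{-n\beta\underline D}$.

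Next I would insert the Cahen--Mellin integral $e^{-a}=\frac{1}{2\pi i}\int_{c-i\infty}^{c+i\infty}a^{-s}\Gamma(s)\,ds$ (valid for $c>0$) with $a=n\beta\underline D$, and then substitute $s=2t$: since $ds=2\,dt$ and the line ${\rm Re}(s)=c$ becomes ${\rm Re}(t)=\sigma:=c/2$, this turns the contour factor $\frac{1}{2\pi i}\int ds$ into $\frac{1}{\pi i}\int dt$, while $\underline D^{-2t}=(-\underline\Delta)^{-t}$. One thereby obtains
\[
\log\Xi_\beta={\rm Tr}_{\mathcal H_1}\,\frac{1}{\pi i}\sum_{n=1}^\infty\frac1n\int_{\sigma-i\infty}^{\sigma+i\infty}(n\beta)^{-2t}\,\Gamma(2t)\,(-\underline\Delta)^{-t}\,dt.
\]
Interchanging the trace, the $n$-summation and the $t$-integral, and then summing $\sum_{n\ge1}n^{-1}(n\beta)^{-2t}=\beta^{-2t}\zeta(2t+1)$ together with ${\rm Tr}_{\mathcal H_1}(-\underline\Delta)^{-t}=\zeta_{\underline\Delta}(t,0)$, yields exactly~(\ref{eq:PhotonPF}).

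The only substantive point — and the main obstacle — is to justify that interchange by Fubini--Tonelli. On the vertical line ${\rm Re}(t)=\sigma$, Stirling's formula gives $|\Gamma(2\sigma+2i\tau)|\le C|\tau|^{2\sigma-1/2}e^{-\pi|\tau|}$ for large $|\tau|$, so $\int_{\mathbb R}|\Gamma(2\sigma+2i\tau)|\,d\tau<\infty$; moreover $|(n\beta)^{-2t}|=(n\beta)^{-2\sigma}$, and since $(-\underline\Delta)^{-t}$ is a complex power of a positive operator, $|{\rm Tr}_{\mathcal H_1}(-\underline\Delta)^{-t}|\le{\rm Tr}_{\mathcal H_1}(-\underline\Delta)^{-\sigma}=\zeta_{\underline\Delta}(\sigma,0)$. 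Hence the triple sum/integral is dominated by $\zeta_{\underline\Delta}(\sigma,0)\cdot\big(\sum_{n\ge1}n^{-(2\sigma+1)}\big)\cdot\big(\int_{\mathbb R}|\Gamma(2\sigma+2i\tau)|\,d\tau\big)$, which is finite precisely when $\zeta_{\underline\Delta}(\sigma,0)$ converges, i.e. $\sigma>d_s/2$ (the abscissa of convergence of $\zeta_{\underline\Delta}(\cdot,0)$, cf. Section~\ref{sec:speczeta}), and $\sum_n n^{-(2\sigma+1)}$ converges, i.e. $\sigma>0$; since $d_s\ge1$ the binding constraint is $\sigma>d_s/2$, which is exactly the stated hypothesis. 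With the interchange justified the identity follows, and as both sides are analytic in $\beta\in(0,\infty)$ it extends there by analytic continuation.
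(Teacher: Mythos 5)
Your proposal is correct and follows exactly the route the paper intends: the paper omits the proof precisely because it is ``essentially identical'' to that of Proposition~\ref{prop:Bosepart}, and your argument reproduces that proof with the substitution $s=2t$ accounting for $\underline D=\sqrt{-\underline\Delta}$, which yields the factors $\frac{1}{\pi i}$, $\Gamma(2t)$, $\zeta(2t+1)$ and $\zeta_{\underline\Delta}(t,0)$. Your explicit Fubini justification (Stirling decay of $\Gamma$ on vertical lines, $|{\rm Tr}(-\underline\Delta)^{-t}|\le\zeta_{\underline\Delta}(\sigma,0)$, and $\sigma>d_s/2$ as the binding constraint) is the same convergence criterion the paper invokes for the massive case, just spelled out in more detail.
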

The proof is essentially identical to that of Proposition~\ref{prop:Bosepart}, so we omit it.

\begin{lemma}
Let $F$ be a GSC. Then the grand canonical partition function $\Xi_\beta$ for an ideal photon gas confined in $LF$ is given by
\begin{equation}
\log\Xi_{L,\beta}=2\left(\sum_{k=0}^{d} \sum_{p\in\mathbb{Z}}\right)' \left(\frac{L}{\beta}\right)^{d_{k,p}}\frac{\Gamma\left(d_{k,p}\right)}{\Gamma\left(d_{k,p}/2\right)} \zeta\left(d_{k,p}+1\right) \hat{G}_{k,p} + \frac{\beta}{2} \zeta_{\underline\Delta_L}\left(-\frac{1}{2},0\right),
\label{eq:GSCPhotonPF}
\end{equation} 
where $\left(\sum_{k=0}^d \sum_{p\in\mathbb{Z}}\right)'$ means the double sum excluding the term $(k,p)=(d,0)$.
\label{lem:photonPF}
\end{lemma}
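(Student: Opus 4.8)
The plan is to evaluate the contour integral in~(\ref{eq:PhotonPF}), with $\underline\Delta$ replaced by the length-$L$ Laplacian $\underline\Delta_L$, by pushing the line $\{{\rm Re}(t)=\sigma\}$ to the left and summing residues, in direct parallel with the proof of Lemma~\ref{lem:rhoc}. The first step is to record the scaling identity $\zeta_{\underline\Delta_L}(t,0)=L^{2t}\zeta_\Delta(t,0)$, which is valid precisely because the spectral parameter is $\gamma=0$ so that $-\underline\Delta_L=L^{-2}(-\Delta)$ merely rescales the eigenvalues. By Corollary~\ref{cor:polesresidues}(2) this shows that $t\mapsto\zeta_{\underline\Delta_L}(t,0)$ is meromorphic with at most simple poles at $t=d_{k,p}/2$ of residue $L^{d_{k,p}}\hat G_{k,p}/\Gamma(d_{k,p}/2)$, is analytic for ${\rm Re}(t)<0$, and vanishes at every $t\in-\mathbb N$ by Corollary~\ref{cor:polesresidues}(3).

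Next I would catalogue the poles of the integrand $t\mapsto\beta^{-2t}\Gamma(2t)\zeta(2t+1)\zeta_{\underline\Delta_L}(t,0)$ lying to the left of ${\rm Re}(t)=\sigma$. There are three sources: the poles $t=d_{k,p}/2$ of $\zeta_{\underline\Delta_L}(\cdot,0)$, all of which satisfy ${\rm Re}(t)=d_k/d_w\le d_h/d_w<\sigma$ since $d_k\le d_h$; the poles $t=-m/2$, $m\in\mathbb N_0$, of $\Gamma(2t)$, with residue $(-1)^m/(2\,m!)$; and the simple pole $t=0$ of $\zeta(2t+1)$. The key observation is that almost all of the $\Gamma(2t)$ poles are removed: at $t=-n$, $n\in\mathbb N$, the simple zero of $\zeta_{\underline\Delta_L}(\cdot,0)$ cancels the simple pole of $\Gamma(2t)$, and at $t=-(2j+1)/2$, $j\ge1$, the trivial zero $\zeta(2t+1)=\zeta(-2j)=0$ does the same. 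What survives, besides the $d_{k,p}/2$, is only $t=-\tfrac12$, where $\zeta(2t+1)=\zeta(0)=-\tfrac12\ne0$, and $t=0$, which is a \emph{double} pole because $\Gamma(2t)$ and $\zeta(2t+1)$ are simultaneously singular there while $\zeta_{\underline\Delta_L}(0,0)=\zeta_\Delta(0,0)$ is finite.

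At $t=d_{k,p}/2$ with $(k,p)\ne(d,0)$ the two Gamma factors telescope, so the residue of the integrand is $\left(\tfrac L\beta\right)^{d_{k,p}}\tfrac{\Gamma(d_{k,p})}{\Gamma(d_{k,p}/2)}\zeta(d_{k,p}+1)\hat G_{k,p}$; at $t=-\tfrac12$ one gets $\beta\cdot(-\tfrac12)\cdot(-\tfrac12)\cdot\zeta_{\underline\Delta_L}(-\tfrac12,0)=\tfrac\beta4\zeta_{\underline\Delta_L}(-\tfrac12,0)$. Multiplying by the prefactor $\tfrac1{\pi i}\cdot2\pi i=2$ reproduces exactly the two displayed terms of the lemma. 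The would-be $(k,p)=(d,0)$ contribution is precisely the double pole at $t=0$; expanding $\Gamma(2t)\zeta(2t+1)=\tfrac1{4t^2}+O(1)$ (the $t^{-1}$ cross-term cancels) shows it produces only a term of order $\log L$, subdominant to all retained volume/boundary terms and, after division by the spectral volume $V_s(L)\asymp L^{d_s}$ in Sections~\ref{sec:photon}--\ref{sec:Casimir}, absorbed into the $o(1)$. I would therefore omit it, which is the meaning of the prime on the double sum.

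The main obstacle is the analytic justification of the contour shift: one must move ${\rm Re}(t)$ to $-\infty$ along lines avoiding the poles — say ${\rm Re}(t)=-N-\tfrac14$, $N\to\infty$ — show the vertical integrals tend to $0$, and interchange the residue sum with the limit. Here the $\Gamma(2t)$ factor is decisive: by Stirling it decays exponentially in $|{\rm Im}(t)|$, and off its poles $|\Gamma(2t)|\to0$ as ${\rm Re}(t)\to-\infty$, whereas $\zeta(2t+1)$ grows at most polynomially and $\zeta_\Delta(t,0)$ is controlled on these lines through the Mittag--Leffler representation~(\ref{eq:ML}), whose $p$-sum is uniformly summable off the poles by feature~\ref{pt:bd}. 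The same Gamma suppression forces the final double sum over $(k,p)$ to converge absolutely — in fact geometrically in $|p|$, since $|\Gamma(d_{k,p})/\Gamma(d_{k,p}/2)|$ is exponentially small in $|p|$ — so the residue calculus is unambiguous and the lemma follows.
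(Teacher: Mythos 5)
Your residue bookkeeping is correct and follows essentially the same route as the paper: evaluate~(\ref{eq:PhotonPF}) by residues, note that the poles of $\Gamma(2t)$ at the negative integers are killed by Corollary~\ref{cor:polesresidues}(3) and those at $-\tfrac{3}{2},-\tfrac{5}{2},\dots$ by the trivial zeros of $\zeta(2t+1)$, so that only the spectral poles $t=d_{k,p}/2$ and the point $t=-\tfrac12$ survive; your residues at those points, including the factor $2$ from the $\tfrac{1}{\pi i}$ prefactor and the scaling $\zeta_{\underline\Delta_L}(t,0)=L^{2t}\zeta_\Delta(t,0)$, reproduce both displayed terms exactly as in the paper's proof.

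The one substantive divergence is at $t=0$. The paper simply discards this point, asserting that $d_{d,0}=0$ "has zero residue" (which is true of the spectral zeta factor alone, whose residue $\hat G_{d,0}/\Gamma(0)$ vanishes), while you correctly observe that the full integrand has a \emph{double} pole there, since $\Gamma(2t)\zeta(2t+1)=\tfrac{1}{4t^2}+O(1)$, with residue $\tfrac14\frac{d}{dt}\bigl[(L^2/\beta^2)^t\zeta_\Delta(t,0)\bigr]_{t=0}$, i.e.\ a generically nonzero contribution $\log(L/\beta)\,\zeta_\Delta(0,0)+\tfrac12\zeta'_\Delta(0,0)$ to $\log\Xi_{L,\beta}$. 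Be aware, though, that your resolution --- declaring this term ``absorbed into the $o(1)$'' --- proves the lemma only up to an additive $O(\log(L/\beta))$ correction, whereas~(\ref{eq:GSCPhotonPF}) is stated as an exact identity and the prime on the sum only removes the $(k,p)=(d,0)$ summand; so you have in effect exposed an imprecision in the statement (harmless for the later asymptotics, where everything is divided by $V_s(L)\asymp L^{d_s}$) rather than proved it verbatim. Finally, in your contour-shift justification the claim that $\zeta(2t+1)$ grows ``at most polynomially'' fails as ${\rm Re}(t)\to-\infty$ (by the functional equation it grows factorially in $|{\rm Re}(t)|$), and the Mittag--Leffler representation~(\ref{eq:ML}) carries a $1/\Gamma(t)$ factor that also blows up factorially; what actually saves the shift is that the decay of $\Gamma(2t)$ beats the growth of $1/\Gamma(t)$, e.g.\ via $\Gamma(2t)/\Gamma(t)=\pi^{-1/2}4^{t}\,\Gamma(t+\tfrac12)$. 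The paper offers no justification of the shift at all, so this looseness does not put you behind its proof, but the estimate should be stated correctly.
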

\begin{proof}
We evaluate~(\ref{eq:PhotonPF}) via a residue calculation. The first term of~(\ref{eq:GSCPhotonPF}) comes from the residues at the poles of $\zeta_{\underline\Delta}(s,0)$, though we exclude $d_{d,0}=0$ because it has zero residue (as well as being a simple pole of $\Gamma(2t)$ and of $\zeta(2t+1)$ simultaneously). Then there are contributions coming from the nonzero poles of $\Gamma(2t)$, which are $-\mathbb{N}/2$. But $\zeta_{\underline\Delta}(t,0)=0$ when $t \in -\mathbb{N}$ by Corollary~\ref{cor:polesresidues}, while $\zeta(2t+1)=0$ when $t\in -\left(\mathbb{N}+\frac{1}{2}\right)$. Only the residue at $t=-1/2$ contributes and produces the second term of~(\ref{eq:GSCPhotonPF}).
\qed \end{proof}

The first term of ({\ref{eq:GSCPhotonPF}), whose summands are all proportional to some nonnegative power of $L/\beta$, represent thermal (positive-temperature) contributions from the various codimension-$k$ spectral volumes. On the other hand, the second term, which is linearly proportional to $\beta/L$, has no thermal origins. To tackle this issue, let us add to~(\ref{eq:GSCPhotonPF})
\begin{equation}\log\Xi_{\rm Cas} := -\frac{\beta}{2} \zeta_{\underline{\Delta}}\left(-\frac{1}{2},0\right),\end{equation}
so we retain only the thermal contributions. Notice that this effectively adds a (free) energy
\begin{equation} E_{\rm Cas} = -\beta^{-1}\log \Xi_{\rm Cas} = \frac{1}{2}\zeta_{\underline\Delta} \left(-\frac{1}{2},0\right). \label{eq:CasimirEnergy}\end{equation}
to the existing vacuum energy $H_0$, which was $0$ by default.~(\ref{eq:CasimirEnergy}) is the renowned \emph{Casimir energy} representing the energy of vacuum fluctuations. It is half the regularized sum of the eigenvalues of the Dirac operator, where regularization should be understood as the analytic continuation of $\zeta_{\underline\Delta}(s,0)$ to $s=-1/2$. By Theorem~\ref{thm:meroext}, such continuation is always possible on GSCs. The Casimir energy has measurable consequences and will be the focus of the next section.

The most important phenomenon associated with thermal photon gas is \emph{blackbody radiation}. Here we are interested in the energy per spectral volume
$$\mathcal{E}(L,\beta) := \frac{\omega_{L,\beta,0}({\bf H})}{V_s(L)} = -\frac{1}{V_s(L)}\frac{\partial}{\partial \beta}\log\Xi_{L,\beta,0},$$
as well as the isotropic pressure
$$P(L,\beta) := \frac{\partial}{\partial V_s(L)}F_{L,\beta,0} = -\frac{1}{\beta}\frac{\partial}{\partial V_s(L)} \log\Xi_{L,\beta,0}.$$
\begin{proposition}
Let $F$ be a GSC. As $L\to\infty$, the mean energy density and isotropic pressure of a thermal photon gas inside $LF$ are
\begin{eqnarray}
\mathcal{E}(L,\beta) &=& \beta^{-(d_s+1)} H_1\left(-\log\left(\frac{\beta}{2L}\right)\right) +o (1), \label{eq:photonenergy}\\
P(L,\beta) &=& \frac{\mathcal{E}(L,\beta)}{d_s}, \label{eq:photonpressure}
\end{eqnarray}
where $H_1(\cdot)$ is $\frac{1}{2}\log R$-periodic with
\begin{eqnarray}
\nonumber H_1(x) &=&  \frac{d_s}{\pi^{\frac{d_s+1}{2}}}\Gamma\left(\frac{d_s+1}{2}\right) \zeta(d_s+1) \label{eq:H1}  \\
&& + \frac{1}{\pi^{\frac{d_s+1}{2}} \hat{G}_{0,0}} \sum_{p \in \mathbb{Z} \backslash \{0\}} \left[e^{\frac{4\pi p i x}{\log R}} \Gamma\left(\frac{d_{0,p}+1}{2}\right) \zeta(d_{0,p}+1) d_{0,p} \hat{G}_{0,p}\right].
\end{eqnarray}
\label{prop:BBEnergy}
\end{proposition}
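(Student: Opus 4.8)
The plan is to differentiate the closed form for $\log\Xi_{L,\beta}$ from Lemma~\ref{lem:photonPF}, discard the non-thermal piece, and read off the leading behavior as $L\to\infty$. First I would note that the linear-in-$\beta$ Casimir term $\tfrac{\beta}{2}\zeta_{\underline\Delta_L}(-\tfrac12,0)$ in~(\ref{eq:GSCPhotonPF}) satisfies $\zeta_{\underline\Delta_L}(-\tfrac12,0)=L^{-1}\zeta_\Delta(-\tfrac12,0)$, so after applying $-V_s(L)^{-1}\partial_\beta$ with $V_s(L)=(4\pi)^{d_s/2}\hat{G}_{0,0}L^{d_s}$ it contributes only $O(L^{-(d_s+1)})=o(1)$; this is exactly the passage to the thermal $\log\Xi_{L,\beta,0}$. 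It then remains to differentiate the first (double-sum) term of~(\ref{eq:GSCPhotonPF}) term by term in $\beta$, which sends each factor $(L/\beta)^{d_{k,p}}$ to $-d_{k,p}\beta^{-1}(L/\beta)^{d_{k,p}}$, and to divide by $V_s(L)$.

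Both the legitimacy of this term-by-term differentiation and the tail estimates below rest on one uniform bound. By feature~(\ref{pt:bd}) following Theorem~\ref{prop:HKEst} one has $|\hat{G}_{k,p}|\le C$; by the Legendre duplication formula $2\Gamma(d_{k,p})/\Gamma(d_{k,p}/2)=\pi^{-1/2}2^{d_{k,p}}\Gamma\!\left(\tfrac{d_{k,p}+1}{2}\right)$ with $|2^{d_{k,p}}|=2^{2d_k/d_w}$ bounded; and by Stirling $\big|\Gamma\!\left(\tfrac{d_{k,p}+1}{2}\right)\big|$ decays like $e^{-\pi|p|/\log R}$ as $|p|\to\infty$, while $\zeta$ is bounded on the vertical line ${\rm Re}(s)=1+2d_k/d_w>1$. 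Hence the series and its $\beta$-derivative converge absolutely and locally uniformly in $(\beta,L)$. With this in hand I would isolate the leading order: a term carries $(L/\beta)^{d_{k,p}}$ with ${\rm Re}(d_{k,p})=2d_k/d_w$, and since $d_k<d_h=d_0$ for every $k\ge 1$, after division by $L^{d_s}$ each such term is $O(L^{2d_k/d_w-d_s})$ uniformly in $p$, hence $o(1)$. Only the $k=0$ terms survive.

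For the surviving terms the calculation is bookkeeping: writing $d_{0,p}=d_s+\tfrac{4\pi ip}{\log R}$ gives $\beta^{-1}(L/\beta)^{d_{0,p}}/L^{d_s}=\beta^{-(d_s+1)}(L/\beta)^{4\pi ip/\log R}$, and combining the factor $2^{d_{0,p}}=2^{d_s}2^{4\pi ip/\log R}$ from the duplication formula turns $(L/\beta)^{4\pi ip/\log R}2^{4\pi ip/\log R}$ into $(2L/\beta)^{4\pi ip/\log R}$, while the leftover constant collapses via $2^{d_s}/\big((4\pi)^{d_s/2}\sqrt\pi\big)=\pi^{-(d_s+1)/2}$. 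Setting $x=-\log(\beta/2L)$ and recognizing the sum over $p$ as a Fourier series with frequencies $4\pi p/\log R$, hence of period $\tfrac12\log R$, yields~(\ref{eq:photonenergy}): the $p=0$ term is the Euclidean-type constant $\tfrac{d_s}{\pi^{(d_s+1)/2}}\Gamma\!\left(\tfrac{d_s+1}{2}\right)\zeta(d_s+1)$ and the $p\ne 0$ terms form the oscillatory correction, i.e. precisely the stated $H_1$.

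The pressure identity follows with no new estimates. Writing $P(L,\beta)$ via its $V_s$-derivative and using $\partial_{V_s}L=L/(d_sV_s)$, one gets $\partial_{V_s}(L/\beta)^{d_{k,p}}=\tfrac{d_{k,p}}{d_sV_s}(L/\beta)^{d_{k,p}}$, whereas $-\partial_\beta(L/\beta)^{d_{k,p}}=\tfrac{d_{k,p}}{\beta}(L/\beta)^{d_{k,p}}$ is what enters $\mathcal{E}=-V_s^{-1}\partial_\beta\log\Xi_{L,\beta,0}$; comparing the two expansions term by term gives $P(L,\beta)=\mathcal{E}(L,\beta)/d_s$, which is~(\ref{eq:photonpressure}). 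The one genuinely delicate point in the whole argument is the uniform-in-$(L,\beta,p)$ control of the double series, and the three structural features of~(\ref{eq:HK}) recorded after Theorem~\ref{prop:HKEst} --- nonpositive exponents $-d_k/d_w$, uniformly bounded Fourier coefficients, and an exponentially small remainder --- are exactly what supplies it.
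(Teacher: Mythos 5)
Your argument is correct and is essentially the paper's own proof: differentiate the thermal terms of Lemma~\ref{lem:photonPF}, use the duplication formula $\Gamma(z)/\Gamma(z/2)=\Gamma((z+1)/2)2^{z-1}/\sqrt{\pi}$ to produce the $(2L/\beta)^{4\pi ip/\log R}$ structure, and invoke Stirling decay of $\Gamma$ on vertical lines together with boundedness of $\zeta$ and of the $\hat G_{k,p}$ for summability, with only the $k=0$ terms surviving division by $V_s(L)$; the term-by-term comparison of the $\beta$- and $V_s$-derivatives giving $P=\mathcal{E}/d_s$ is likewise the intended computation. Your extra bookkeeping (the $o(1)$ disposal of the Casimir term, and the explicit uniform-in-$p$ bounds) only fills in details the paper leaves implicit; the sole nitpick is that the Stirling decay rate for $\bigl|\Gamma\bigl(\tfrac{d_{k,p}+1}{2}\bigr)\bigr|$ is $e^{-\pi^2|p|/\log R}$ rather than $e^{-\pi|p|/\log R}$, which is immaterial to convergence.
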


\begin{proof}
Straightforward differentiations of the thermal terms in~(\ref{eq:GSCPhotonPF}), combined with the gamma function identity
$$ \frac{\Gamma(z)}{\Gamma(z/2)} = \frac{\Gamma((z+1)/2)}{2^{1-z}\sqrt\pi},$$ yield the results. We note that for all $s \in \mathbb{R}$, $|\Gamma(s+it)| \sim \sqrt{2\pi}|t|^{s-1/2}e^{-\pi |t|/2}$ as $t\to\pm\infty$ by Stirling's formula, while for $s>1$ and $t\in \mathbb{R}$ we have $|\zeta(s+it)| \leq \zeta(s)$. These observations, plus the boundedness of the $|\hat{G}_{0,p}|$, are sufficient to ensure the summability of~(\ref{eq:H1}).
\qed \end{proof}

We have intentionally split $H_1$ in order to make comparisons with the Euclidean case. If only the first term in~(\ref{eq:H1}) was considered, then~(\ref{eq:photonenergy}) would be a direct extension of the Stefan-Boltzmann law from Euclidean to noninteger dimensions. But the discrete scale invariance of the GSC gives rise to log-periodic modulations depending on the temperature as well, \emph{viz.} the second term in~(\ref{eq:H1}). Note also that the $\frac{1}{2}$ in the periodicity $\frac{1}{2}\log R$ of $H_1$ comes from the Dirac operator being the square root of the Laplacian.

Meanwhile,~(\ref{eq:photonpressure}) represents the fractal version of the \emph{equation of state} for the photon gas~\cite{FractalThermo}. This is to be compared with the Euclidean version $P =\mathcal{E}/d$, where $d$ is the Euclidean dimension (also the spectral dimension, since $d_w=2$).

\section{Casimir effect in Sierpinski carpet waveguide}\label{sec:Casimir}

In this section we derive the radiation pressure on some boundary faces of a compact fractal domain, due to either vacuum (zero-temperature, $\beta=\infty$) or thermal (nonzero-temperature, $\beta < \infty$) effects. 

\subsection{The parallel-plate setup}

\begin{figure}
\centering
\includegraphics[width=0.4\textwidth]{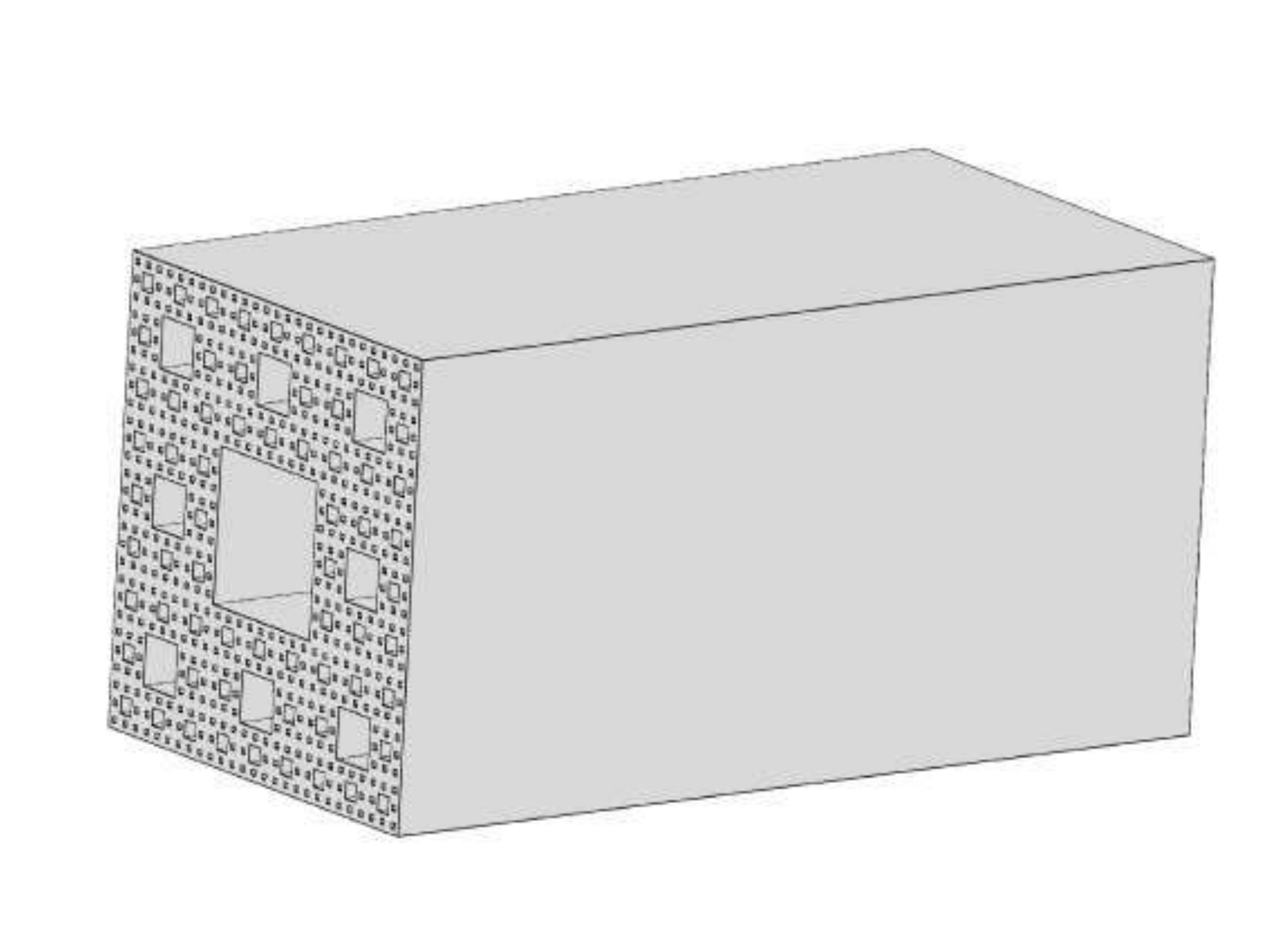}
\caption{The Sierpinski carpet waveguide $a(GSC)\times [0,b]$.}
\label{fig:SCWaveguide}
\end{figure}

The original calculation, due to H. Casimir in 1948, concerned the vacuum energy in $[0,a]^2 \times [0,b]$ for fixed $a \gg 1$, and the resultant pressure on the two parallel faces $[0,a]^2 \times \{0\}$ and $[0,a]^2 \times \{b\}$ as a function of $b$. The conventional method is to carry out a sum over the (exactly known) eigenvalues of the Dirac operator, take the $a\to\infty$ limit, then apply "zeta-regularization" to the formally divergent sum to produce a finite value. Similar calculations have also been done on other Euclidean domains or manifolds, such as spheres and cylinders.

In the spirit of the classical parallel-plate setup, we will work with a Sierpinski carpet waveguide $\Omega_{a,b}:=a(GSC) \times [0,b]$, where we take $GSC \subset \mathbb{R}^2$ to be some 2-dimensional Sierpinski carpet and fix $a\gg 1$. See Figure~\ref{fig:SCWaveguide}. The parallel plates in this case correspond to $a(GSC) \times \{0\}$ and $a(GSC) \times \{b\}$. For consistency, we will impose Dirichlet conditions on the outer boundaries $\partial ([0,a]^2) \times [0,b]$, $a(GSC) \times \{0\}$ and $a(GSC)\times\{b\}$, and Neumann conditions on the inner surfaces. The Hausdorff and spectral dimensions of the waveguide are, respectively, $d_h(\Omega) = d_h(GSC) +1 \in (2,3)$ and $d_s(\Omega) = d_s(GSC)+1 \in (2, d_h(\Omega))$.

\subsection{Casimir pressure at zero temperature}

Our main objective is to compute the spectral zeta function $\zeta_\Omega\left(-\frac{1}{2},0\right)$, or twice the Casimir energy. Because the Laplacian on $\Omega$ is the direct sum of the Laplacian on $GSC$ and the Laplacian on $[0,b]$, the associated heat kernel trace factorizes:
$$K_{\Omega_{a,b}}(t) = K_{a(GSC)}(t) K_{[0,b]}(t) = K_{GSC}\left(a^{-2}t\right) K_{[0,1]}\left(b^{-2}t\right).$$
Recalling that the spectrum of the Dirichlet Laplacian on the unit interval is $\bigcup_{j=1}^\infty \{j^2 \pi^2\}$, we find
\begin{eqnarray*}
K_{[0,1]}(t) &=& \sum_{j=1}^\infty e^{-j^2 \pi^2 t} 
=\frac{1}{2}\left[\sum_{j\in \mathbb{Z}} e^{-j^2 \pi^2 t}-1\right]\\
&=& \frac{1}{2}\left[\frac{1}{\sqrt{\pi t}} \sum_{j\in\mathbb{Z}} e^{-j^2/t}-1\right]=\frac{1}{\sqrt{4\pi t}} - \frac{1}{2} +\frac{1}{\sqrt{\pi t}}\sum_{j=1}^\infty e^{-j^2/t},
\end{eqnarray*}
where the Poisson summation formula was employed in the third equality. Then using~(\ref{eq:HK}) with $d=2$ for $K_{GSC}(t)$, we get
\begin{eqnarray}
\nonumber K_{\Omega_{a,b}} (t) &=&\left[\sum_{k=0}^2 \sum_{p\in\mathbb{Z}}\hat{G}_{k,p} \left(\frac{t}{a^2}\right)^{-d_{k,p}/2} + \mathcal{O} \left(\exp\left(-C\left(\frac{t}{a^2}\right)^{-\frac{1}{d_w-1}}\right)\right)\right] 
\\
&&\times\left[\frac{b}{\sqrt{4\pi t}} - \frac{1}{2} +\frac{b}{\sqrt{\pi t}}\sum_{j=1}^\infty e^{-j^2 b^2/t}\right].
\label{eq:WaveguideHKT}
\end{eqnarray}
The spectral zeta function for the Laplacian on $\Omega_{a,b}$ is the meromorphic extension of
\begin{equation}
\zeta_{\Omega_{a,b}}(s,\gamma) = \frac{1}{\Gamma(s)} \int_0^\infty t^s K_{\Omega_{a,b}}(t) e^{-\gamma t} \frac{dt}{t},\quad {\rm Re}(s)> \frac{d_s(\Omega)}{2}.
\label{eq:waveguidezeta}
\end{equation}

\begin{theorem}
As $a\to\infty$,
\begin{eqnarray}
\nonumber &&\Gamma\left(-\frac{1}{2}\right)\zeta_{\Omega_{a,b}}\left(-\frac{1}{2},0\right) \\ &=&\frac{1}{\sqrt{\pi}} \sum_{k=0}^2  \sum_{p\in\mathbb{Z}}\hat{G}_{k,p} \left(\frac{a}{b}\right)^{d_{k,p}} b^{-1} \zeta\left(2+d_{k,p}\right) \Gamma\left(1+\frac{d_{k,p}}{2}\right)+ o(1) .
\end{eqnarray}
\end{theorem}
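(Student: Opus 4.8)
The plan is to insert the factorized heat kernel trace~(\ref{eq:WaveguideHKT}) into the Mellin representation~(\ref{eq:waveguidezeta}), carry out the $t$-integral termwise, and extract the residue structure as $a\to\infty$, keeping only the terms that survive in that limit. Expanding the product in~(\ref{eq:WaveguideHKT}) gives three families of terms: (i) the ``volume-like'' terms $\hat{G}_{k,p}\,a^{d_{k,p}}\,t^{-d_{k,p}/2}\cdot \frac{b}{\sqrt{4\pi t}}$, which behave like $a^{d_{k,p}}t^{-(d_{k,p}+1)/2}$ and will produce the stated answer; (ii) the ``boundary-like'' terms $-\tfrac12\hat{G}_{k,p}\,a^{d_{k,p}}t^{-d_{k,p}/2}$, which are of lower order in $a$ (they carry $a^{d_{k,p}}$ but, after integration, no extra positive power from $b/\sqrt t$, hence are down by one power of $a/b$ relative to the leading family — morally the $k\mapsto k+1$ codimension-shifted contribution); and (iii) the exponentially small Poisson tail $\sum_{j\geq 1}e^{-j^2b^2/t}$ against the fractal polynomial terms, plus the error term $\mathcal{O}(\exp(-C(t/a^2)^{-1/(d_w-1)}))$ times everything, both of which contribute entire functions of $s$ that are $o(1)$ as $a\to\infty$ when evaluated at $s=-1/2$. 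So the strategy is: isolate family (i), show families (ii) and (iii) are $o(1)$ at $s=-\tfrac12$, and compute family (i) exactly.

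For family (i), I would use the elementary identity $\frac{1}{\Gamma(s)}\int_0^\infty t^{s}\,t^{-q}\,\frac{dt}{t} $ — which strictly speaking diverges, so one works instead on the strip of convergence ${\rm Re}(s)>d_s(\Omega)/2$ where the full integral converges, performs the calculation there, and invokes Theorem~\ref{thm:meroext} (applied to the waveguide, whose heat kernel trace satisfies an estimate of the same shape~(\ref{eq:HK2})) to continue the resulting closed form to $s=-\tfrac12$. Concretely, the $j\geq 1$ Poisson terms being harmless, the dominant piece of $\Gamma(s)\zeta_{\Omega_{a,b}}(s,0)$ is
\begin{equation}
\frac{b}{\sqrt{4\pi}}\sum_{k=0}^2\sum_{p\in\mathbb{Z}}\hat{G}_{k,p}\,a^{d_{k,p}}\int_0^\infty t^{s-\frac{d_{k,p}}{2}-\frac12}\,\frac{dt}{t}\,\Big|_{\text{reg}}\,,
\end{equation}
but since $\int_0^\infty t^{w-1}dt$ is never convergent, the correct move is to recognize that the $t^{-1/2}$ factor multiplying the fractal polynomial is itself part of $K_{[0,1]}$, so one should instead expand $\frac{1}{\sqrt{\pi t}}\sum_{j\in\mathbb{Z}}e^{-j^2/t}$ (before truncating) and integrate $\int_0^\infty t^{s-d_{k,p}/2-1/2}e^{-j^2 b^2/t}\frac{dt}{t}=\Gamma\!\big(\tfrac{d_{k,p}}{2}+\tfrac12-s\big)(j^2b^2)^{s-d_{k,p}/2-1/2}$, then sum over $j\in\mathbb{Z}\setminus\{0\}$ to get $2\zeta(d_{k,p}+1-2s)\,b^{2s-d_{k,p}-1}$, plus the $j=0$ term which pairs with the $\tfrac{1}{\sqrt{4\pi t}}-\tfrac12$ remainder and, after the standard Epstein-type bookkeeping, contributes only to lower order in $a$. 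Setting $s=-\tfrac12$: $\Gamma\!\big(\tfrac{d_{k,p}}{2}+1\big)$, $2\zeta(d_{k,p}+2)$, and $b^{-d_{k,p}-2}$ appear, and collecting the $a^{d_{k,p}}$ from the fractal factor gives $(a/b)^{d_{k,p}}b^{-1}$ after absorbing one power of $b$ into the $b/\sqrt{4\pi}$ prefactor and rebalancing $\sqrt\pi$'s via the duplication formula — reproducing the claimed $\frac{1}{\sqrt\pi}\sum_{k,p}\hat{G}_{k,p}(a/b)^{d_{k,p}}b^{-1}\zeta(2+d_{k,p})\Gamma(1+d_{k,p}/2)$.

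The main obstacle is bookkeeping rather than conceptual: one must track \emph{exactly} which of the several product terms (the $-\tfrac12$ piece of $K_{[0,1]}$, the $j=0$ Gaussian piece, the exponential remainders in both factors, and the error term) are genuinely $o(1)$ at $s=-\tfrac12$ as $a\to\infty$, and justify interchanging the $(k,p)$- and $j$-sums with the $t$-integral — this is where the decay estimates from point~(\ref{pt:bd}) (boundedness of $\hat G_{k,p}$) and point~(\ref{pt:exp}) (exponential remainder), together with Stirling-type bounds on $|\Gamma(s+it)|$ as in the proof of Proposition~\ref{prop:BBEnergy}, do the work. I would also need the analogue of Corollary~\ref{cor:polesresidues} for $\zeta_{\Omega_{a,b}}$ to know that the putative poles of the continued expression at $s\in-\mathbb{N}$ are cancelled by $1/\Gamma(s)$, so that evaluation at $s=-\tfrac12$ (which is \emph{not} such a point) is unambiguous. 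Finally, since $\Gamma(-\tfrac12)=-2\sqrt\pi$, the stated identity is equivalent to a clean expression for $\zeta_{\Omega_{a,b}}(-\tfrac12,0)$ itself, which is what feeds the Casimir pressure computation in Proposition~\ref{prop:CasimirPressure}.
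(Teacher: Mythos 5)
Your detailed computation in the second paragraph extracts the leading term from exactly the same source as the paper does --- the Mellin transform of the fractal polynomial terms $\hat G_{k,p}\,a^{d_{k,p}}t^{-d_{k,p}/2}$ against the image terms $\tfrac{b}{\sqrt{\pi t}}e^{-j^2b^2/t}$, $j\geq 1$, of the Dirichlet interval --- and the constants $\Gamma\bigl(1+\tfrac{d_{k,p}}{2}\bigr)\zeta(2+d_{k,p})(a/b)^{d_{k,p}}b^{-1}/\sqrt\pi$ come out correctly. However, your opening classification is backwards and contradicts that computation: family (i), the $j=0$ Weyl piece $\tfrac{b}{\sqrt{4\pi t}}$ times the fractal polynomial, does \emph{not} produce the stated answer. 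Being scaleless, it contributes nothing to the continued value at $s=-\tfrac12$ (in the paper's treatment it vanishes identically, because the lower and upper incomplete gamma contributions from the two halves of the split integral cancel as $\gamma\to 0$), whereas the ``exponentially small Poisson tail'' you relegate to family (iii) is precisely what generates the main term. You correct this in the second paragraph, but the plan as first stated would fail.

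The more substantive gap is the justification of the termwise evaluation. You propose to work in the strip ${\rm Re}(s)>d_s(\Omega)/2$, where (\ref{eq:waveguidezeta}) converges, and then continue the closed form to $s=-\tfrac12$; but the individual integrals you need, e.g. $\int_0^\infty t^{\,s-d_{0,p}/2-1/2}e^{-j^2b^2/t}\,\tfrac{dt}{t}$, converge only for ${\rm Re}(s)<\tfrac{d_0}{d_w}+\tfrac12=\tfrac{d_s(\Omega)}{2}$, so there is no common strip in which the interchange of the $(k,p,j)$-sums with the full half-line integral is legitimate before continuation; moreover the expansion (\ref{eq:WaveguideHKT}) is valid only for $t\lesssim a^2$, so integrating each term over all of $(0,\infty)$ requires an argument. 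The paper resolves both points simultaneously: it keeps the regulator $\gamma>0$, splits the integral at $t=a^2T$, and evaluates every piece via lower and upper incomplete gamma functions, so that as $\gamma\to 0$ the scaleless polynomial pieces cancel exactly, the polynomial-times-one-exponential pieces reassemble into the complete Gamma functions of your formula, and the pieces involving the remainder $\mathcal{O}\bigl(\exp(-C(t/a^2)^{-1/(d_w-1)})\bigr)$ are explicitly $O(a^{-1})$. You flag the interchange as the ``main obstacle'' but supply no mechanism; some such splitting or regularization (not merely continuation in $s$ from the convergence strip) is needed to turn your calculation into a proof, and it is also what makes precise your assertion that the $j=0$ and $-\tfrac12$ pieces are only ``lower order in $a$.''
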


\begin{proof}
Though the proof is similar to that of Theorem~\ref{thm:meroext}, we will be more precise in the computation here. Let $f(t) = \mathcal{O}\left(\exp\left(-Ct^{-\frac{1}{d_w-1}}\right)\right)$. By definition, there exist $T,M \in (0,\infty)$ such that $|f(t)| \leq M \exp\left(-Ct^{-\frac{1}{d_w-1}}\right)$ for all $t \in (0,T)$. It follows that $|f(a^{-2} t)| \leq M\exp\left(-Ca^{\frac{2}{d_w-1}}t^{-\frac{1}{d_w-1}}\right)$ for all $t \in (0,a^2 T)$. So when computing the Mellin integral we split the domain of integration $(0,\infty)$ into two intervals, $(0,a^2 T)$ and $[a^2 T,\infty)$.

Expanding the product in~(\ref{eq:WaveguideHKT}) and then plugging it into~(\ref{eq:waveguidezeta}), we find 
$$\zeta_{\Omega_{a,b}}\left(-\frac{1}{2},\gamma\right) \Gamma\left(-\frac{1}{2}\right)  = \int_0^{a^2 T} I(t)e^{-\gamma t} \frac{dt}{t} + \int_{a^2 T}^\infty I(t) e^{-\gamma t}\frac{dt}{t},$$
where $I(t)$ contains terms of the following types:
\begin{enumerate}
\item Polynomial terms of the form $t^{-p}$, ${\rm Re}(p)>0$.
\item Product of a polynomial term and an exponential term: $t^{-p}\exp(-c_5 t^{-\eta})$, ${\rm Re}(p), \eta>0$.
\item Product of a polynomial term and two exponential terms: \\$t^{-p} \exp(-c_6 t^{-\eta})\exp(-c_7 t^{-1})$, ${\rm Re}(p),\eta>0$.
\end{enumerate}

For the polynomial terms we have
$$\int_0^{a^2 T} t^{-p} e^{-\gamma t} \frac{dt}{t} = \gamma^p \boldsymbol{\gamma}(-p, \gamma a^2 T),\quad \int_{a^2 T}^\infty t^{-p} e^{-\gamma t} \frac{dt}{t} =\gamma^p \Gamma(-p, \gamma a^2 T), $$
where $\boldsymbol{\gamma}(s,x)$ and $\Gamma(s,x)$ are, respectively, the lower and upper incomplete gamma functions. Then we need to use the asymptotics of the incomplete gamma functions near the branch point $x=0$: $x^{-s} \boldsymbol{\gamma}(s,x) \to 1/s$ and, for ${\rm Re}(s)<0$, $x^{-s} \Gamma(s,x) \to -1/s$. These imply that
$$ \lim_{\gamma \to 0}\left(\int_0^{a^2 T} t^{-p} e^{-\gamma t} \frac{dt}{t}+ \int_{a^2 T}^\infty t^{-p} e^{-\gamma t} \frac{dt}{t}\right) =0.$$

Next we tackle terms of type (2). By making a change of variables we find
$$
J_1(a,b,\gamma):=\int_0^{a^2 T} t^{-p} e^{-c_5 t^{-\eta}} e^{-\gamma t}\frac{dt}{t}= \frac{1}{\eta c_5^{p/\eta}} \int_{c_5/(a^2 T)^\eta}^\infty u^{p/\eta} e^{-u} e^{-\gamma(c_5/u)^{1/\eta}} \frac{du}{u}.
$$
Notice that for $u \in (c_5/(a^2 T)^\eta,\infty)$ we have the upper bound $\left|e^{-\gamma(c_5/u)^{1/\eta}}\right| \leq e^{|\gamma|a^2 T}$, and the remaining integrand is integrable:
$$ \left|\int^\infty_{c_5/(a^2 T)^\eta} u^{p/\eta}e^{-u}\frac{du}{u} \right|= \left|\Gamma\left(\frac{p}{\eta},\frac{c_5}{(a^2 T)^\eta}\right)\right| <\infty.$$
So by the dominated convergence theorem,
$$\lim_{\gamma\to 0}J_1(a,b,\gamma) =\frac{1}{\eta c_5^{p/\eta}} \int_{c_5/(a^2 T)^\eta}^\infty u^{p/\eta} e^{-u}\frac{du}{u} = \frac{1}{\eta c_5^{p/\eta}}  \Gamma\left(\frac{p}{\eta},\frac{c_5}{(a^2 T)^\eta}\right). $$
The same argument applies to
$$ J_2(a,b,\gamma) := \int_{a^2 T}^\infty t^{-p} e^{-c_5t^{-\eta}} e^{-\gamma t} \frac{dt}{t},$$
yielding
$$\lim_{\gamma\to 0} J_2(a,b,\gamma) =\frac{1}{\eta c_5^{p/\eta}}\boldsymbol{\gamma}\left(\frac{p}{\eta},\frac{c_5}{(a^2 T)^\eta}\right).$$
Thus
\begin{eqnarray*}
&& \lim_{\gamma\to0}\left(\int_0^{a^2 T} t^{-p} e^{-c_5 t^{-\eta}} e^{-\gamma t}\frac{dt}{t}+\int_{a^2 T}^\infty t^{-p} e^{-c_5 t^{-\eta}} e^{-\gamma t}\frac{dt}{t} \right) \\ &=& \frac{1}{\eta c_5^{p/\eta}} \left[ \Gamma\left(\frac{p}{\eta}, \frac{c_5}{(a^2 T)^\eta}\right)+ \boldsymbol{\gamma}\left(\frac{p}{\eta}, \frac{c_5}{(a^2 T)^\eta}\right)\right] \\ &=& \frac{1}{\eta c_5^{p/\eta}} \Gamma\left(\frac{p}{\eta}\right).
\end{eqnarray*}

Finally, for terms of type (3) we have
\begin{eqnarray*}
&& \left|\int_0^{a^2 T} t^{-p} e^{-c_6 t^{-\eta}} e^{-c_7 t^{-1}} e^{-\gamma t} \frac{dt}{t} \right| \\&=&  
\left|\frac{1}{c_7^p} \int_{c_7/(a^2 T)}^\infty u^p e^{-u} e^{-c_6(u/c_7)^\eta} e^{-\gamma(c_7/u)} \frac{du}{u}\right| \\
& \leq & \frac{e^{|\gamma|c_7 a^2 T} e^{-c_7/(a^2 T)}}{c_7^{{\rm Re}(p)}} \int_{c_7/(a^2 T)}^\infty u^{{\rm Re}(p)} e^{-c_6(u/c_7)^\eta} \frac{du}{u}\\
&=&  \frac{e^{|\gamma|c_7 a^2 T} e^{-c_7/(a^2 T)}}{\eta c_6^{{\rm Re}(p)/\eta}} \Gamma\left(\frac{{\rm Re}(p)}{\eta},\frac{c_6}{(a^2 T)^\eta}\right).
\end{eqnarray*}
Thus
$$\lim_{\gamma\to 0} \left|\int_0^{a^2 T} t^{-p} e^{-c_6 t^{-\eta}} e^{-c_7 t^{-1}} e^{-\gamma t} \frac{dt}{t} \right| \leq \frac{e^{-c_7/(a^2 T)}}{\eta c_6^{{\rm Re}(p)/\eta}} \Gamma\left(\frac{{\rm Re}(p)}{\eta},\frac{c_6}{(a^2 T)^\eta}\right).$$
A similar calculation yields
$$\lim_{\gamma\to 0} \left|\int_{a^2 T}^\infty t^{-p} e^{-c_6 t^{-\eta}} e^{-c_7 t^{-1}} e^{-\gamma t} \frac{dt}{t} \right| \leq \frac{1}{\eta c_6^{{\rm Re}(p)/\eta}} \boldsymbol{\gamma}\left(\frac{{\rm Re}(p)}{\eta},\frac{c_6}{(a^2 T)^\eta}\right).$$

After putting in the numbers we obtain
\begin{eqnarray*}
\zeta_{\Omega_{a,b}}\left(-\frac{1}{2},0\right)\Gamma\left(-\frac{1}{2}\right) &=& \frac{1}{\sqrt{\pi}}\sum_{k=0}^2 \sum_{p\in\mathbb{Z}} \hat{G}_{k,p} \frac{a^{d_{k,p}}}{b^{d_{k,p}+1}}  \sum_{j=1}^\infty \frac{1}{j^{(d_{k,p}+2)}}\Gamma\left(\frac{d_{k,p}}{2}+1\right) \\ &&+ I_4(a,b),
\end{eqnarray*}
where
\begin{eqnarray*}
|I_4(a,b)| \leq \frac{M}{2}\frac{d_w -1}{a C^{(d_w-1)/2}} \Gamma\left(\frac{d_w-1}{2},\frac{C}{T^{1/(d_w-1)}}\right) + I_5(a,b)
\end{eqnarray*}
and $\displaystyle \lim_{a\to\infty}I_5(a,b)=0$. This concludes the proof.
\qed \end{proof}

The Casimir pressure is defined to be the force per unit spectral area on each parallel plate as the separation $b$ of the two plates is varied, that is,
$$P_{\rm Cas}(a,b) := -\frac{1}{V_s(a(GSC))} \frac{\partial }{\partial b}E_{\rm Cas}(a,b).$$
The following results are corollaries of the theorem.
\begin{proposition}
Denote 
\begin{equation} \mathcal{C}_p := \zeta\left(1+d_s(\Omega)+\frac{4\pi pi}{\log R}\right) \Gamma\left(\frac{1}{2}+\frac{d_s(\Omega)}{2} + \frac{2\pi p i}{\log R}\right)\hat{G}_{0,p}.
\end{equation}
As $a\to\infty$, the Casimir energy in a SC waveguide $\Omega_{a,b}$ is
\begin{eqnarray}
\nonumber E_{\rm Cas}(a,b) &:=&\frac{1}{2}\zeta_{\Omega_{a,b}}\left(-\frac{1}{2},0\right)\\
&=& -\frac{1}{4\pi} \frac{a^{d_s(GSC)}}{b^{d_s(GSC)+1}} \sum_{p\in\mathbb{Z}} \mathcal{C}_p e^{\frac{4\pi pi}{\log R}\log\left(\frac{a}{b}\right)}+ \mathcal{O}(a^{\frac{2}{d_w(GSC)}}).
 \end{eqnarray}
The zero-temperature Casimir pressure on each parallel plate $a(GSC)\times\{0\}$ or $a(GSC)\times\{b\}$ is
\begin{eqnarray}
\nonumber P_{\rm Cas}(a,b) &=&  -\frac{b^{-(d_s(\Omega)+1)}}{(4\pi)^{\frac{d_s(\Omega)+1}{2}} \hat{G}_{0,0}} \sum_{p\in\mathbb{Z}} \mathcal{C}_p \left[d_s(\Omega) + \frac{4 \pi  pi}{\log R}\right] e^{\frac{4\pi p i}{\log R} \log\left(\frac{a}{b}\right)}\\
 &&+\mathcal{O}\left(a^{-2\frac{d_h(GSC)-1}{d_w(GSC)}}\right).
\end{eqnarray}
\label{prop:CasimirPressure}
\end{proposition}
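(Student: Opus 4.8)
The argument begins from the theorem immediately preceding the statement, which gives $\Gamma(-\tfrac12)\,\zeta_{\Omega_{a,b}}(-\tfrac12,0)$ up to an $o(1)$ error as a double sum over $k\in\{0,1,2\}$ and $p\in\mathbb{Z}$. Since $E_{\rm Cas}(a,b)=\tfrac12\zeta_{\Omega_{a,b}}(-\tfrac12,0)$ (cf.\ \eqref{eq:CasimirEnergy}) and $\Gamma(-\tfrac12)=-2\sqrt{\pi}$, the first move is to rewrite
$$E_{\rm Cas}(a,b)=-\frac{1}{4\pi}\sum_{k=0}^{2}\sum_{p\in\mathbb{Z}}\hat{G}_{k,p}\Big(\frac{a}{b}\Big)^{d_{k,p}}b^{-1}\,\zeta(2+d_{k,p})\,\Gamma\!\Big(1+\tfrac{d_{k,p}}{2}\Big)+o(1).$$

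Next one isolates the dominant block as $a\to\infty$. As ${\rm Re}(d_{k,p})=2d_k/d_w(GSC)$ with $d_0=d_h(GSC)\in(1,2)$, $d_1=1$, $d_2=0$, and since $d_h(GSC)>1$ gives $d_s(GSC)>2/d_w(GSC)>0$, the $k=1$ and $k=2$ blocks — together with the $o(1)$ remainder — are $\mathcal{O}(a^{2/d_w(GSC)})$ uniformly for $b$ in a compact subset of $(0,\infty)$; absolute convergence of each $p$-sum here uses $|\hat{G}_{k,p}|\le C$, the inequality $|\zeta(s+it)|\le\zeta(s)$ valid for $s>1$ (note ${\rm Re}(2+d_{k,p})\ge2$), and the Stirling decay $|\Gamma(\sigma+it)|\sim\sqrt{2\pi}\,|t|^{\sigma-1/2}e^{-\pi|t|/2}$ with $t=2\pi p/\log R$. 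For the $k=0$ block I would use $d_{0,p}=d_s(GSC)+\tfrac{4\pi pi}{\log R}$, so that $2+d_{0,p}=1+d_s(\Omega)+\tfrac{4\pi pi}{\log R}$, $1+\tfrac{d_{0,p}}{2}=\tfrac12+\tfrac{d_s(\Omega)}{2}+\tfrac{2\pi pi}{\log R}$, hence $\zeta(2+d_{0,p})\Gamma(1+d_{0,p}/2)\hat{G}_{0,p}=\mathcal{C}_p$, while $(a/b)^{d_{0,p}}b^{-1}=a^{d_s(GSC)}b^{-(d_s(GSC)+1)}e^{\frac{4\pi pi}{\log R}\log(a/b)}$. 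Collecting these identities produces exactly the stated asymptotics for $E_{\rm Cas}(a,b)$.

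To obtain $P_{\rm Cas}$ one differentiates in $b$ and divides by $V_s(a(GSC))=(4\pi)^{d_s(GSC)/2}\hat{G}_{0,0}\,a^{d_s(GSC)}$. Writing the leading term of $E_{\rm Cas}$ as $-\tfrac{1}{4\pi}a^{d_s(GSC)}\sum_p\mathcal{C}_p\,a^{4\pi pi/\log R}\,b^{-(d_s(\Omega)+4\pi pi/\log R)}$ (using $d_s(\Omega)=d_s(GSC)+1$), the $b$-derivative pulls down the factor $-[d_s(\Omega)+\tfrac{4\pi pi}{\log R}]/b$; dividing by $V_s(a(GSC))$ cancels $a^{d_s(GSC)}$, combines $(4\pi)^{d_s(GSC)/2}\cdot4\pi=(4\pi)^{(d_s(\Omega)+1)/2}$, and converts $a^{4\pi pi/\log R}b^{-4\pi pi/\log R}$ into $e^{\frac{4\pi pi}{\log R}\log(a/b)}$, which reproduces the displayed expression for $P_{\rm Cas}$. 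As for the error, after $\partial_b$ the $\mathcal{O}(a^{2/d_w(GSC)})$ remainder of $E_{\rm Cas}$ stays $\mathcal{O}(a^{2/d_w(GSC)})$ on $b$-compacts, and dividing by $a^{d_s(GSC)}=a^{2d_h(GSC)/d_w(GSC)}$ yields $\mathcal{O}(a^{-2(d_h(GSC)-1)/d_w(GSC)})$, which tends to $0$ since $d_h(GSC)>1$.

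The main obstacle I anticipate is legitimizing the term-by-term $b$-differentiation of the $a\to\infty$ asymptotics, i.e.\ ensuring that $\partial_b$ of the error is again of the same order, locally uniformly in $b$ on compacts of $(0,\infty)$. For fixed $a$ the map $b\mapsto\zeta_{\Omega_{a,b}}(-\tfrac12,0)$ is real-analytic on $(0,\infty)$ (its meromorphic continuation inherits analyticity in $b$ from the factorized heat-kernel representation), so differentiability itself is not an issue; the delicate part is exchanging $\partial_b$ with the limit $a\to\infty$. I would handle this by revisiting the Mellin-integral estimates in the proof of the preceding theorem — the integrands built from \eqref{eq:WaveguideHKT} and \eqref{eq:waveguidezeta}, together with the incomplete-gamma bounds $J_1,J_2$ and the type-(3) bound — noting that the constants of the form $j^2b^2$ entering the exponential factors depend analytically on $b$ on any fixed compact set bounded away from $0$, so those bounds are locally uniform in $b$ and survive one $b$-differentiation. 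A secondary point, easily dispatched by the exponential Stirling decay of the $\Gamma$-factor, is that the $p$-sum defining $P_{\rm Cas}$ still converges absolutely after the extra linear-in-$p$ factor $[d_s(\Omega)+\tfrac{4\pi pi}{\log R}]$ is inserted.
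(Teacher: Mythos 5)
Your computation is correct and follows exactly the route the paper intends: the paper states this proposition as an immediate corollary of the preceding theorem, and your steps (divide by $\Gamma(-\tfrac12)=-2\sqrt{\pi}$, isolate the $k=0$ block using $d_{0,p}=d_s(GSC)+\tfrac{4\pi p i}{\log R}$ so that $\zeta(2+d_{0,p})\Gamma(1+d_{0,p}/2)\hat{G}_{0,p}=\mathcal{C}_p$, absorb $k=1,2$ into $\mathcal{O}(a^{2/d_w})$, then differentiate in $b$ and divide by $V_s(a(GSC))$ to get the $(4\pi)^{(d_s(\Omega)+1)/2}$ normalization and the $\mathcal{O}(a^{-2(d_h-1)/d_w})$ error) reproduce precisely the stated formulas. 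Your additional care in justifying the interchange of $\partial_b$ with the $a\to\infty$ asymptotics, via local uniformity in $b$ of the Mellin-integral bounds, addresses a point the paper leaves implicit and is a sound way to close it.
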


The key take-away message is that the Casimir pressure scales with the separation of the plates $b$ as
\begin{equation}
P_{\rm Cas}(a,b) = b^{-(d_s(\Omega)+1)}H_2\left(\log\left(\frac{a}{b}\right)\right) + o(1) \quad \mbox{as}~a\to\infty,
\end{equation}
where $H_2(\cdot)$ is $\frac{1}{2}\log R$-periodic. We do not know the sign of $P_{\rm Cas}$ as it requires more information about the function $G_0$ than currently available. 

\subsection{Casimir pressure at positive temperature}

To calculate the Casimir pressure due to thermal effects, we first derive the partition function, then find the free energy and take appropriate derivatives. Since most of the arguments resemble those in the blackbody radiation calculation (Section \ref{sec:photon}), we will omit the computational details.

\begin{lemma}
The simple poles of the spectral zeta function $\zeta_{\Omega_{a,b}}(\cdot,0)$ are contained in the set
\begin{equation}
\bigcup_{k=0}^2 \bigcup_{p\in\mathbb{Z}} \left( \left\{\frac{d_{k,p}+1}{2}\right\} \bigcup \left\{\frac{d_{k,p}}{2}\right\}\right),
\end{equation}
where the residues at the two sets of poles read, respectively,
\begin{equation}
\frac{a^{d_{k,p}}b}{\sqrt{4\pi}}\frac{\hat{G}_{k,p}}{\Gamma\left(\frac{d_{k,p}+1}{2}\right)},\quad -\frac{1}{2} a^{d_{k,p}} \frac{\hat{G}_{k,p}}{\Gamma(d_{k,p}/2)}.
\end{equation}
\end{lemma}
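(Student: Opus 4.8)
The plan is to follow the argument of Theorem~\ref{thm:meroext}, applied this time to the factorized heat kernel trace~(\ref{eq:WaveguideHKT}). First I would insert~(\ref{eq:WaveguideHKT}) into the Mellin representation~(\ref{eq:waveguidezeta}) with $\gamma=0$ and split $\int_0^\infty=\int_0^1+\int_1^\infty$. On $[1,\infty)$ the Dirichlet Laplacian on $\Omega_{a,b}$ has discrete spectrum bounded below by its first eigenvalue, so $K_{\Omega_{a,b}}(t)$ decays exponentially as $t\to\infty$; hence $\int_1^\infty t^s K_{\Omega_{a,b}}(t)\,\frac{dt}{t}$ is entire in $s$, and stays entire after division by $\Gamma(s)$. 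All poles therefore arise from $\int_0^1$.

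Next I would expand the product in~(\ref{eq:WaveguideHKT}) and sort the short-time integrand into three types, just as in the theorem proved just above: (i) pure power terms $\frac{b}{\sqrt{4\pi}}\hat{G}_{k,p}a^{d_{k,p}}t^{-(d_{k,p}+1)/2}$ and $-\tfrac12\hat{G}_{k,p}a^{d_{k,p}}t^{-d_{k,p}/2}$, coming from pairing the polynomial part of $K_{GSC}(a^{-2}t)$ with the $\frac{b}{\sqrt{4\pi t}}$ and the $-\tfrac12$ in $K_{[0,1]}(b^{-2}t)$; (ii) terms of the shape $t^{-p}e^{-c/t}$ with $c=j^2b^2$, obtained by pairing the polynomial part of $K_{GSC}(a^{-2}t)$ with $\frac{b}{\sqrt{\pi t}}\sum_{j\ge1}e^{-j^2b^2/t}$; and (iii) the remainder $\mathcal{O}(\exp(-C(t/a^2)^{-1/(d_w-1)}))$ multiplied by the entire bracket $[\frac{b}{\sqrt{4\pi t}}-\tfrac12+\frac{b}{\sqrt{\pi t}}\sum_j e^{-j^2b^2/t}]$. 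For (iii) the prefactor decays faster than any power of $t$ as $t\downarrow0$, so $\frac1{\Gamma(s)}\int_0^1 t^s(\cdots)\frac{dt}{t}$ is entire in $s$ (the $I_2$-type estimate, yielding a combination of the entire functions $\boldsymbol{\gamma}^*$). For (ii), the substitution $u=c/t$ turns $\int_0^1 t^{s-p}e^{-c/t}\frac{dt}{t}$ into $c^{s-p}\,\Gamma(p-s,c)$, entire in $s$; the sum over $j$ converges because the $j$-term decays like $e^{-j^2b^2}$, and one justifies exchanging it with the integral exactly as in the type-(2) and type-(3) estimates above. So (ii) and (iii) add only an entire function to $\zeta_{\Omega_{a,b}}(\cdot,0)$.

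It then remains to read the poles off of (i). From $\frac1{\Gamma(s)}\int_0^1 t^{s-\rho}\frac{dt}{t}=\frac1{\Gamma(s)(s-\rho)}$ one sees that a power term $c\,t^{-\rho}$ contributes exactly one simple pole, at $s=\rho$, with residue $c/\Gamma(\rho)$ --- automatically removable when $\rho\in-\mathbb{N}_0$, which is what happens for the $(k,p)=(2,0)$ term at $s=d_{2,0}/2=0$, consistent with the residue formula since $1/\Gamma(0)=0$. Taking $\rho=\tfrac{d_{k,p}+1}{2}$ and $\rho=\tfrac{d_{k,p}}{2}$ produces precisely the two announced families of poles with residues $\frac{a^{d_{k,p}}b}{\sqrt{4\pi}}\hat{G}_{k,p}/\Gamma(\tfrac{d_{k,p}+1}{2})$ and $-\tfrac12 a^{d_{k,p}}\hat{G}_{k,p}/\Gamma(d_{k,p}/2)$. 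To sum over $(k,p)$ and conclude that the result is genuinely meromorphic with only these poles, I would recycle the summability estimate from the proof of Theorem~\ref{thm:meroext}: since $|a^{d_{k,p}}|=a^{2d_k/d_w}$ is a fixed constant and $|\hat{G}_{k,p}|\le C$, the coefficients of the simple fractions are bounded in $p$, so $\sum_p\big|\frac{c_{k,p}}{s-(d_{k,p}+1)/2}\big|$ and the analogue at $d_{k,p}/2$ converge locally uniformly away from the poles.

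The step I expect to be the real work is the bookkeeping in the middle: cleanly decomposing the fourfold product of two sums in~(\ref{eq:WaveguideHKT}) into the three term types and, above all, justifying the interchange of the $p$- and $j$-summations with the Mellin integral uniformly on the relevant vertical strips. This uses the incomplete-gamma and Stirling estimates from the preceding two proofs essentially verbatim; once that is in place, everything else is residue arithmetic.
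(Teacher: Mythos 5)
Your proposal is correct and follows essentially the same route as the paper: the paper's proof simply observes that only the polynomial terms in the factorized trace~(\ref{eq:WaveguideHKT}) produce singularities and defers the rest to the arguments of Theorem~\ref{thm:meroext} and Corollary~\ref{cor:polesresidues}, which is exactly the decomposition (power terms give the two pole families with the stated residues; the $t^{-p}e^{-c/t}$ cross terms and the exponentially small remainder contribute entire functions) that you carry out in more detail. Your explicit bookkeeping, the incomplete-gamma identities, and the summability check for the $(k,p)$- and $j$-sums are precisely the "as outlined" steps the paper leaves implicit.
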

\begin{proof}
It is straightforward to check that only the polynomial terms in the expansion of $K_{\Omega_{a,b}}(t)$~(\ref{eq:WaveguideHKT}) produce singularities. The rest of the proof is as outlined in Theorem~\ref{thm:meroext} and Corollary~\ref{cor:polesresidues}.
\qed \end{proof}

\begin{lemma}
The thermal terms in the grand canonical partition function \\ $\Xi_{\Omega_{a,b},\beta}$ for the photon gas in the Sierpinski carpet waveguide $\Omega_{a,b}$ are given by
\begin{eqnarray}
\nonumber \log \Xi_{\Omega_{a,b},\beta} &=& \frac{1}{\sqrt\pi}\sum_{k=0}^2 \sum_{p\in\mathbb{Z}} \frac{a^{d_{k,p}}b}{\beta^{d_{k,p}+1}}\frac{\Gamma(d_{k,p}+1)}{\Gamma\left(\frac{d_{k,p}+1}{2}\right)}\zeta(d_{k,p}+2) \hat{G}_{k,p} \\
&& - \left(\sum_{k=0}^2 \sum_{p\in\mathbb{Z}}\right)' \frac{a^{d_{k,p}}}{\beta^{d_{k,p}}} \frac{\Gamma(d_{k,p})}{\Gamma(d_{k,p}/2)} \zeta(d_{k,p}+1)\hat{G}_{k,p},
\end{eqnarray}
where $\left(\sum_{k=0}^d \sum_{p\in\mathbb{Z}}\right)'$ means the double sum excluding the term $(k,p)=(d,0)$.
\end{lemma}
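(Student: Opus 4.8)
The plan is to reprise, almost verbatim, the residue-calculus derivation of Lemma~\ref{lem:photonPF}, the only new ingredient being the extra interval factor $K_{[0,1]}$ that enters through~(\ref{eq:WaveguideHKT}).

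\emph{Step 1 (Mellin--Barnes representation).} Starting from $\log\Xi_{\Omega_{a,b},\beta}=-{\rm Tr}_{\mathcal{H}_1}\log(1-e^{-\beta\underline{D}_\Omega})$ with $\underline{D}_\Omega=\sqrt{-\underline\Delta_{\Omega_{a,b}}}$, I expand $\log(1-x)$ in a power series and insert $e^{-a}=\frac{1}{2\pi i}\int_{\sigma-i\infty}^{\sigma+i\infty}a^{-t}\Gamma(t)\,dt$ exactly as in Proposition~\ref{prop:Bosepart} and~(\ref{eq:PhotonPF}), obtaining, for $\sigma>d_s(\Omega)/2$,
\[
\log\Xi_{\Omega_{a,b},\beta}=\frac{1}{\pi i}\int_{\sigma-i\infty}^{\sigma+i\infty}\beta^{-2t}\,\Gamma(2t)\,\zeta(2t+1)\,\zeta_{\Omega_{a,b}}(t,0)\,dt .
\]
The interchange of trace, $n$-sum and integral is licit because $\sigma>d_s(\Omega)/2\ge1$ makes both $\sum_n n^{-(2\sigma+1)}$ and ${\rm Tr}(-\underline\Delta_{\Omega_{a,b}})^{-\sigma}$ absolutely convergent. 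Here $\zeta_{\Omega_{a,b}}(\cdot,0)$ is the meromorphic continuation of~(\ref{eq:waveguidezeta}), whose existence follows as in Theorem~\ref{thm:meroext}: by~(\ref{eq:WaveguideHKT}), $K_{\Omega_{a,b}}(t)$ has precisely the form of~(\ref{eq:HK2}) --- a sum of (complex) power-law terms plus remainders decaying as $\exp(-Ct^{-1/(d_w-1)})$ and $\exp(-cb^2/t)$ as $t\downarrow0$.

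\emph{Step 2 (contour shift and pole bookkeeping).} I then push ${\rm Re}(t)$ to $-\infty$, so that $\log\Xi_{\Omega_{a,b},\beta}$ becomes $2$ times the sum of residues of the integrand (the factor $2$ is $\frac{1}{\pi i}\cdot2\pi i$), subject to the estimate in Step~4. The poles crossed are of three kinds: (i) the simple poles of $\zeta_{\Omega_{a,b}}(\cdot,0)$ at $t=\tfrac{d_{k,p}+1}{2}$ and $t=\tfrac{d_{k,p}}{2}$ from the preceding lemma, all with ${\rm Re}(t)\ge0$; (ii) the poles of $\Gamma(2t)$ at $t\in-\tfrac12\mathbb{N}_0$; (iii) the pole of $\zeta(2t+1)$ at $t=0$. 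At $t=\tfrac{d_{k,p}+1}{2}$ the remaining factors are regular, and using ${\rm Res}\big(\zeta_{\Omega_{a,b}}(\cdot,0),\tfrac{d_{k,p}+1}{2}\big)=\tfrac{a^{d_{k,p}}b}{\sqrt{4\pi}}\,\hat G_{k,p}/\Gamma(\tfrac{d_{k,p}+1}{2})$ together with $\tfrac{2}{\sqrt{4\pi}}=\tfrac{1}{\sqrt\pi}$ gives exactly the first sum of the lemma. At $t=\tfrac{d_{k,p}}{2}$ with $(k,p)\neq(d,0)$, the residue $-\tfrac12 a^{d_{k,p}}\hat G_{k,p}/\Gamma(\tfrac{d_{k,p}}{2})$ produces the second (primed) sum. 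The value $t=\tfrac{d_{d,0}}{2}=0$ is a confluence: $\zeta_{\Omega_{a,b}}(\cdot,0)$ is in fact \emph{regular} there (its would-be residue $-\tfrac12\hat G_{d,0}/\Gamma(0)$ vanishes) while $\Gamma(2t)\zeta(2t+1)$ has a double pole; since $\Gamma(2t+1)\cdot2t\,\zeta(2t+1)=1+O(t^2)$, the leading singular part cancels and the residue reduces to $\tfrac14\tfrac{d}{dt}\big[\beta^{-2t}\zeta_{\Omega_{a,b}}(t,0)\big]\big|_{t=0}$, a temperature-dependent but non-extensive term (at most logarithmic in $\beta$, with coefficient $\propto\zeta_{\Omega_{a,b}}(0,0)$); this, together with the $t=-\tfrac12$ residue $\tfrac\beta2\zeta_{\Omega_{a,b}}(-\tfrac12,0)$ --- the Casimir piece --- is separated off exactly as in Section~\ref{sec:photon}. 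All remaining poles contribute nothing: at $t=-n$, $n\in\mathbb{N}$, one has $\zeta_{\Omega_{a,b}}(-n,0)=0$ by the argument of Corollary~\ref{cor:polesresidues}(3) (the heat trace~(\ref{eq:WaveguideHKT}) has no power-law exponent of negative real part, so $\zeta_{\Omega_{a,b}}(\cdot,0)$ is $1/\Gamma(s)$ times a holomorphic function for ${\rm Re}(s)<0$); and at $t=-\tfrac{2n-1}{2}$, $n\ge2$, the factor $\zeta(2t+1)=\zeta(-2(n-1))=0$ by the trivial zeros of $\zeta$. Collecting (i)--(iii) yields the claimed identity.

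\emph{Step 3 (the main obstacle: justifying the shift).} The one genuinely technical point is the bound ensuring that the horizontal pieces of the truncated contour vanish and the residue towers converge. This combines Stirling's formula $|\Gamma(\sigma+it)|\sim\sqrt{2\pi}\,|t|^{\sigma-1/2}e^{-\pi|t|/2}$ (forcing $|\Gamma(2t)|$ to decay like $e^{-\pi|{\rm Im}(t)|}$, and super-exponentially on far-left vertical lines), the bound $|\zeta(2t+1)|\le\zeta(2\sigma+1)$ for $\sigma>0$ with the polynomial growth of $\zeta$ in the critical strip, and the Mittag--Leffler bound on $\zeta_{\Omega_{a,b}}(\cdot,0)$ inherited from the analogue of~(\ref{eq:ML}) --- in particular the summability over $p\in\mathbb{Z}$ of the paired residues $\hat G_{k,p}/(r-d_{k,p}/2)+\hat G_{k,-p}/(r-d_{k,-p}/2)$, which uses $|\hat G_{k,p}|\le C$. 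This is precisely the package of estimates already deployed in the proofs of Theorem~\ref{thm:meroext} and Proposition~\ref{prop:BBEnergy}; the only new wrinkle is the bookkeeping at the triple confluence $t=0$ created by the $K_{[0,1]}$ factor shifting the GSC poles by $\tfrac12$ onto the $\Gamma$--$\zeta$ double pole, and once the cancellation of leading singular parts there is verified, the leftover is harmless. As before, the exponential (rather than power-law) decay of the remainders in~(\ref{eq:WaveguideHKT}) is what makes the separation of the non-thermal/Casimir terms unambiguous.
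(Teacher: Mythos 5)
Your argument is essentially the paper's own: the paper proves this lemma by observing that the computation is identical to that of Lemma~\ref{lem:photonPF} — the Mellin--Barnes representation of $\log\Xi_{\Omega_{a,b},\beta}$ followed by a residue calculation at the poles of $\zeta_{\Omega_{a,b}}(\cdot,0)$ listed in the preceding lemma — and that is exactly what you carry out, with the correct factor $2/\sqrt{4\pi}=1/\sqrt{\pi}$ for the first sum, the residues $-\tfrac12 a^{d_{k,p}}\hat G_{k,p}/\Gamma(d_{k,p}/2)$ for the primed second sum, the vanishing of the contributions at $t\in-\mathbb{N}$ and $t\in-(\mathbb{N}+\tfrac12)\setminus\{-\tfrac12\}$, and the separation of the Casimir piece at $t=-\tfrac12$. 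Your explicit Laurent analysis at the $t=0$ confluence (where the paper simply discards that point) is a harmless refinement consistent with the lemma's restriction to the thermal terms.
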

\begin{proof}
The arguments are identical to those in the proof of Lemma~\ref{lem:photonPF}.
\qed \end{proof}

The Casimir pressure is then defined to be
$$ P_{{\rm Cas},\beta}(a,b):= -\frac{1}{V_s(a(GSC))} \frac{\partial}{\partial b} F_{\Omega_{a,b},\beta}= \frac{1}{\beta V_s(a(GSC))} \frac{\partial}{\partial b} \log \Xi_{\Omega_{a,b},\beta}.$$

\begin{proposition}
As $a\to\infty$, the Casimir pressure on each parallel plate \\ $a(GSC)\times \{0\}$ or $a(GSC)\times\{b\}$ at inverse temperature $\beta$ reads
\begin{equation}
P_{{\rm Cas},\beta}(a,b) = \beta^{-(d_s(\Omega)+1)}H_3\left(-\log\left(\frac{\beta}{2a}\right)\right) + o(1),
\end{equation}
where $H_3(\cdot)$ is $\frac{1}{2}\log R$-periodic with
\begin{equation}
H_3(x) = \frac{\hat{G}_{0,p}/\hat{G}_{0,0}}{\pi^{\frac{d_s(\Omega)+1}{2}}}\sum_{p\in\mathbb{Z}} e^{\frac{4\pi i px}{\log R}} \Gamma\left(\frac{1}{2}+\frac{d_s(\Omega)}{2}+\frac{2\pi p i}{\log R}\right)\zeta\left(1+d_s(\Omega)+\frac{4\pi p i}{\log R}\right).
\end{equation}
\end{proposition}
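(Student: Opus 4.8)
The plan is to take the closed expression for the thermal part of $\log\Xi_{\Omega_{a,b},\beta}$ furnished by the preceding lemma, differentiate it in $b$, normalize by $\beta V_s(a(GSC))$, and then let $a\to\infty$. The first point to notice is that the second double sum in that formula does not depend on $b$ (it is the non-thermal, Casimir-energy-type piece whose value at $s=-\frac{1}{2}$ was computed in the theorem above on $\zeta_{\Omega_{a,b}}(-\frac{1}{2},0)$), while the first double sum is linear in $b$; consequently $\partial_b\log\Xi_{\Omega_{a,b},\beta}$ equals that first double sum with the overall factor $b$ deleted. Dividing by $\beta V_s(a(GSC)) = \beta(4\pi)^{d_s(GSC)/2}\hat{G}_{0,0}\,a^{d_s(GSC)}$ then writes $P_{{\rm Cas},\beta}(a,b)$ as a double sum over $k\in\{0,1,2\}$ and $p\in\mathbb{Z}$ in which the $(k,p)$ summand carries the factor $a^{d_{k,p}-d_s(GSC)} = a^{2(d_k-d_0)/d_w}\,a^{4\pi i p/\log R}$, with $d_0 = d_h(GSC)$.

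Next I would pass to the limit $a\to\infty$. Because $d_k<d_0$ for $k=1,2$ (the codimension-$k$ boundary slices of a two-dimensional GSC are strictly lower-dimensional), the real exponent $2(d_k-d_0)/d_w$ is strictly negative for those layers, so they contribute $O(a^{-c})=o(1)$; for $k=0$ the real exponent vanishes and $a^{d_{0,p}-d_s(GSC)}=a^{4\pi i p/\log R}$ is bounded. Hence only the $k=0$ layer survives, leaving a single sum over $p\in\mathbb{Z}$.

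What remains is to rewrite the surviving $k=0$ sum in the stated form. I would use the gamma identity $\Gamma(z)/\Gamma(z/2) = \Gamma((z+1)/2)/(2^{1-z}\sqrt{\pi})$ with $z=d_{0,p}+1$, which turns $\Gamma(d_{0,p}+1)/\Gamma((d_{0,p}+1)/2)$ into $2^{d_{0,p}}\pi^{-1/2}\Gamma(\frac{1}{2}+\frac{d_s(\Omega)}{2}+\frac{2\pi i p}{\log R})$ once one inserts $d_{0,p}=d_s(GSC)+4\pi i p/\log R$ and $d_s(\Omega)=d_s(GSC)+1$. Grouping the powers $a^{d_{0,p}-d_s(GSC)}$, $\beta^{-(d_{0,p}+2)}$ and $2^{d_{0,p}}$ produces $\beta^{-(d_s(\Omega)+1)}(2a/\beta)^{4\pi i p/\log R} = \beta^{-(d_s(\Omega)+1)}e^{4\pi i p x/\log R}$ with $x=-\log(\beta/2a)$; the leftover numerical prefactor simplifies through $(4\pi)^{d_s(GSC)/2}=2^{d_s(GSC)}\pi^{d_s(GSC)/2}$ to $\pi^{-(d_s(\Omega)+1)/2}$; and $\zeta(d_{0,p}+2)=\zeta(1+d_s(\Omega)+4\pi i p/\log R)$. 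This is exactly $\beta^{-(d_s(\Omega)+1)}H_3(-\log(\beta/2a))$. The $\frac{1}{2}\log R$-periodicity of $H_3$ is immediate since $e^{4\pi i p x/\log R}$ has period $\frac{1}{2}\log R$ in $x$ for every $p$, and the Fourier series converges absolutely because $|\hat{G}_{0,p}|$ is uniformly bounded, $|\Gamma(s+it)|$ decays like $e^{-\pi|t|/2}$ by Stirling's formula, and $|\zeta(\sigma+it)|\le\zeta(\sigma)<\infty$ for $\sigma=1+d_s(\Omega)>1$ (recall $d_s(\Omega)>2$).

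The main obstacle is not this algebra but the error control, i.e.\ showing that after differentiating in $b$ and dividing by $\beta V_s(a(GSC))\asymp a^{d_s(GSC)}$, everything not written down is genuinely $o(1)$ as $a\to\infty$ for fixed $\beta$ and $b$: the exponentially small heat-kernel remainder in $K_{\Omega_{a,b}}(t)$, the $k=1,2$ boundary-layer terms, and the fact that the $b$-independent part is annihilated exactly (not merely suppressed) by $\partial_b$. This is done by repeating, now for the $b$-derivative of the partition function, the split-domain Mellin estimate used in the proof of the preceding theorem --- splitting the $t$-integral at $t=a^2T$ and tracking how each polynomial, polynomial-times-one-exponential, and polynomial-times-two-exponential type interacts with the $a^{-d_s(GSC)}$ normalization --- and that bookkeeping is where the real effort lies.
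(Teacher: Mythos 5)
Your proposal is correct and follows essentially the same route as the paper, which simply differentiates the thermal partition function of the preceding lemma in $b$, normalizes by $\beta V_s(a(GSC))$, discards the $k=1,2$ layers as $a\to\infty$, and applies the duplication identity $\Gamma(z)/\Gamma(z/2)=\Gamma((z+1)/2)/(2^{1-z}\sqrt{\pi})$ exactly as in the blackbody computation of Proposition~\ref{prop:BBEnergy}; the paper omits these details, and your closing remarks on error control supply the bookkeeping it leaves implicit.
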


Notice that the dominant term in the positive-temperature Casimir pressure is independent of the separation of plates $b$: like in blackbody radiation, the scaling is with respect to temperature.

This concludes our discussions of various results regarding ideal massive or massless Bose gas in Sierpinski carpets. We stress once again the power of the heat kernel and the zeta function technique in quantum statistical mechanics. Not only does it apply to very general spaces, it also give unambiguous answers to major thermodynamic questions, such as the emergence of the Casimir energy.

\section{Towards interacting Bose gas on Sierpinski carpet graphs}\label{sec:graph}

The above discussion of the ideal massive Bose gas is just the prelude to a full understanding of quantum many-body physics on fractals. This is because some of the most salient features of BEC (in $\mathbb{R}^d$ or $\mathbb{Z}^d$), such as superfluidity and soliton excitations, cannot be explained by the ideal Bose gas. So it behooves us to take the next step and consider models of interacting Bose gas on Sierpinski carpets. Mathematically speaking, proving BEC in an interacting Bose system is a challenging problem (see~\cite{BoseMathBook} for a careful exposition). In this section we introduce a particular interacting system which has been studied extensively on $\mathbb{Z}^d$, then focus on the so-called hardcore Bose gas, which we hope to study in more depth on Sierpinski carpet graphs.

\subsection{The Bose-Hubbard model} \label{subsec:BoseHubbard}

In what follows we have in mind a connected graph $G=(V,E)$ with bounded degree. To describe an ideal Bose gas on $G$, we use the second-quantized Hamiltonian in the grand canonical ensemble $d\Gamma(-\Delta -\mu {\bf 1})$, where $\Delta$ is the combinatorial graph Laplacian on $G$. To express this Hamiltonian, let $a_x^\dagger$ and $a_x$ be the boson creation and annihilation operators at $x \in V$, which satisfy the canonical commutation relations (CCR): $a_x a_y^\dagger - a_y^\dagger a_x=\delta_{xy}$ and $a_x a_y - a_y a_x=0$ for all $x,y \in V$. We also let $n_x= a_x^\dagger a_x$, which has the interpretation as the particle number at $x$. Then one may verify that
$$d\Gamma(-\Delta-\mu {\bf 1}) = -\frac{1}{2}\sum_{\langle xy \rangle\in E} (a_x^\dagger a_y + a_y^\dagger a_x) -\sum_{x\in V}(\mu - \deg(x)) n_x. $$

A slight variant of this Hamlitonian uses the adjacency operator in place of the graph Laplacian, \emph{i.e.,}
\begin{equation}
{\bf H}=-\frac{1}{2}\sum_{\langle xy \rangle\in E} (a_x^\dagger a_y + a_y^\dagger a_x) -\mu \sum_{x\in V} n_x .
\label{eq:adjacencyH}
\end{equation}
In a realistic experimental setup, multiple atoms tend not to occupy the same site $x \in V$ due to forbidden overlap of atomic orbitals. This can be modelled by introducing an on-site repulsion term to~(\ref{eq:adjacencyH}). The resultant Hamiltonian is that of the Bose-Hubbard model,
\begin{equation}
{\bf H}_{BH}=-\frac{J}{2} \sum_{\langle xy \rangle \in E} (a_x^\dagger  a_y+a_y^\dagger a_x) -\mu \sum_{x\in V} n_x + U \sum_{x \in V} n_x (n_x -1),
\label{eq:BoseHubbard}
\end{equation}
with parameters $\mu \in \mathbb{R}$ and $J,U>0$. The Hamiltonian~(\ref{eq:BoseHubbard}) is widely used by condensed matter theorists to model cold atomic systems in optical lattices. In the works of BEC on inhomogeneous graphs~\cite{BECMatsui, FidaleoGuidoIsola, Fidaleo}, the authors investigated a special case of~(\ref{eq:BoseHubbard}) with $\mu=0$ and $U=0$.

\subsection{Hardcore Bose gas and the XY model}

Let us focus on another special case of~(\ref{eq:BoseHubbard}) where the on-site repulsion $U=\infty$. In this scenario $n_x$ for any $x \in V$ can only take the value $0$ or $1$, that is, particles interact as if they were hard cores. The Hilbert space of this hardcore Bose gas is then $(\mathbb{C}^2)^{\otimes G}$, which is spanned by the tensor product of the basis vectors $\left\{\left(\begin{array}{c}1 \\ 0\end{array}\right)_x, \left(\begin{array}{c}0\\ 1\end{array}\right)_x\right\}$, representing respectively the occupation or non-occupation of vertex $x$. Naturally, the creation and annihilation operators at $x$ read
$$ a_x^\dagger = \left(\begin{array}{cc} 0 & 1 \\ 0 & 0 \end{array}\right), \quad a_z = \left(\begin{array}{cc} 0 & 0 \\ 1 & 0 \end{array}\right).$$
Notice that they do not satisfy the CCR anymore: $a_x a_y^\dagger - a_y^\dagger a_x=(1-2n_x) \delta_{xy}$.

We can express the Hamiltonian of the hardcore Bose gas as
\begin{equation}
{\bf H}_{hc} = -\frac{J}{2}\sum_{\langle xy \rangle\in E} (a_x^\dagger a_y +a_y^\dagger a_x)- \mu\sum_{x\in V}n_x.
\label{eq:hchamiltonian}
\end{equation}
Introduce the Pauli matrices
$$S^1 = \frac{1}{2}\left(\begin{array}{cc} 0 & 1 \\ 1 & 0 \end{array}\right),\quad S^2=\frac{1}{2}\left(\begin{array}{cc} 0 & -i \\ i & 0 \end{array}\right),\quad S^3 =\frac{1}{2}\left(\begin{array}{cc} 1 & 0 \\ 0 & -1\end{array}\right).$$
Then up to an additive constant~(\ref{eq:hchamiltonian}) becomes
\begin{equation}
{\bf H}_{hc} =  -J\sum_{\langle xy \rangle \in E} \left(S_x^1 S_y^1 + S_x^2 S_y^2\right) - \mu \sum_{x\in V}S_x^3,
\label{eq:hcXY}
\end{equation}
which is the Hamiltonian of the \textbf{quantum (ferromagnetic) XY model}, with $\mu$ playing the role of the external magnetic field. Therefore showing that hardcore Bose gas condenses on GSC graphs amounts to showing that the corresponding quantum XY model has a phase transition.

In finding a point of attack to this problem, we review results which are known to hold on arbitrary graphs. Given any infinite graph $G$, \cite{Cassi} showed that if the simple random walk on $G$ is recurrent, then for any $J>0$, the quantum XY model at zero field ($\mu=0$) on $G$ does not exhibit spontaneous magnetization. More precisely, if $\{G_n\}_n = \{(V_n, E_n)\}_n$ exhausts the recurrent graph $G$, and $\omega_{n,\beta,\mu}$ is the Gibbs state on $G_n$ at $(\beta,\mu)$, then for all $\beta>0$,
$$
\lim_{\mu\to 0}\lim_{n\to\infty} \frac{1}{|V_n|} \omega_{n,\beta,\mu}\left(\sum_{x\in V_n}S^1_x\right) =0.
$$
This result generalizes the Mermin-Wagner theorem~\cite{MerminWagner} from translationally invariant lattices to arbitrary graphs. Note that $\mu=0$ corresponds to bosons at half-filling (equal number of occupied and unoccupied vertices on average).

Specializing to the situation on GSC graphs, we deduce that on any GSC graph with $d_s\leq 2$, the quantum XY model at zero field does not undergo a phase transition. Under this setting, the hardcore Bose gas at half-filling does not Bose condense. Conversely, if condensation occurs on some GSC graph, then the graph is transient.

The harder problem is to prove sufficient conditions for which \emph{spontaneous magnetization},
$$
\lim_{\mu\to 0}\liminf_{n\to\infty} \frac{1}{|V_n|} \omega_{n,\beta,\mu}\left(\left|\sum_{x\in V_n}S^1_x\right|\right) > 0,
$$
or \emph{long-range order},
$$
\liminf_{n\to\infty} \frac{1}{|V_n|^2} \omega_{n,\beta,0}\left(\sum_{\langle xy\rangle \in E_n}\left(S_x^1 S_y^1 + S_x^2 S_y^2\right)\right) >0,
$$
occurs for large enough $\beta$. Either condition would imply the non-uniqueness of KMS states at low temperatures~\cite{BratelliRobinson}*{Section 6.2.6}.
\begin{problem}
Given any GSC graph $G$, prove or disprove that the transience of simple random walk on $G$ is a sufficient condition for the quantum XY model on $G$ to exhibit spontaneous magnetization or long-range order at positive temperatures. In other words, determine whether the hardcore Bose gas at half-filling condenses on any GSC graph with $d_s>2$.
\label{prob:QuantumXY}
\end{problem}

Recall that on $\mathbb{Z}^d$ this has been answered in the positive by~\cite{DysonLiebSimon}. However the use of Fourier transform was essential in their proof, which made the generalization to arbitrary graphs difficult. 

Though outside the scope of quantum mechanics, we also mention the abelian version of the XY model on general graphs, which is better understood. It has the same Hamiltonian~(\ref{eq:hcXY}), but with ${\bf S}_x=(S_x^1, S_x^2, S_x^3)$ taking values in the unit 2-sphere, \emph{i.e.,} $\sum_{j=1}^3 (S_x^j)^2=1$. It is known that there is no spontaneous rotational symmetry breaking at any temperature for recurrent graphs~\cite{Cassi,MerklWagner}. On the flip side, it appears that transience of simple random walk is a necessary, but not sufficient, condition for phase transition in the classical XY model. Y. Peres conjectured that the sufficient condition is that there exists a $p\in(0,1)$ such that upon performing an independent bond percolation with probability $p$, the resultant infinite cluster supports transient simple random walk~\cite{PemantleSteif}*{Conjecture 1.9}.

For GSC graphs, which possess strong symmetry and connectivity, we suspect that transience alone is a sufficient condition, though we are not aware of any previous work on this matter. The key may be to prove a version of spin-wave condensation in the spirit of~\cite{FrohlichSimonSpencer}.

\begin{problem}
Given any GSC graph $G$, prove or disprove that the transience of simple random walk on $G$ is a sufficient condition for the classical XY model on $G$ to exhibit spontaneous magnetization or long-range order at positive temperatures. 
\label{prob:ClassicalXY}
\end{problem}

\begin{acknowledgements}

I am grateful to Naotaka Kajino, Benjamin Steinhurst, Robert Strichartz, and Alexander Teplyaev for many useful discussions leading up to the writing of this work. Thanks also to Gerald Dunne for proposing the Sierpinski carpet waveguide example discussed in Section~\ref{sec:Casimir}; Michel Lapidus for suggesting the use of Ces\`{a}ro averaging in thermodynamical quantities; Daniele Guido for explaining his work on Bose-Einstein condensation on inhomogeneous graphs; and Matthew Begu\'{e} for providing the raw spectral data of the Kusuoka-Zhou Laplacian, from which Fig.~3\subref{subfig:KZSC} is generated. Portions of this work were presented at the "Waves and Quantum Fields on Fractals" workshop at Technion, and at the 6th Prague Summer School on Mathematical Statistical Physics. I wish to thank Eric Akkermans and Marek Biskup for organizing the respective conference and giving me the opportunity to speak.

During the preparation of the manuscript, the author was partially supported by Robert Strichartz's NSF grant (DMS 0652440), as well as a graduate assistantship from the 2011 Research Experience for Undergraduates (REU) program at the Department of Mathematics, Cornell University.

\end{acknowledgements}

% \bib, bibdiv, biblist are defined by the amsrefs package.
\begin{bibdiv}
\begin{biblist}

\bib{FractalThermo}{article}{
      author={Akkermans, Eric},
      author={Dunne, Gerald~V.},
      author={Teplyaev, Alexander},
       title={Thermodynamics of photons on fractals},
        date={2010},
     journal={prl},
      volume={105},
       pages={230407},
}

\bib{BB89}{article}{
      author={Barlow, Martin~T.},
      author={Bass, Richard~F.},
       title={{The construction of Brownian motion on the Sierpinski carpet}},
organization={Elsevier},
        date={1989},
     journal={aihpps},
      volume={25},
       pages={225\ndash 257},
}

\bib{BB90localtimes}{article}{
      author={Barlow, Martin~T.},
      author={Bass, Richard~F.},
       title={{Local times for Brownian motion on the Sierpinski carpet}},
        date={1990},
        ISSN={0178-8051},
     journal={ptrf},
      volume={85},
       pages={91\ndash 104},
}

\bib{BB92}{article}{
      author={Barlow, Martin~T.},
      author={Bass, Richard~F.},
       title={{Transition densities for Brownian motion on the Sierpinski
  carpet}},
        date={1992},
        ISSN={0178-8051},
     journal={ptrf},
      volume={91},
       pages={307\ndash 330},
}

\bib{BB99}{article}{
      author={Barlow, Martin~T.},
      author={Bass, Richard~F.},
       title={Brownian motion and harmonic analysis on {S}ierpinski carpets},
        date={1999},
     journal={cjm},
      volume={51},
       pages={673\ndash 744},
}

\bib{BBKT}{article}{
      author={Barlow, Martin~T.},
      author={Bass, Richard~F.},
      author={Kumagai, Takashi},
      author={Teplyaev, Alexander},
       title={Uniqueness of {B}rownian motion on {S}ierpinski carpets},
        date={2010},
     journal={jems},
      volume={12},
       pages={655\ndash 701},
}

\bib{OuterApprox}{article}{
      author={Berry, Tyrus},
      author={Heilman, Steven~M.},
      author={Strichartz, Robert~S.},
       title={{Outer Approximation of the Spectrum of a Fractal Laplacian}},
        date={2009},
     journal={expmath},
      volume={18},
       pages={449\ndash 480},
}

\bib{BratelliRobinson}{book}{
      author={Bratteli, Ola},
      author={Robinson, Derek~W.},
       title={Operator algebras and quantum statistical mechanics: Equilibrium
  states. models in quantum statistical mechanics},
   publisher={Springer},
        date={1997},
      volume={2},
}

\bib{BurioniCassiVezzani01}{article}{
      author={Burioni, R.},
      author={Cassi, D.},
      author={Rasetti, M.},
      author={Sodano, P.},
      author={Vezzani, A.},
       title={{Bose-Einstein condensation on inhomogeneous complex networks}},
        date={2001},
     journal={J. Phys. B: At. Mol. Opt. Phys.},
      volume={34},
       pages={4697},
}

\bib{BurioniCassiVezzani02}{article}{
      author={Burioni, R.},
      author={Cassi, D.},
      author={Vezzani, A.},
       title={{Topology, hidden spectra and Bose-Einstein condensation on
  low-dimensional complex networks}},
        date={2002},
     journal={J. Phys. A: Math. Gen.},
      volume={35},
       pages={1245},
}

\bib{Cassi}{article}{
      author={Cassi, Davide},
       title={{Phase transitions and random walks on graphs: a generalization
  of the Mermin-Wagner theorem to disordered lattices, fractals, and other
  discrete structures}},
        date={1992},
     journal={prl},
      volume={68},
       pages={3631\ndash 3634},
}

\bib{DysonLiebSimon}{article}{
      author={Dyson, F.J.},
      author={Lieb, E.H.},
      author={Simon, B.},
       title={{Phase transitions in quantum spin systems with isotropic and
  nonisotropic interactions}},
        date={1978},
     journal={jsp},
      volume={18},
       pages={335\ndash 383},
}

\bib{Fidaleo}{article}{
      author={Fidaleo, Francesco},
       title={{Harmonic analysis on perturbed Cayley Trees}},
        date={2011},
     journal={jfa},
      volume={261},
       pages={604\ndash 634},
}

\bib{FidaleoGuidoIsola}{article}{
      author={Fidaleo, Francesco},
      author={Guido, Daniele},
      author={Isola, Tommaso},
       title={{Bose Einstein condensation on inhomogeneous amenable graphs}},
        date={2011},
     journal={ida},
      volume={14},
       pages={149\ndash 197},
}

\bib{FrohlichSimonSpencer}{article}{
      author={Fr{\"o}hlich, J.},
      author={Simon, B.},
      author={Spencer, T.},
       title={{Infrared bounds, phase transitions and continuous symmetry
  breaking}},
        date={1976},
     journal={cmp},
      volume={50},
       pages={79\ndash 95},
}

\bib{FOT}{book}{
      author={Fukushima, M.},
      author={Oshima, Y.},
      author={Takeda, M.},
       title={{Dirichlet forms and symmetric Markov processes}},
     edition={2},
   publisher={Walter de Gruyter},
        date={2010},
        ISBN={3110218089},
}

\bib{HamblySpec}{article}{
      author={Hambly, Ben~M.},
       title={{Asymptotics for functions associated with heat flow on the
  {S}ierpinski carpet}},
        date={2011},
     journal={cjm},
      volume={63},
       pages={153\ndash 180},
}

\bib{KajinoSpec}{article}{
      author={Kajino, Naotaka},
       title={Spectral asymptotics for {L}aplacians on self-similar sets},
        date={2010},
     journal={jfa},
      volume={258},
       pages={1310\ndash 1360},
}

\bib{KigamiBook}{book}{
      author={Kigami, Jun},
       title={{A}nalysis on {F}ractals},
      series={Cambridge Tracts in Mathematics},
   publisher={Cambridge University Press},
        date={2001},
}

\bib{kirsten2010}{proceedings}{
      author={Kirsten, Klaus},
      editor={Kirsten, K.},
      editor={Williams, F.L.},
       title={Basic zeta functions and some applications in physics},
      series={{A Window into Zeta and Modular Physics}},
   publisher={Cambridge University Press},
        date={2010},
}

\bib{KozmaNachmias}{article}{
      author={Kozma, G.},
      author={Nachmias, A.},
       title={{The Alexander-Orbach conjecture holds in high dimensions}},
        date={2009},
     journal={im},
      volume={178},
       pages={635\ndash 654},
}

\bib{KusuokaZhou}{article}{
      author={Kusuoka, Shigeo},
      author={Zhou, Xian~Yin},
       title={{Dirichlet forms on fractals: Poincar\'{e} constant and
  resistance}},
        date={1992},
     journal={ptrf},
      volume={93},
       pages={169\ndash 196},
}

\bib{LapidusBook}{book}{
      author={Lapidus, Michel~L.},
      author={Van~Frankenhuijsen, Machiel},
       title={{Fractal Geometry, Complex Dimensions and Zeta functions:
  Geometry and spectra of fractal strings}},
   publisher={Springer Verlag},
        date={2006},
}

\bib{BoseMathBook}{book}{
      author={Lieb, Elliott~H.},
      author={Seiringer, Robert},
      author={Solovej, Jan~Philip},
      author={Yngvason, Jakob},
       title={{The mathematics of the Bose gas and its condensation}},
      series={Oberwolfach seminars},
   publisher={Birkh{\"a}user},
        date={2005},
        ISBN={9783764373368},
}

\bib{BECMatsui}{article}{
      author={Matsui, Taku},
       title={{BEC of free bosons on networks}},
        date={2006},
     journal={ida},
      volume={9},
       pages={1\ndash 26},
}

\bib{MerklWagner}{article}{
      author={Merkl, F.},
      author={Wagner, H.},
       title={{Recurrent random walks and the absence of continuous symmetry
  breaking on graphs}},
        date={1994},
     journal={jsp},
      volume={75},
       pages={153\ndash 165},
}

\bib{MerminWagner}{article}{
      author={Mermin, N.D.},
      author={Wagner, H.},
       title={{Absence of ferromagnetism or antiferromagnetism in one-or
  two-dimensional isotropic Heisenberg models}},
        date={1966},
     journal={prl},
      volume={17},
       pages={1133\ndash 1136},
}

\bib{PemantleSteif}{article}{
      author={Pemantle, R.},
      author={Steif, J.E.},
       title={{Robust phase transitions for Heisenberg and other models on
  general trees}},
        date={1999},
     journal={aop},
      volume={27},
       pages={876\ndash 912},
}

\bib{StrichartzBook}{book}{
      author={Strichartz, Robert~S.},
       title={{D}ifferential {E}quations on {F}ractals: {A} {T}utorial},
   publisher={Princeton University Press},
        date={2006},
}

\bib{Tsomokos}{article}{
      author={Tsomokos, D.I.},
      author={Ashhab, S.},
      author={Nori, F.},
       title={Using superconducting qubit circuits to engineer exotic lattice
  systems},
        date={2010},
     journal={pra},
      volume={82},
       pages={052311},
}

\bib{Weyl1912}{article}{
      author={Weyl, H.},
       title={Das asymptotische verteilungsgesetz der eigenwerte linearer
  partieller differentialgleichungen (mit einer anwendung auf die theorie der
  hohlraumstrahlung)},
        date={1912},
     journal={ma},
      volume={71},
       pages={441\ndash 479},
}

\end{biblist}
\end{bibdiv}

\end{document}